\documentclass[11pt]{article}

\usepackage[T1]{fontenc}

\usepackage{graphicx}
\usepackage{amsmath, amssymb, amsthm}
\usepackage{thmtools}

\usepackage{parskip}
\usepackage{amsmath, amssymb, amsthm}

\usepackage[backref=page,linktocpage=true,colorlinks=true,linkcolor=blue,citecolor=blue,urlcolor=blue]{hyperref}

\usepackage{cleveref}
\usepackage{comment}
\usepackage{xspace}

\usepackage{algorithm}
\usepackage[noend]{algpseudocode}

\usepackage{enumitem}  
\usepackage[margin=1in]{geometry}  

\usepackage{tikz}
\usetikzlibrary{arrows.meta,decorations.pathreplacing}
\usetikzlibrary{calc}
\usepackage[font=scriptsize,justification=centering]{caption}

\usepackage[most]{tcolorbox}

\newtheorem{theorem}{Theorem}[section]
\newtheorem{lemma}[theorem]{Lemma}

\newtheorem{claim}[theorem]{Claim}

\newtheorem{fact}[theorem]{Fact}
\theoremstyle{definition}
\newtheorem{definition}[theorem]{Definition}
\newtheorem{example}[theorem]{Example}

\title{Online Graph Balancing and the Power of Two Choices}
\author{
Nikhil Bansal\thanks{University of Michigan. \texttt{bansaln@umich.edu}.
Supported in part by NSF awards CCF-2327011 and CCF-2504995.} 
\and
Milind Prabhu\thanks{University of Michigan. \texttt{milindpr@umich.edu}. Supported by NSF award CCF-2327011.} 
\and
Sahil Singla\thanks{School of Computer Science, Georgia Tech. \texttt{ssingla@gatech.edu}. Supported in part by NSF awards CCF-2327010 and CCF-2440113.}
\and
Siddharth M. Sundaram\thanks{School of Computer Science, Georgia Tech. \texttt{ssundaram38@gatech.edu}. Supported in part by NSF awards CCF-2327010 and CCF-2440113.}
}

\usepackage{tikz}
\usetikzlibrary{arrows.meta,positioning,calc,decorations.pathreplacing}

\newcommand{\Poi}{\mathtt{Poi}}

\newcommand{\poly}{\ensuremath{\mathsf{poly}}}
\newcommand{\E}{\mathbb{E}}

\renewcommand{\r}{\rangle}

\newcommand{\ignore}[1]{{}}

\newcommand{\opt}{\mathrm{OPT}}

\newcommand{\ggreedy}{G_{\mathrm{greedy}}}

\newcommand{\maxdensity}{\rho^*}

\newcommand{\avgdegree}{d^\mathrm{av}}

\newcommand{\skewness}{\mathrm{Skew}}
\newcommand{\sk}{S}

\newcommand{\logskew}{log-skewness}
\newcommand{\LogSkew}{Log-Skewness}

\newcommand{\skewbir}{skew-biregular}
\newcommand{\SkewBir}{Skew-Biregular}

\newcommand{\numclasses}{h}

\begin{document}
\date{}

\maketitle

\begin{abstract}
In the classic \emph{online graph balancing} problem, edges arrive sequentially and must be oriented immediately upon arrival, to minimize the maximum in-degree. For adversarial arrivals, the natural greedy algorithm is \(O(\log n)\)-competitive, and this bound is the best possible for any algorithm, even with randomization. We study this problem in the i.i.d.\ model where a base graph \(G\) is known in advance and each arrival is an independent uniformly random edge of \(G\). This model generalizes the standard \emph{power-of-two choices} setting, corresponding to \(G=K_n\), where the greedy algorithm achieves an \(O(\log\!\log n)\) guarantee. We ask whether a similar bound is possible for arbitrary base graphs.

\medskip

While the greedy algorithm is optimal for adversarial arrivals and also for i.i.d.\ arrivals from regular base graphs (such as $G=K_n$), we show that it can perform poorly in general: there exist mildly irregular graphs $G$ for which greedy is $\widetilde{\Omega}(\log n)$-competitive under  i.i.d.\ arrivals. In sharp contrast, our main result is an $O(\log\!\log n)$-competitive online algorithm for every base graph~$G$; this is optimal up to constant factors, since an $\Omega(\log\!\log n)$ lower bound already holds even for the complete graph $G=K_n$. The key new idea is a notion of \emph{\logskew{}} for graphs, which captures the irregular substructures in $G$ that force the offline optimum to  be large. Moreover, we show that any base graph can be decomposed into ``\skewbir{}'' pieces at only $O(\log\!\log n)$ scales of \logskew{}, and use this to design a decomposition-based variant of greedy that is $O(\log\!\log n)$-competitive.
\end{abstract}

\bigskip

\setcounter{tocdepth}{1}


\newpage

\allowdisplaybreaks

\section{Introduction}

In the \emph{online graph balancing} problem, edges of an unknown graph on $n$ vertices arrive sequentially. Upon each arrival, the algorithm must immediately orient the edge toward one of its endpoints so as to minimize the \emph{maximum vertex load}, i.e., the largest in-degree of any vertex. 
Motivated by applications in online load balancing and scheduling, this problem has been studied extensively since the 1990s. 
When the entire graph is known in advance, the minimum achievable load equals the \emph{maximum density}---the ratio of edges to vertices---over all 
subgraphs~\cite{hakimi1965degrees}. 
In contrast, when edges arrive online in adversarial order, the classic result of~\cite{ABK-FOCS92}  shows that the \emph{greedy algorithm}, which always orients an edge toward the lesser-loaded endpoint, achieves an $O(\log n)$-competitive ratio, i.e., its maximum load is within an $O(\log n)$ factor of the optimum. 
This logarithmic bound is the best possible for \emph{any} algorithm, possibly randomized, under \emph{adversarial} arrivals.
Over the past three decades, $O(\log n)$-competitive algorithms have also been obtained for several generalizations of online graph balancing~\cite{aspnes1997line,Caragiannis-SODA08,KMS-SODA23,KMPS-FOCS25}.

In many online applications, however, inputs are not adversarial, but instead arise from an underlying stochastic process. 
A standard way to move beyond worst-case analysis in such settings is the \emph{i.i.d.\ model}, where arrivals are drawn independently and identically from a distribution. Again, the benchmark here is the expected value of the offline optimum, or equivalently, the expected maximum density of the \emph{sampled graph}. In general, study of the i.i.d.\ model has led to significantly better competitive ratios for both maximization (e.g., online matchings, combinatorial auctions) and minimization (e.g., facility location, Steiner tree, metric matching) problems where worst-case guarantees are overly pessimistic (see the book~\cite{Roughgarden-BeyondBook}). 

This motivates the following question:

\begin{quote}
\emph{What is the minimum achievable competitive ratio for online graph balancing when edges are sampled i.i.d.\,(say, uniformly at random) from some adversarially chosen base graph?}  
\end{quote}

Despite being a natural question,  progress on this question has been difficult because it vastly generalizes the celebrated ``Power-of-Two-Choices'' model.

\medskip  \textbf{Power of Two Choices.} 
Arguably the ``simplest'' case of the above question is where the arrivals are i.i.d.\ edges of the complete graph $K_n$. This case is precisely the \emph{Power-of-Two-Choices} model, first studied by Azar, Broder, Karlin, and Upfal~\cite{azar1994balanced}. They showed that when each edge is oriented {\em greedily}, the maximum load after $n$ arrivals is only $O(\log\!\log n)$, and moreover, 
no online algorithm can achieve  $o(\log\!\log n)$ maximum load. A simple computation also shows that the expected offline optimum is $O(1)$ (in fact exactly $1$ with probability $1-o(1)$). Together, this implies that $o(\log\!\log n)$-competitive ratios are information-theoretically impossible even under i.i.d.\ arrivals.

In a major advance, and motivated by various applications, Kenthapadi and Panigrahy~\cite{kenthapadi2006balanced}
studied the power-of-two-choices for general base graphs $G$ beyond $K_n$. They called this ``Graphical Allocation'', which is identical to the problem we consider here, and solved it completely for all {\em regular} graphs.
In particular, they showed that for every $d$-regular graph $G$ on $n$ vertices, with degree parameterized as $d=n^{\epsilon}$, for $n$ i.i.d.\,arrivals, the optimum offline load is about $1/\epsilon$ (ignoring lower order terms) and the greedy algorithm achieves maximum load $O(1/\epsilon + \log\!\log n)$, implying an $O(\log\!\log n)$ competitive ratio. This competitive ratio also extends to an arbitrary number of arrivals.
To show this, \cite{kenthapadi2006balanced} used elegant witness tree arguments together with several clever ideas required to handle various dependencies that arise when considering arbitrary regular graphs.

\medskip  \textbf{General (Irregular) Graphs.} 
It is quite remarkable above that for any regular graph $G$, both the offline optimum and the greedy load only depends on its degree $d$ --- irrespective of its specific structure or properties like its connectedness and expansion. 
For general \emph{irregular} graphs, however, \cite{kenthapadi2006balanced} already observed that the situation is far more complex. In particular, they show that when $G$ is the complete 
bipartite graph $K_{n,\sqrt{n}}$, the standard arguments based on density and degree only give $\Omega(1)$ lower bounds, but the offline optimum is $\Omega(\log n/\log\!\log n)$ for a rather subtle reason. We will elaborate more on this later, and understanding this phenomenon will be one of our key contributions.

Indeed, even though the power-of-two-choices paradigm has been studied extensively since \cite{azar1994balanced}, and led to several remarkable techniques and results (see Section \ref{sec:prev-work}), it remains poorly understood for general irregular graphs. To the best of our knowledge, all existing results in the area crucially rely on \emph{regularity}, and often just work with the complete graph $K_n$.

\subsection{Our Results and High-Level Overview}\label{sec:mainResults}

In this work, we fully resolve the question for arbitrary graphs.
Our main result is the following.
\begin{theorem}\label{thm: polyloglog-comp}
For any base graph $G$ on $n$ vertices, there is an $O(\log\!\log n)$-competitive algorithm for online graph balancing, for any arbitrary number of i.i.d.\,arrivals sampled uniformly from $E(G)$.
\end{theorem}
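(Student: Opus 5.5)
The plan has three parts: a lower bound on the offline optimum via a notion of log-skewness, a decomposition of $G$ into skew-biregular pieces at $O(\log\!\log n)$ scales, and a decomposition-based greedy algorithm analyzed piece by piece. Let $m$ be the number of arrivals; by rescaling we treat arrivals as $\Poi(m)$ independent uniform edges of $G$. The target is $\E[\text{ALG}] = O(\log\!\log n)\cdot \E[\opt]$.

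\textbf{Lower bound on the optimum.} For a bipartite-like substructure $(A,B)$ of $G$ in which every $a\in A$ has about $d_A$ neighbours in $B$ and every $b\in B$ has about $d_B$ neighbours in $A$, the expected number of sampled edges at $a$ is $\lambda_a \approx m d_A/|E(G)|$. If $|A|\gg|B|$, then $\Poi(\lambda_a)$ tails over $|A|$ vertices force some $a$ to receive about $k$ sampled edges, where $k$ solves $|A|\,\Pr[\Poi(\lambda_a)\ge k]\approx 1$. When $|B|$ is small, those $k$ edges must be absorbed either by $a$ or by few vertices of $B$. This yields a density lower bound of roughly $\log(|A|/|B|)/\log(\cdot/\lambda)$, which is the $K_{n,\sqrt n}$ phenomenon. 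I would define the log-skewness of such a piece as this quantity and prove $\E[\opt]\gtrsim \max(\text{density bound},\ \text{log-skewness bound})$ over all pieces. This uses Hakimi's characterization: the optimum equals the maximum density of the sampled graph, so it suffices to exhibit a dense sampled subgraph with constant probability.

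\textbf{Decomposition.} Next, partition $E(G)$ into classes $E_1,\dots,E_{\numclasses}$ with $\numclasses=O(\log\!\log n)$. Bucket vertices by degree in powers of two, and pair buckets by the ratio of degrees. Since log-skewness is a logarithm of ratios, grouping by doubly exponential scales $2^{2^i}$ of skewness should give $O(\log\!\log n)$ scales. Within each class the induced piece should be approximately skew-biregular, meaning degrees within each side agree up to constant factors. I expect this step to need an iterative peeling argument (remove vertices of high degree relative to the current class and recurse), with a potential argument showing each edge is assigned exactly once.

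\textbf{Algorithm and analysis.} The algorithm runs an independent greedy instance on each class and orients each arriving edge using only the greedy instance for its class. Total load is at most the sum over classes, so it suffices to show that greedy on a skew-biregular piece achieves load $O(\E[\opt]+1)$, or $O(\log\!\log n)$ additive for one piece. Summing over $\numclasses$ classes, with the additive terms shared carefully, gives $O(\log\!\log n)\cdot\E[\opt]$. For the per-piece bound I would adapt the Kenthapadi--Panigrahy witness-tree and layered-induction analysis from regular graphs to biregular ones, tracking the fraction of vertices with load at least $i$ on each side separately. Skew enters the base case as the optimum-sized additive term, after which the usual squaring recursion takes $\log\!\log n$ more levels. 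This per-piece analysis is the main obstacle: the two sides grow at different rates and interact through $d_A,d_B$. I would first try a coupled two-sided layered induction, with the base level set at the log-skewness threshold.

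To reach a constant overall competitive factor rather than a sum of $\numclasses$ additive $\log\!\log n$ terms, I would argue that only classes with nonnegligible mass contribute beyond $O(1)$. Alternatively, I would accept the product bound $\numclasses\cdot O(\E[\opt])$ when $\E[\opt]\ge 1$. The arbitrary-$m$ claim follows since the decomposition depends only on $m/|E(G)|$; the algorithm recomputes it on a doubling schedule, at a constant-factor cost.
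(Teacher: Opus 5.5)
Your roadmap matches the paper's: a log-skewness lower bound, $O(\log\!\log n)$ doubly exponential scales, and Greedy on the pieces. But the algorithmic step, as you set it up, does not reach $O(\log\!\log n)$.

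\textbf{The algorithm.} An independent Greedy per class pays Greedy's intrinsic $\Theta(\log\!\log n)$ in every class. A single regular piece with $\opt=O(1)$ (e.g.\ $K_n$) already gives Greedy load $\log_2\log n+O(1)$. So your per-piece target $O(\E[\opt]+1)$ is false, and the fallback $\numclasses\cdot O(\E[\opt])$ rests on it. Summing the additive terms gives only $O((\log\!\log n)^2)$, which is exactly what the paper says this route yields.

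The heuristic that only classes of non-negligible mass matter also fails. A nearly regular class at scale $f_i=2^{2^i}$ is two-choice with about $n/f_i$ balls, whose Greedy maximum load is still about $\log_2\log n-i$. Nothing in your argument stops one vertex from collecting such loads from many classes. Your per-piece layered induction is also left open; in the paper even the single-piece bound comes from discarding the first $\Theta(\skewness(G))$ arrivals at each left vertex, so that the witness-tree branching becomes subcritical.

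The missing idea is Threshold-Greedy. The first $\alpha_i=\Theta\bigl((\maxdensity/\avgdegree+\skewness(G))(\log\!\log n/2^i+1)\bigr)$ class-$i$ arrivals at each left vertex go to that vertex. This costs $\sum_i\alpha_i=O(\log\!\log n\cdot\opt)$, since $\opt=\Omega(\maxdensity/\avgdegree+\skewness(G))$. A \emph{single} Greedy then handles all remaining arrivals of all classes together. A pattern-based witness-tree count shows $\ggreedy$ has components of size $O(\log^2 n)$ whp, so Greedy's adversarial $O(\log N)$ bound is paid only once. The doubly exponential scales make the thresholds summable: $f_i^{\log\!\log n/2^i}=\log n$, so a threshold of order $\log\!\log n/2^i$ already buys a $\log n$ factor per left vertex of a witness tree. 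The $\skewness(G)$ part of $\alpha_i$ pays for the right degree $\Delta_R(G_i)$ in the count.

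\textbf{The decomposition.} That count needs, for every class, $\Delta_L(G_i)=O(\maxdensity/f_i)$ together with $\Delta_R(G_i)=O\bigl(\maxdensity(f_i\log n)^{2\skewness(G)}\bigr)$. Your vertex-level bucketing does not give the right-degree bound. For instance, classifying edges by the degree of their left endpoint fails. A right vertex $v$ can have $(\avgdegree\log^2 n)^{\skewness(G)}$ neighbours of degree $\Theta(\maxdensity)$: give them private other neighbours, so only subgraphs with $d^{\min}_A=1$ (essentially stars at $v$) have positive log-skewness. That is far above $\maxdensity(f\log n)^{2\skewness(G)}$ for $f=O(1)$ once $\skewness(G)\ge 2$ and $\avgdegree$ is large. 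So $v$'s edges must be spread over several classes, and which edge goes where is an edge-level choice.

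The paper makes that choice by peeling $b_i$-matchings, $b_i=(f_i\log n)^{2\skewness(G)}$, off the left vertices whose residual degree exceeds $16\maxdensity/f_i^2$ (one max-flow per round). These matchings exist by Hall's theorem: applying the definition of $\skewness(G)$ to the subgraph on $(U',N(U'))$ gives the expansion $|N(U')|\ge|U'|/b_i$. This ``bounded log-skewness implies expansion'' step is the structural core, and it is absent from your plan. Conversely, your requirement that degrees on each side agree up to constant factors is never used, and cannot in general be met with $O(\log\!\log n)$ classes.

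\textbf{Smaller points.} You skip the reduction to a bipartite graph with left degrees at most $4\maxdensity$ (orientation plus double cover). The left/right asymmetry, the base case of the peeling and the tree count all rely on it. In your lower-bound sketch the threshold must satisfy $|A|\cdot\Pr[\Poi(\lambda_a)\ge k]\approx|B|$, not $\approx 1$. One vertex with $k$ sampled edges can spread them over $B$; it is $|B|$ such vertices together with $B$ that give density $\Omega(k)$. Finally, for general $T$ the paper does not just recompute on a doubling schedule. It reduces to $T=n$ by padding $G$ with isolated vertices or disjoint cliques, and handles $T>n\log n$ and $T\le\log n$ directly.
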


As the $\Omega(\log\!\log n)$ lower bound  already holds for $G = K_n$, as shown by \cite{azar1994balanced}, \Cref{thm: polyloglog-comp} is optimal up to constant factors. 

Perhaps surprisingly, even though Greedy is 
optimal both for adversarial arrivals and for i.i.d.\ arrivals from a regular graph~\cite{kenthapadi2006balanced},\footnote{In fact, \cite{kenthapadi2006balanced} show that Greedy is $O(\log\!\log n)$ competitive even when all the degrees are within an $O(1)$ factor.} it can perform very poorly even for mildly irregular graphs $G$, and hence does not suffice to prove Theorem \ref{thm: polyloglog-comp}.
In particular, we show the following in Section~\ref{sec: greedy-lb}.
\begin{theorem}\label{thm: greedy-lb}
There is a graph~$G$ for which the greedy algorithm, even with random tie-breaking, has maximum load $\Omega(\log n/\log\!\log n)$ after $n$ arrivals, 
while the expected offline optimum is only $O(\log\!\log n)$.
Moreover,  $G$ is only mildly irregular with 
all vertex degrees in the range $[\sqrt{n},\sqrt{n} \log^{3}n]$.
\end{theorem}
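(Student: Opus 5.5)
The plan is to build $G$ as a layered graph that makes greedy imitate the classical adversarial $\Omega(\log n)$ lower-bound tree, with each layer slightly denser than the previous one. Take $k=\Theta(\log n/\log\!\log n)$ layers $L_0,L_1,\dots,L_k$ with $|L_i|=n/(\log n)^{c i}$ for a suitable constant $c$; the constant is chosen so that $|L_k|\ge n^{1/2}$. Between consecutive layers $L_{i-1}$ and $L_i$ I place a bipartite biregular graph $B_i$, in which each vertex of $L_i$ has many more $B_i$-neighbours than each vertex of $L_{i-1}$. To keep every degree in $[\sqrt n,\sqrt n\log^3 n]$, I superimpose a $\sqrt n$-regular ``filler'' graph on all vertices, for instance a random regular graph or a disjoint union of complete bipartite pieces. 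Within the total degree budget, the $B_i$ degrees are tuned so that the expected number of $B_i$-arrivals at a vertex of $L_i$ among the $n$ arrivals is $\Theta(\log^{a} n)$ for some constant $a$. Meanwhile, an $L_{i-1}$-vertex sees only $o(1)$ expected $B_i$-arrivals, so the irregularity between layers is at most a polylogarithmic factor.

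The lower bound on greedy is proved by induction over time phases. I split the $n$ arrivals into $k$ consecutive phases of $n/k$ arrivals each. The inductive claim is that at the end of phase $i$, at least a $\beta$ fraction of $L_i$ (for a fixed constant $\beta$) has load at least $i$. For the step, I fix a vertex $v\in L_i$. During phase $i+1$ it sees many $B_{i+1}$-type arrivals from neighbours in $L_{i}$ that already have load $\ge i$. This holds because the $B_{i+1}$-neighbourhood of $v$ is large and spread out, and biregularity gives a $\beta$ fraction of good neighbours for most $v$. Each such arrival either raises $v$'s load or is absorbed by the neighbour. With random tie-breaking, greedy places at least one such edge on $v$ once $v$ has load $\ge i$ and the other endpoint has load $\ge i$; this is enough to push $v$'s load up. So a constant fraction of $L_{i+1}$ reaches load $i+1$. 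Concentration across vertices comes from a bounded-differences or Poisson-approximation argument. The filler edges only ever add load, so they can be ignored in this lower bound. Iterating over the $k$ phases gives maximum load $\Omega(\log n/\log\!\log n)$.

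For the offline optimum I need that, with good probability, every subgraph of the sampled multigraph has density $O(\log\!\log n)$; the tail event is handled by a crude $O(\log n)$ bound. Filler edges form a near-regular graph, and by the regular-graph results of \cite{kenthapadi2006balanced} they contribute $O(1)$. For the cross edges, naively orienting each one toward its larger-layer endpoint fails, because the per-vertex counts are Poisson with maxima near $\log n/\log\!\log n$. Instead I bound density directly. The first-moment count for a set $S$ of size $s$ containing more than $Cs\log\!\log n$ sampled edges involves only the edges inside $S$. Because $G$ restricted to consecutive layers is sparse relative to $n$ arrivals (each layer-pair receives few edges per vertex of the larger side), this is a union bound of the form $\binom{n}{s}\cdot(\text{small})^{Cs\log\!\log n}$.

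I expect the main obstacle to be the tension between the two halves: layers must be dense enough that greedy climbs one level per phase, yet sampled subgraphs must be sparse enough that OPT stays $O(\log\!\log n)$. In particular, the union bound must beat the polylog degree gaps. If the first-moment count fails for small $s$, my fallback is to orient by the layer structure for vertices with few cross-arrivals. Vertices with unusually many cross-arrivals are rare, and for them I would route edges to $L_{i-1}$-endpoints that have spare capacity, via Hall's condition.
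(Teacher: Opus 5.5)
\textbf{The offline bound is the genuine gap.} Your greedy half uses the same template as the paper: layers shrinking by a polylogarithmic factor, joined by biregular graphs in which the smaller layer has the larger degree, with load climbing one layer per phase from the largest layer. But the bound $\opt=O(\log\!\log n)$, which you yourself flag as the crux, is not established. The uniform union bound $\binom{n}{s}\cdot(\text{small})^{Cs\log\log n}$ you propose cannot close.

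To see this, take $S=L_{k-1}\cup L_k$. Its size $s=\Theta(|L_{k-1}|)$ is $n^{1-\Omega(1)}$, since $k=\Theta(\log n/\log\!\log n)$. Every $L_{k-1}$-vertex has $\Theta(\log^{a-c}n)$ expected sampled $B_k$-edges, so $S$ has $\Theta(s\log^{a-c}n)$ expected sampled edges. Hence $\Pr[S \text{ has } Cs\log\!\log n \text{ sampled edges}]$ is $\exp(-O(s(\log\!\log n)^2))$, whereas $\binom{n}{s}=\exp(\Theta(s\log n))$ at this scale. The trivial bound $e_G(S)\le s^2$ only helps for $s\le \sqrt n\cdot n^{-\Theta(1/\log\log n)}$. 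A non-uniform count would need to know how many sets span many base edges, and that depends on the structure of $B_i$, which you leave unspecified. The Hall-type fallback is not yet an argument.

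The missing idea is the paper's localization:
\begin{enumerate}
\item Make every inter-layer graph a disjoint union of complete bipartite blocks $K_{\sqrt n,\sqrt n t}$ with $t=\log^3 n$ and $|V_i|=t^i\sqrt n$. Then all degrees lie in $[\sqrt n,\sqrt n(1+t)]$ and no filler is needed.
\item Use that the offline load is subadditive over edge-disjoint pieces, and that every vertex lies in at most two blocks.
\item Do the first-moment bound inside a single block.
\end{enumerate}
In a block, a subset with $a\le\sqrt n$ left and $b$ right vertices spans exactly $ab$ base edges, i.e.\ $\Theta(ab/\sqrt n)=O(\min(a,b))$ expected sampled edges. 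So the tail for $C(a+b)\log\!\log n$ edges is at most $\bigl(O(\min(a,b))/(C\sqrt n\log\!\log n)\bigr)^{C(a+b)\log\log n}$. Checking the cases $\min(a,b)$ small or large, this beats the count $\binom{\sqrt n}{a}\binom{\sqrt n t}{b}$, precisely because a block has only $\sqrt n(1+t)$ vertices.

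\textbf{The greedy induction also fails as stated with a fixed $\beta$.} Averaging over a biregular $B_{i+1}$ only shows that a $\beta/2$ fraction of $L_{i+1}$ has a $\beta/2$ fraction of good neighbours. So the guaranteed fraction decays geometrically, and with it the number of good arrivals per vertex per phase; the argument stalls after $O(\log\!\log n)$ phases. You also need about $i+1$ good arrivals at $v$ in phase $i+1$, not one, since $v$ may start the phase with small load and must first climb to $i$.

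The paper avoids both problems by proving that whp \emph{every} vertex of the current layer reaches the target load. Each vertex gets $\Omega(\log^2 n)$ arrivals to the adjacent larger layer in every batch, and all their other endpoints already carry the previous target load. So the vertex climbs to that load deterministically, then gains one more with probability $1-2^{-\Omega(\log^2 n)}$, and a union bound finishes the step. In your parametrization this needs roughly $a\ge 2$. Also, the fixed vertex $v$ should lie in $L_{i+1}$, not $L_i$.
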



The proof of \Cref{thm: polyloglog-comp} rests on four key ideas. First, we identify the right obstruction that makes the offline optimum large. Second, we isolate a class of imbalanced subgraphs on which Greedy nevertheless succeeds. Third, we show that every graph can be decomposed into only $O(\log\!\log n)$ such pieces. 
Fourth, based on this decomposition, we give a new algorithm called {\em Threshold-Greedy} and show that it is $O(\log\!\log n)$-competitive.
We elaborate on these ingredients next.

\smallskip
\noindent
\textbf{(1) \LogSkew{}.}
The example of~\cite{kenthapadi2006balanced} already shows that the presence of imbalanced subgraphs inside $G$ can greatly increase the offline optimum. To capture this effect, we introduce a new graph parameter, which we call \emph{\logskew{}} (see \Cref{subsec: skewed-lb}). Informally, \logskew{} measures how much a bipartite subgraph can be larger on one side than the other while still carrying substantial degree. 
We show that if $\skewness(G)$ is the maximum \logskew{} over all subgraphs of $G$, then the offline optimum is $\Omega(\skewness(G))$. In fact, 
this bound, together with the standard lower bounds based on degree and density, determines the offline optimum up to an $O(\log\!\log n)$ factor. 


\smallskip
\noindent
\textbf{(2) Skew-Biregular Subgraphs.} 
The next question is algorithmic: on what kinds of imbalanced graphs can Greedy perform well? In general, the answer is not all of them: as shown by the example in \Cref{thm: greedy-lb} (both the offline optimum and skewness for it are only $O(\log\!\log n)$). However, we identify a large class of bipartite graphs, which we call \emph{$f$-\skewbir{}} subgraphs (see \Cref{sec:skewbir}), for which Greedy is $O(\log\!\log n)$-competitive, even though the graphs are highly imbalanced. Roughly speaking, in such a graph every vertex on one side has degree at most $d/f$, while every vertex on the other side has degree at most $d f^{\skewness(G)}$. The key point is that high-degree vertices on one side are only incident to proportionally lower-degree vertices on the other, which keeps the associated branching process in the witness tree argument subcritical.

\smallskip
\noindent
\textbf{(3) Graph Decomposition.}
To handle general graphs, our main structural result is an almost-linear-time decomposition showing that every graph can be edge-partitioned into only $O(\log\!\log n)$ $f$-\skewbir{} subgraphs, each capturing one such scale of $f$ (see \Cref{subsec:graphDecomp}). 
The key structural insight which we use is that bounded \logskew{} forces an expansion property, which we use to iteratively peel off edge-disjoint \skewbir{} subgraphs at different scales.  Running Greedy separately on each piece already yields a non-trivial $O((\log\!\log n)^2)$-competitive algorithm. 


\smallskip
\noindent
\textbf{(4) Threshold-Greedy.}
Finally, to obtain the optimal $O(\log\!\log n)$ ratio, we combine this decomposition with a carefully designed {\em Threshold-Greedy} rule that handles all pieces simultaneously without losing the extra factor of $\log\!\log n$ (see \Cref{sec: algo}). Roughly, these thresholds help limit the interaction between the skew-biregular graphs at different scales, without splitting the graph into separate pieces. The analysis of Threshold-Greedy requires a novel witness-tree
argument that exploits the structural properties of the \skewbir{} graphs
produced by the decomposition. 





\medskip
{\bf Organization.}
Section~\ref{sec: setup}  develops the
 preliminaries and reduces the problem to the
bipartite case.
Section~\ref{sec:logskew}  shows that \logskew{} lower bounds the offline optimum, and then uses it to decompose the base graph into a small number of \skewbir{} subgraphs. 
Section~\ref{sec: algo} presents the $O(\log\!\log n)$-competitive  \emph{Threshold-Greedy} algorithm. Section~\ref{sec: greedy-lb} exhibits a base graph on which {Greedy} performs poorly. 
Finally,  Section~\ref{sec:conclusion} concludes with a discussion of
 future directions.

\subsection{Further Related Work}
\label{sec:prev-work}
There has been extensive work on both online graph balancing and the power of two choices. We only describe these briefly here 
and refer to the surveys \cite{sitaraman2001power,mitzenmacher1996power,wieder2017hashing, azar2005line} for more.

\emph{Random-order graph balancing.} 
There have been several attempts to go ``beyond-worst-case'' for online graph balancing, particularly under \emph{random-order} arrival of edges. 
 Already  in 1995, \cite{BroderFLPR-IPL95} showed that Greedy is $\Omega(\log n/\log\!\log n)$-competitive in the random-order setting. Although $(1+\epsilon)$-competitive algorithms are  possible with an additive logarithmic loss in the random order model \cite{GuptaM-MOR16,molinaro2017online}, the interesting recent work of \cite{im2024online} shows that, without an additive loss, every online algorithm is $\Omega(\sqrt{\log n})$-competitive. Consequently, our $O(\log\!\log n)$-competitive guarantee crucially relies on the difference between i.i.d.\,and random-order models.

\emph{Scheduling with ML Advice.}
Another approach to going beyond worst-case analysis uses \emph{machine-learning advice}, 
where the online algorithm receives predictions of a small number of parameters of the input sequence.  
For online load balancing on $n$ machines, there exist $O(n)$ parameters (e.g., one weight per machine) such that an allocation rule governed by these parameters is $O(1)$-competitive in the fractional setting and $O(\log\!\log n)$-competitive in the integral setting~\cite{LiXian-ICML21,lattanzi2020learning}.  
In stochastic settings like ours, one might hope to infer these parameters directly from the underlying distributions.  
However, these parameters do not concentrate sufficiently, so only yield $\poly\!\log(n)$-competitive ratios for online graph balancing.


\emph{Variants of two-choices.} 
Several variants and extensions of the original model \cite{azar1994balanced} have been studied. For example,~the heavily loaded case with many more balls than bins \cite{berenbrink2000balanced,talwar2014balanced,peres2015graphical}, dynamic settings where balls may be deleted and reinserted \cite{cole1998balls,bansal2022balanced}, parallel allocations \cite{stemann1996parallel,lenzen2019parallel}, and allocations under incomplete information \cite{los2021balanced,los2023balanced}. Several elegant proof techniques such as layered induction \cite{azar1994balanced},  witness trees \cite{czumaj1995shared,cole1998balls,vocking2003asymmetry}, differential equations \cite{mitzenmacher1996power}, stability of Markov processes \cite{berenbrink2000balanced,talwar2014balanced}, and potential functions \cite{peres2015graphical, los2023balanced} have also been developed. Our proofs are based on
the witness tree technique, combined with various extensions that are required to handle irregular graphs.

 Two choice model on general graphs (aka graphical allocation) has also been extensively studied since the work of Kenthapadi and Panigrahy \cite{kenthapadi2006balanced}. Godfrey \cite{godfrey2008balls} and Greenhill et al.\ \cite{greenhill2020balanced} extended \cite{kenthapadi2006balanced} to structured families of hypergraphs. Peres, Talwar, and Wieder \cite{peres2015graphical} investigated 
expanders in the heavily loaded regime, which was recently generalized  to regular graphs by Bansal and Feldheim \cite{bansal2022power}.
As discussed before, all these results only consider  \emph{regular} graphs.

\section{Preliminaries and Preprocessing}
\label{sec: setup}
We begin with the formal problem definition and some basic notation. Next, we describe some simple properties of the offline optimum. Finally, we describe a useful preprocessing, which will allow us to assume that the base graph $G$ is bipartite and with bounded degrees on one side.

\subsection{Problem Definition}
We consider the following stochastic graph balancing problem.
\begin{definition}[Graph Balancing]
Given a {\em base} graph $G = G(V,E)$ 
on $n$ vertices,  
at each time step $t=1,\ldots,T$, an edge $e_t$ is drawn uniformly from $E$, with replacement, and
must be (irrevocably) oriented towards one of its endpoints. The load $L(u)$ of a vertex $u$ is the number of edges directed into $u$, and the goal is to minimize the \emph{maximum load}, $M:=\max_{u \in V} L(u)$, over the vertices.
\end{definition}

This is equivalent to 2-choice balls into bins problem studied by \cite{kenthapadi2006balanced}---each vertex corresponds to a bin, and at each time $t$ the two bin choices for the arriving ball are given by the random edge $e_t$. 
Clearly, the case of $G=K_n$ corresponds to the classical 2-choice model.
The case of regular graphs $G$ was solved completely by \cite{kenthapadi2006balanced}.

We will use standard competitive analysis.
Let $G'$ denote the random {\em sampled} (multi)-graph formed by the $T$ sampled edges $e_1, \ldots, e_T$. 
For a fixed sample $G'$, let $M^*(G')$ denote the optimal offline load for $G'$. Similarly, for an online algorithm $A$, let $M^A(G')$ denote the maximum load under \text{A}.\footnote{Strictly speaking, the online load $A(G')$ can depend on the order of arrival of the edges in $G'$. So $G'$ can be viewed as an online sequence.}
For a base graph $G$ and sequence length $T$, let
\begin{equation}
    \label{eq:opt-defn}
    \opt(G,T) := \E_{G'}[M^*(G')]  \quad \text{ and }  \quad A(G,T) := \E_{G'}[M^A(G')]
\end{equation}   denote the expected optimum maximum load and that under $A$, respectively. 

We say that $\text{A}$ is \emph{$\alpha$-competitive} if 
$A(G,T) \leq \alpha \, \opt(G,T)$
for all graphs $G$ and lengths $T$.

\medskip 
{\bf Focusing on $T=n$ arrivals.}
As we are interested in the (multiplicative) competitive ratio, the hardest case is when $T \approx n$. 
Indeed, in \Cref{sec: reductions} we will formally prove that the case of general $T$ arrivals  can be reduced to the case of $T=n$.
So, from now on until \Cref{sec: reductions}, we will assume that $T=n$ and use $\opt(G)$ to denote $\opt(G,n)$, and focus on proving our main result Theorem \ref{thm: competitive-ratio} in the case. We rewrite this below for future reference.

\begin{theorem}\label{thm:mainArrivalsn}
There is an $O(\log\!\log n)$-competitive online algorithm for Graph Balancing problem for any $G$ with $T=n$ uniformly drawn i.i.d.\,arrivals.
\end{theorem}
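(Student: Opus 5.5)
We follow the four-step outline of \Cref{sec:mainResults}: a lower bound on $\opt(G)$, a decomposition of $G$, and a Threshold-Greedy algorithm whose load we compare to that lower bound. First we reduce to $G$ bipartite with bounded degrees on one side. We orient each edge of $G$ according to a fixed offline rule, for instance toward the endpoint of lower degree in a degeneracy-type ordering. This splits $G$ into a bipartite ``high side'' and ``low side'' structure, which we expect to cost only a constant factor in both $\opt$ and the online load.

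Next we prove the lower bound
\[
\opt(G) \;=\; \Omega\bigl(\max\{\skewness(G),\ \text{density bound},\ \text{degree bound}\}\bigr).
\]
The density bound says that a subgraph $H$ with $|E(H)|/|E|\cdot n \gg |V(H)|$ receives many samples in expectation. For $\skewness(G)$, take a bipartite subgraph $H=(A,B)$ with $|A|\gg |B|$ and large degrees. With $n$ samples, a vertex of $B$ collects about $\lambda=d_B\cdot n/|E|$ sampled edges, which form a Poisson-like tree. The key event is that some $b\in B$ together with its $\approx \ell$ sampled neighbors in $A$ each have $\ell$ further sampled edges inside $H$. Such a configuration contains a subgraph of density about $\ell$. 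A second-moment computation over the $|B|$ roots should show that it occurs with constant probability when $\ell\approx \log(|A|/|B|)/\log(\cdot)$, which is exactly the \logskew{}. We also show the converse up to $O(\log\!\log n)$: if all three quantities are small, a Hall-type fractional orientation exists.

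The structural step is to show that bounded $\skewness(G)=s$ forces expansion from high-degree to low-degree vertices. We then peel $G$ into $f$-\skewbir{} pieces at scales $f=2^{2^i}$, $i=0,\dots,O(\log\!\log n)$. At each scale we take the subgraph of edges whose endpoint degrees have ratio in the window $[f^{1/2},f]$. The vertices violating the skew-biregular degree bound witness a subgraph of \logskew{} greater than $s$, so they can be charged away. A doubly exponential sequence of scales gives only $O(\log\!\log n)$ pieces.

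Finally, the algorithm. Threshold-Greedy assigns piece $i$ the threshold $\tau_i=O(\opt(G))$ and orients each arriving edge of piece $i$ greedily, ignoring any load above $\tau_i$ accumulated from other pieces. We bound the load via a witness tree: a vertex with load $c\cdot\opt\cdot\log\!\log n$ would need a witness tree of depth $\log\!\log n$. In that tree every internal node has $\opt$ children, and the children come from pieces whose skew-biregularity keeps the expected branching factor below $1$. A union bound over trees, using the degree bounds $d/f$ and $df^{s}$, then gives probability $n^{-\omega(1)}$.

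We expect the main obstacle to be this last witness-tree analysis across interacting scales. High-degree vertices in one piece see low-degree vertices of another, and the standard tree counting over-counts. Our first attempt will be to weight each tree node by $(\deg)^{-1}$ along the path and show that the weights telescope thanks to the skew-biregular inequalities. If that fails, we fall back on running Greedy separately on each piece, which yields an $O((\log\!\log n)^2)$-competitive guarantee rather than the $O(\log\!\log n)$ claimed in \Cref{thm:mainArrivalsn}.
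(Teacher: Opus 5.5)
Your proposal does not establish the $O(\log\!\log n)$ bound. The cross-scale analysis, which is the heart of the theorem, is left open, and your fallback only yields $O((\log\!\log n)^2)$.

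\textbf{Threshold mechanism.} What is missing is the specific threshold rule and how it is used. In the paper, the first $\alpha_i$ class-$i$ edges at each left vertex $u$ are assigned to $u$ unconditionally. A single Greedy then orients all remaining edges, using only non-threshold load. Hence every class-$i$ left vertex on a tree of greedy edges must carry $\alpha_i$ further sampled class-$i$ edges. Since $\Delta_L(i)\le 16\maxdensity/f_i$, this event has probability at most $(2e\Delta_L(i)/(\alpha_i\avgdegree))^{\alpha_i}\le f_i^{-\alpha_i}$. That factor pays for the factor $\Delta_R(i)/\maxdensity=O((f_i\log n)^{2\skewness(G)})$ which the same vertex contributes to the number of embeddings of the tree in $G$.

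Paying it requires $\alpha_i\gtrsim \skewness(G)\,(1+\log\!\log n/2^i)$. This is why the paper sets $\alpha_i=\Theta((\maxdensity/\avgdegree+\skewness(G))(\log\!\log n/2^i+1))$, which still sums to $O(\log\!\log n\cdot\opt)$ over the classes. Your alternatives both fail here:
\begin{itemize}
\item Since $\opt$ is only known to be $\Omega(\maxdensity/\avgdegree+\skewness(G))$, a uniform $\tau_i=O(\opt)$ gives just $4^{O(\opt)}$ at the coarsest scale $f_1=4$. That need not dominate $(4\log n)^{2\skewness(G)}$.
\item A uniform $O(\opt\log\!\log n)$ threshold costs $O(\opt(\log\!\log n)^2)$ in total.
\item Your rule as stated still orients every edge greedily. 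Nothing then forces extra edges, and Greedy acts on all of $G'$, whose components can be huge.
\end{itemize}

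\textbf{Cross-scale interaction.} The paper does not use a depth-$\log\!\log n$ witness tree or telescoping path weights. Instead it counts left-heavy subtrees of $\ggreedy$, obtained by pruning right leaves since sampled left degrees are $O(\log n)$. These are counted by class-labelled pattern, so the embedding count and the probability cancel class by class. This shows every component of $\ggreedy$ has size $O(\log^2 n)$ w.h.p. The adversarial $O(\log N)$ guarantee of Greedy (\Cref{fact:greedy}) with $N=O(\log^2 n)$ then finishes the proof.

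\textbf{Errors in the earlier steps.}

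\emph{Lower bound.} Your configuration (a vertex $b\in B$, its $\ell$ sampled neighbours, and $\ell$ further edges at each) is generally a tree, so its density is below $1$. The density must instead come from pigeonholing into the small side. Set $\lambda=\min(1,d_A^{\min}/\avgdegree)$ and choose $t$ with $\Pr[\mathrm{Bin}(n,2\lambda/n)\ge t]\ge 2|B|/|A|$. By negative association, at least $|B|$ vertices of $A$ reach sampled degree $t$ w.h.p. All their edges go into $B$, so these $2|B|$ vertices span density $\Omega(t)$.

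\emph{Decomposition.} Bucketing edges by the ratio of original endpoint degrees does not give $f$-\skewbir{} pieces. Take disjoint copies of a complete bipartite graph with $b\log n$ left and $b$ right vertices. Then $\maxdensity\approx b$ and $\skewness(G)<1$. Yet every edge has ratio $\log n$, so it lands at a scale $f\ge\log n$, where $\Delta_L=O(\maxdensity/f)$ fails by a factor $\Omega(\log n)$. There is no subgraph of larger \logskew{} to charge this to.

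The paper instead turns the expansion implied by bounded \logskew{} into Hall's condition. Namely, $|N(U')|\ge|U'|/b_i$ for every set $U'$ of left vertices with residual degree above $16\maxdensity/f_i^2$, where $b_i=(f_i\log n)^{2\skewness(G)}$. It then peels $b_i$-matchings from the residual graph. Because the pieces are determined by residual degrees, one gets $\Delta_L(G_i)\le 16\maxdensity/f_i$ together with $\Delta_R(G_i)\le 16\maxdensity(f_i\log n)^{2\skewness(G)}$.

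\emph{Bipartite reduction.} This comes from doubling each vertex into $u_L,u_R$ and mapping an edge oriented into $u$ to $(u_L,v_R)$. It does not come from splitting $V$ into a high and a low side.
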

We now note some basic properties of $\opt(G)$.
\subsection{Simple Lower Bounds on \texorpdfstring{$\opt$}{opt}}
\label{sec: understanding-opt}
Let $H = H(V_H, E_H)$ be a multigraph.
We use $\rho(H): = |E_H|/|V_H|$ to denote the density of $H$. For  $S \subseteq V_H$, let 
 $H[S]$ denote the induced subgraph of $H$ on $S$. 
The max-density of $H$ is defined as 
\[ \maxdensity(H) := \max_{S \subseteq V_H} \rho(H[S]).\]
Suppose $H$ corresponds to some set of edge arrivals. 
It is a classical result   
 \cite{hakimi1965degrees}  that 
$M^*(H)= \lceil \maxdensity(H) \rceil$.
(The lower bound is easy to see as $M^*(H) \geq \rho(H[S])$ for any subset $S$, as every edge of $H[S]$ must be oriented to some vertex in $S$).  
Note that $\maxdensity(H)\geq 1/2$ for any non-empty graph (just consider a single edge), so we will ignore the ceiling in  $M^*(H)= \lceil \maxdensity(H) \rceil$ henceforth.
 Even though $\maxdensity(H)$ depends on all subsets $S$, it can be computed efficiently using maximum flow.\footnote{In fact, repeatedly picking the smallest degree vertex in $H$, orienting all edges into it, and removing it, gives a simple $2$-approximation to $\maxdensity(H)$, which will suffice for us.}
  
So for our problem, $\opt(G)$ is precisely the expected max-density of the sample $G'$, i.e.,
\begin{equation}
    \label{eq:opt}
\opt(G) = \E_{G'} [M^*(G') ] = \E_{G'} [\maxdensity(G')] = \E_{G'} \left[ \max_{S \subseteq V} \rho(G'[S])\right].
\end{equation}

Let $\avgdegree(G) = 2|E|/n$ denote to the average degree of $G$. To avoid trivialities, we will assume that $\avgdegree(G)\geq 1$.
Recall that $G'$ is obtained by sampling  $n$ edges from $E$ (as we assume $T=n$). Thus the number of occurences in $G'$ of an edge $e\in E$ follows a binomial distribution $\text{Bin}(n,p)$, with $p=1/|E|$. In particular, the expected number of occurences of $e$ is $np =2/\avgdegree(G)$.

We have the following simple lower bounds on $\opt$
in terms of its max-density $\maxdensity(G)$ and $\avgdegree(G)$.
 \begin{claim}
\label{cl:simple-lb-3}
For any base graph $G$, the expected optimum load $\opt(G)$ satisfies
\begin{align}
\label{lb:max-density}
    \opt(G) & \geq 2\maxdensity(G)/\avgdegree(G) \qquad \qquad & \text{(density lower bound)}\\
\label{lb:degree}
    \opt(G) & = \Omega\left( \log n/\log (\avgdegree(G) \cdot \log n) \right) \qquad & \text{(edge-multiplicity lower bound)}
\end{align}
\end{claim}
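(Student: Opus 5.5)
For the density lower bound (\ref{lb:max-density}), I will fix a set $S^\ast \subseteq V$ with $\rho(G[S^\ast]) = \maxdensity(G)$ and use that $\opt(G) = \E_{G'}[\maxdensity(G')] \ge \E_{G'}[\rho(G'[S^\ast])]$, which follows from (\ref{eq:opt}). Each of the $n$ sampled edges lies in $E(G[S^\ast])$ with probability $|E(G[S^\ast])|/|E|$. So by linearity of expectation,
\[
\E[|E(G'[S^\ast])|] = n\,\frac{|E(G[S^\ast])|}{|E|} = \frac{2\,|E(G[S^\ast])|}{\avgdegree(G)}.
\]
Dividing by $|S^\ast|$ gives $\E[\rho(G'[S^\ast])] = 2\maxdensity(G)/\avgdegree(G)$, which is the claimed bound.

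For the edge-multiplicity lower bound (\ref{lb:degree}), I will use the fact that if some edge $e$ appears $k$ times in $G'$, then the two-vertex set $S = e$ has density $k/2$, so $M^\ast(G') \ge k/2$. It therefore suffices to show that, with constant probability, the maximum multiplicity of an edge in $G'$ is $\Omega(\log n / \log(\avgdegree \log n))$.

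\textbf{Small average degree.} If $\avgdegree \le \log n$, the target is $\Omega(\log n/\log\log n)$. I will greedily pick a matching in $G$ of $m = \Omega(n/\avgdegree)$ edges. This is possible since $|E| = n\avgdegree/2$ and each chosen edge meets at most $2\Delta$ others; to avoid dependence on the maximum degree $\Delta$, I will instead pick edges one at a time and argue via the count of edges. Actually I will avoid the matching altogether and work with all $|E|$ edges directly. The multiplicities are multinomial, so I will Poissonize: take $\Poi(n)$ samples, which makes the multiplicities independent $\Poi(\lambda)$ with $\lambda = 2/\avgdegree$, and note that $\Poi(n) \le 2n$ with high probability, which changes the bound by only a constant. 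Then
\[
\Pr[\text{some edge has multiplicity} \ge k] \;\ge\; 1 - \bigl(1 - e^{-\lambda}\lambda^k/k!\bigr)^{|E|}.
\]
This is $\Omega(1)$ provided $|E|\, e^{-\lambda}\lambda^k/k! \ge 1$, i.e.\ provided
\[
\frac{n\avgdegree}{2}\,(2/\avgdegree)^k/k! \ge \Omega(1).
\]
Taking logarithms, this requires $k\log(\avgdegree\, k) \lesssim \log n$. Choosing
\[
k = c\,\frac{\log n}{\log(\avgdegree \log n)}
\]
for a small constant $c$ satisfies this, because $\log(\avgdegree\, k) \le \log(\avgdegree \log n)$.

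\textbf{Large average degree.} If $\avgdegree \ge n^{c'}$, the right-hand side of (\ref{lb:degree}) is $O(1)$. In that case the trivial bound $\opt \ge 1/2$ (any nonempty sample has a single edge of density $1/2$) suffices. The same computation handles this regime uniformly anyway.

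\textbf{Main obstacle.} The main step is the multiplicity tail computation, together with the Poissonization or dependency handling. I expect it to be routine: by negative association of the multinomial counts, the events $\{\text{multiplicity}<k\}$ are negatively correlated, so the product bound applies directly without Poissonizing. Finally, I will multiply the constant success probability by $k/2$ to get the bound on the expectation.
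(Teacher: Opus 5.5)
Your proposal is correct and matches the paper's proof: linearity of expectation on a densest set $S^\ast$ gives the density bound, and for the multiplicity bound an edge appearing $k$ times forces load $k/2$, with single-edge tail $\approx \lambda^k/k!$; you get the constant success probability from negative association of the multinomial counts instead of the paper's second-moment argument. One small correction: if you Poissonize, use $\Poi(n/2)$ samples (at most $n$ w.h.p., so the true $n$-sample multiplicities dominate) rather than $\Poi(n)$ with $\Poi(n)\le 2n$, which points the wrong way; the negative-association route avoids this issue altogether.
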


The above claim follows simply from the linearity of expectation and standard concentration results for the binomial distribution. We include its proof for completeness in \Cref{sec:bin-conc}. 

\subsection{Preprocessing to Left-Degree-Bounded  Bipartite Graphs}
We now describe a useful processing step to simplify the structure of $G$.

\smallskip
{\bf Preprocessing.} Let $G=(V,E)$ be an arbitrary base graph with
max-density
$\maxdensity(G)$.
 Compute an orientation of edges in $E$ with maximum in-degree $\lceil\maxdensity(G)\rceil$.

Construct a (undirected) bipartite graph $H=(V_L,V_R,E_H)$ from this orientation as follows: For each $u\in V$, add a vertex $u_L \in V_L$ and $u_R \in V_R$.
For each edge $e=(u,v)\in E$, add a corresponding edge $e_h$ in $E_H$ where $e_h=(u_L,v_R)$ 
if $e$ is directed towards $u$, else $e_h=(v_L,u_R)$.  

We have the following useful property.

\begin{lemma}\label{clm: left-degree-bounded}
For any base graph $G$ and any number of arrivals $T$, 
an $\alpha$-competitive graph balancing algorithm on the corresponding  base graph $H$ implies a $2\alpha$-competitive algorithm on $G$.
Moreover, $\maxdensity(H)\leq \maxdensity(G)\leq 2 \maxdensity(H)$, and 
the maximum left-degree  $\Delta_L(H) := \max_{v\in V_L} d(v) \leq 4\, \maxdensity(H)$.
\end{lemma}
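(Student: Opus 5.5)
The plan is to prove the three assertions separately: the density sandwich, the left-degree bound, and the transfer of the online algorithm.

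\textbf{Density sandwich.} Let $\phi:E\to E_H$ be the edge bijection $e\mapsto e_h$, and let $\pi:V_L\cup V_R\to V$ be the projection $u_L,u_R\mapsto u$.

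For $\maxdensity(H)\le\maxdensity(G)$, take $S\subseteq V_L\cup V_R$. The edges of $H[S]$ map under $\phi^{-1}$ to edges of $G[\pi(S)]$, injectively. Moreover $|\pi(S)|\le|S|$. Hence $\rho(H[S])\le\rho(G[\pi(S)])\le\maxdensity(G)$.

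For $\maxdensity(G)\le 2\maxdensity(H)$, take $S\subseteq V$ and put $S'=\{u_L,u_R:u\in S\}$. Every edge of $G[S]$ has its image in $H[S']$, and $|S'|=2|S|$. So $\rho(G[S])\le 2\rho(H[S'])$.

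\textbf{Left-degree bound.} Every edge of $H$ incident to $u_L$ comes from an edge of $G$ oriented into $u$. So $d_H(u_L)$ equals the in-degree of $u$, which is at most $\lceil\maxdensity(G)\rceil$. Using $\maxdensity(G)\le 2\maxdensity(H)$ and $\maxdensity(H)\ge 1/2$ (because $H$ is nonempty), we get
$$\lceil\maxdensity(G)\rceil\le\maxdensity(G)+1\le 2\maxdensity(H)+2\maxdensity(H)=4\maxdensity(H).$$

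\textbf{Algorithm transfer.} Since $\phi$ is a bijection, a uniform edge of $E$ corresponds to a uniform edge of $E_H$. A sequence $G'$ of $T$ arrivals therefore corresponds to an identically distributed sequence $H'=\phi(G')$ of arrivals from $H$.

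Given an algorithm $A_H$ for $H$, define $A_G$ as follows. On arrival $e$, feed $\phi(e)$ to $A_H$. If $A_H$ orients it to the copy $x_L$ or $x_R$, orient $e$ toward $x$. This is well defined because both endpoints of $\phi(e)=(u_L,v_R)$ project to the endpoints $u,v$ of $e$. Each vertex $u$ then has load $L_G(u)=L_H(u_L)+L_H(u_R)\le 2M^{A_H}(H')$, so $A_G(G,T)\le 2A_H(H,T)\le 2\alpha\,\opt(H,T)$.

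It remains to show $\opt(H,T)\le\opt(G,T)$. Apply the first density argument to the sampled multigraphs: for each $S$, the edges of $H'[S]$ pull back injectively into $G'[\pi(S)]$, and $|\pi(S)|\le|S|$. Hence $\maxdensity(H')\le\maxdensity(G')$ pointwise. By \eqref{eq:opt}, $M^*=\lceil\maxdensity\rceil$, so taking expectations under the coupling gives $\opt(H,T)\le\opt(G,T)$. Therefore $A_G(G,T)\le 2\alpha\,\opt(G,T)$.

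\textbf{Main subtlety.} The points needing care are the coupling of the sample distributions and the use of $\maxdensity(H)\ge1/2$ to absorb the ceiling in the left-degree bound. Both are routine.
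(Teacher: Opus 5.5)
Your proof is correct and follows essentially the same route as the paper. You use the same projection and doubling arguments for the density sandwich, the same in-degree bound from the orientation (absorbing the ceiling via $\maxdensity(H)\ge 1/2$ rather than $\maxdensity(G)\ge 1/2$), and the same load-merging map $L_G(u)=L_H(u_L)+L_H(u_R)$. The one cosmetic difference is that you obtain $\opt(H,T)\le\opt(G,T)$ from the pointwise comparison $\maxdensity(H')\le\maxdensity(G')$ together with $M^*=\lceil\maxdensity\rceil$, whereas the paper directly transfers an optimal orientation of $G'$ to $H'$; the two are equivalent.
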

\begin{proof}
Consider any instance $G'$ on $G$, and let $H'$ be the corresponding instance in $H$. 
Then, given any assignment of $H'$ with max-load $c$, the corresponding assignment in $G'$ (assign edges at $u_L$ and $u_R$ to $u$) has max-load $2c$. Conversely, given any assignment of $G'$ with max-load $c$, assigning edge on $u$ to $u_L$ or $u_R$ in $H'$ (depending on the corresponding orientation in $G$) has max-load $c$. 

For any subset $S\subset V$, the density $\rho(G[S]) = 2\rho(H[S_L,S_R])$, for $S_L,S_R$ in $V_L,V_R$ corresponding to $S$. Conversely, for any $S_L\subset V_L,T\subset V_R$, for corresponding $S,T \subset V$, we have $\rho(H[S_L,T_R]) \leq  \rho(G[S\cup T])$. 
By the property of the orientation, $\Delta_L(H)\leq \lceil \maxdensity(G) \rceil \leq  2\maxdensity(G) \leq 4 \maxdensity(H)$.   
\end{proof}

Also note that $\avgdegree(H) = \avgdegree(G)/2.$  
Henceforth, we will assume that the base graph $G=(L,R,E)$ is bipartite and satisfies $\Delta_L(G) \leq 4 \maxdensity(G)$. We call such graph {\em left-degree-bounded}.

Finally, we can assume the following bound on left-degrees, else the problem becomes trivial.
\begin{claim}
\label{cl:smallish-left-degree}
 We can assume that $G$ has maximum left degree $\Delta_L(G) \leq \log n \cdot  \avgdegree(G)$.
\end{claim}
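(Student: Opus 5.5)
We will show that if $\Delta_L(G) > \log n \cdot \avgdegree(G)$, then the trivial algorithm is already $O(1)$-competitive, so such $G$ can be set aside. The trivial algorithm orients every arriving edge toward its left endpoint, so that only left vertices receive load. It therefore suffices to show that in this regime $\opt(G)=\Omega(\log n/\log\!\log n)$. We then check that this algorithm has expected maximum load $O(\log n/\log\!\log n)$, or at worst $O(\log n/\log\!\log n)$ plus $O(\opt(G))$.

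For the lower bound on $\opt$ we combine the preprocessing guarantee $\Delta_L(G)\le 4\maxdensity(G)$ with the density lower bound \eqref{lb:max-density}. If $\Delta_L(G) > \log n\cdot \avgdegree(G)$, then
\[
\opt(G)\;\ge\; \frac{2\maxdensity(G)}{\avgdegree(G)} \;\ge\; \frac{\Delta_L(G)}{2\avgdegree(G)} \;>\; \frac{\log n}{2}.
\]
So $\opt(G)=\Omega(\log n)$.

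For the upper bound, consider a left vertex $u$ of degree $d(u)$. Its load under the trivial algorithm is distributed as $\mathrm{Bin}(n, d(u)/|E|)$, with mean $2d(u)/\avgdegree(G)\le 2\Delta_L(G)/\avgdegree(G)\le 8\maxdensity(G)/\avgdegree(G)\le 4\,\opt(G)$. A standard Chernoff bound together with a union bound over the at most $n$ left vertices gives expected maximum load $O(\mu + \log n/\log\!\log n)$, where $\mu = \max_u \E[L(u)] = O(\opt(G))$. Since $\opt(G)=\Omega(\log n)$ in this regime, the additive $\log n/\log\!\log n$ term is also $O(\opt(G))$. The trivial algorithm is thus $O(1)$-competitive, and in particular $O(\log\!\log n)$-competitive, which is all \Cref{thm:mainArrivalsn} requires.

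The main point needing care is the concentration step: we have to bound the expected maximum, not merely a high-probability bound. We handle this by integrating the tail $\Pr[\max_u L(u) > k]\le n\cdot \Pr[\mathrm{Bin}(n,d(u)/|E|)>k]$ for $k$ beyond $C(\mu+\log n)$, and using the trivial bound $\max_u L(u)\le n$ on the remaining negligible-probability event. Finally, the reduction of \Cref{clm: left-degree-bounded} only costs a factor $2$ in the competitive ratio, so discarding this case is legitimate.
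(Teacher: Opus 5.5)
Your proposal is correct and is essentially the paper's argument: orient every edge to its left endpoint, bound all left sampled degrees by $O(\mu+\log n)$ via Chernoff and a union bound, and compare with $\opt$ through $\Delta_L\le 4\maxdensity$ and the density bound \eqref{lb:max-density}; the only cosmetic difference is that the paper absorbs the additive $\log n$ into $\Delta_L/\avgdegree$, while you absorb it into $\opt=\Omega(\log n)$, which is the same inequality. One small slip is that your intermediate bound $O(\mu+\log n/\log\!\log n)$ is not valid for arbitrary $\mu$; it is harmless here because $\mu = 2\Delta_L/\avgdegree > 2\log n$, and the tail integration from $C(\mu+\log n)$ that you actually carry out is the right statement.
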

\begin{proof}
If $\Delta_L(G) \geq \log n \cdot \avgdegree(G)$,
we claim that simply assigning each arriving edge $e=(u,v)$ to its left-endpoint $u$ is $O(1)$-competitive. 
Indeed, for any sample $G'$ of $n$ edges, by Chernoff bounds, with high probability, every left vertex $u$ has degree $O(d_G(u)/\avgdegree(G) + \log n)
= O(\Delta_L(G)/\avgdegree(G))$.
As $G$ is left-bounded-degree
this is at most $O(\maxdensity(G)/\avgdegree(G))$ which is $O(\opt)$ by Claim \ref{cl:simple-lb-3}.
\end{proof}

 \section{\LogSkew{} and Graph Decomposition}\label{sec:logskew}
As discussed in \Cref{sec:mainResults}, highly imbalanced bipartite subgraphs are a
key obstruction to obtaining sharp bounds on $\opt$. Here, we
formalize this obstruction via \logskew{} and show that it lower bounds $\opt$. We then introduce
\skewbir{} graphs in \Cref{sec:skewbir}, a broad class of imbalanced bipartite
graphs for which Greedy is $O(\log\!\log n)$-competitive. In
\Cref{subsec:graphDecomp}, we describe a procedure to decompose any graph $G$ into
$O(\log\!\log n)$ edge-disjoint \skewbir{} subgraphs.

 \subsection{\LogSkew{} and Lower Bound} \label{subsec: skewed-lb}

A key source of lower bounds for the offline optimum is the presence of sufficiently imbalanced bipartite subgraphs. 
We begin with an instructive example that motivates our definition of \logskew{}. 

\begin{example}[Imbalanced biregular subgraph] \label{example:biregular}
Suppose that the base graph $G$ has average degree $\avgdegree$ and contains a bipartite
biregular subgraph $H=(A,B,E_H)$ whose left degree is $\avgdegree/f$ (for some $f\geq 1$) and whose right
degree is $\avgdegree \cdot f^s$ (see \Cref{fig:imbalanced-bireg}).
We claim that this already implies $\opt(G)=\Omega(s)$ (we sketch this below and refer to  Lemma \ref{lem:skew-lb} for a formal argument). 
Indeed, considering only the arrivals from $H$, the sampled degree of each vertex in
$A$ is roughly distributed as a Poisson random variable $\Poi(1/f)$. Hence, up to lower-order factors, a
$1/f^{\Omega(s)}$ fraction of the vertices in $A$ have sampled degree
$\Omega(s)$. As $H$ is biregular, $|A|=f^{s+1}|B|$, so the number of such
vertices is at least $|B|$. These vertices, together with $B$, therefore induce
a sampled subgraph of density $\Omega(s)$, resulting in $\opt(G)=\Omega(s)$.
\end{example}

Observe that the subgraph $H$ above can have much fewer vertices that $G$ and can be {\em hidden} deep inside $G$. Moreover, different values of $f$ can result in the same $\Omega(s)$ lower bound. 
To capture this quantity $s$, we introduce the following definition of \logskew{}.

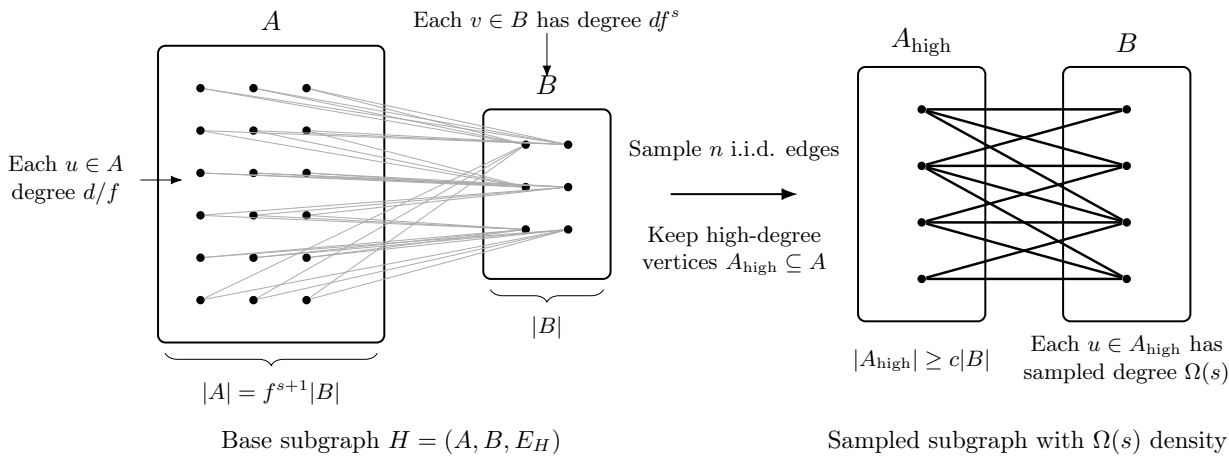
\begin{figure}[t]
\centering
\scalebox{0.94}{
\begin{tikzpicture}[
    x=1cm,y=1cm,
    >=Latex,
    dot/.style={circle, fill=black, inner sep=1.2pt},
    panel/.style={rounded corners=3pt, draw=black, thick},
    lab/.style={font=\small, align=center},
    note/.style={font=\footnotesize, align=center}
]
\node[lab] at (1,-3.5) {Base subgraph $H=(A,B,E_H)$};
\draw[panel] (-2.3,-2.1) rectangle (0.9,2.1);
\draw[panel] (2.3,-1.2) rectangle (4.1,1.2);
\node at (-0.7,2.45) {$A$};
\node at (3.2,1.55) {$B$};

\foreach \y [count=\r] in {1.5,0.9,0.3,-0.3,-0.9,-1.5}{
  \foreach \x [count=\c] in {-1.7,-0.95,-0.2}{
    \pgfmathtruncatemacro{\k}{3*(\r-1)+\c}
    \node[dot] (a\k) at (\x,\y) {};
  }
}

\foreach \y [count=\r] in {0.7,0.1,-0.5}{
  \foreach \x [count=\c] in {2.9,3.5}{
    \pgfmathtruncatemacro{\k}{2*(\r-1)+\c}
    \node[dot] (b\k) at (\x,\y) {};
  }
}

\foreach \u in {1,2,3}{\draw[gray!65, line width=0.35pt] (a\u)--(b1); \draw[gray!65, line width=0.35pt] (a\u)--(b2);}
\foreach \u in {4,5,6}{\draw[gray!65, line width=0.35pt] (a\u)--(b2); \draw[gray!65, line width=0.35pt] (a\u)--(b3);}
\foreach \u in {7,8,9}{\draw[gray!65, line width=0.35pt] (a\u)--(b3); \draw[gray!65, line width=0.35pt] (a\u)--(b4);}
\foreach \u in {10,11,12}{\draw[gray!65, line width=0.35pt] (a\u)--(b4); \draw[gray!65, line width=0.35pt] (a\u)--(b5);}
\foreach \u in {13,14,15}{\draw[gray!65, line width=0.35pt] (a\u)--(b5); \draw[gray!65, line width=0.35pt] (a\u)--(b6);}
\foreach \u in {16,17,18}{\draw[gray!65, line width=0.35pt] (a\u)--(b6); \draw[gray!65, line width=0.35pt] (a\u)--(b1);}

\node[note, anchor=east] at (-2.65,0.2) {Each $u\in A$\\degree $d/f$};
\draw[-Latex, thin] (-2.55,0.2) -- (-1.95,0.2);

\node[note] at (3.2,2.45) {Each $v\in B$ has degree $d f^{s}$};
\draw[-Latex, thin] (3.2,2.28) -- (3.2,1.65);

\draw[decorate,decoration={brace,mirror,amplitude=5pt}] (-2.2,-2.25) -- (0.8,-2.25)
  node[midway,below=7pt,note] {$|A| = f^{s+1}|B|$};

\draw[decorate,decoration={brace,mirror,amplitude=5pt}] (2.4,-1.35) -- (4.0,-1.35)
  node[midway,below=7pt,note] {$|B|$};

\draw[-Latex, thick] (4.95,0.0) -- (6.75,0.0);
\node[note] at (5.85,0.62) {Sample $n$ i.i.d. edges};
\node[note] at (5.85,-0.8) {Keep  high-degree \\ vertices $A_{\mathrm{high}} \subseteq A$ };

\draw[panel] (7.6,-1.8) rectangle (9.4,1.8);
\draw[panel] (10.5,-1.8) rectangle (12.3,1.8);
\node at (8.5,2.15) {$A_{\mathrm{high}}$};
\node at (11.4,2.15) {$B$};

\foreach \y [count=\k] in {1.2,0.4,-0.4,-1.2}{
  \node[dot] (h\k) at (8.5,\y) {};
}

\foreach \y [count=\k] in {1.2,0.4,-0.4,-1.2}{
  \node[dot] (c\k) at (11.4,\y) {};
}

\foreach \u/\v in {1/1,1/2,2/1,2/2,2/3,3/2,3/3,3/4,4/3,4/4,1/3,2/4}{
  \draw[line width=1.0pt] (h\u)--(c\v);
}

\node[note] at (8.5,-2.35) {$|A_{\mathrm{\mathrm{high}}}| \ge c|B|$};
\node[note] at (11.4,-2.35) {Each $u\in A_{\mathrm{high}}$ has\\ sampled degree $\Omega(s)$};
\node[lab] at (10,-3.5) {Sampled subgraph with $\Omega(s)$ density};
\end{tikzpicture}
}
\caption{Imbalanced biregular obstruction motivating log-skewness. A large low-degree side $A$ and a small high-degree side $B$ create, after sampling, a dense core of load $\Omega(s)$.}
\label{fig:imbalanced-bireg}
\end{figure}

\begin{definition}[\LogSkew{}]\label{def:skewness}
Let $G=(V_L,V_R,E)$ be a bipartite base graph on $n$ vertices with average degree $\avgdegree$. 
Let $H = (A, B, E_H)$ be a subgraph of $G$, with  $|A| \ge |B|$, and let
$d_A^{\min} = \min_{u \in A} \deg_H(u)$ be the minimum left-degree  in  $H$. 
We define the \emph{\logskew{}} of $H$ as
\begin{equation}
\label{eq:skew-h}
\sk(H)
:= 
\frac{\log(|A|/|B|)}
{\log(  (\avgdegree \cdot \log^2 n )/d_A^{\min})}.
\end{equation}
The \emph{maximum \logskew{}} of $G$, denoted $\skewness(G)$, is the maximum value of $\sk(H)$ over all subgraphs $H$ of $G$.
\end{definition}

\smallskip

To parse this, let us consider Example \ref{example:biregular} above. Then the numerator in \eqref{eq:skew-h} is $\log (|A|/|B|) = \log f^{s+1} \approx s \log f$, and as the vertices in $A$ have degree $\avgdegree/f$ the denominator is $\approx \log f$ (up to an additive $\log\!\log  n$ term, needed for a technical reason later). So $\sk(H) \approx \log(f^{s+1})/\log f \approx s$.

We now show that $\skewness(G)$ lower bounds $\opt(G)$.

\begin{lemma}\label{lem:skew-lb}
For any base graph $G$, we have $\opt(G)  = \Omega(\skewness(G))$. 
\end{lemma}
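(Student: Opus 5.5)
Let $H=(A,B,E_H)$ be a subgraph of $G$ attaining $\skewness(G)=s$. Write $\delta := d_A^{\min}$ and $F := \avgdegree\log^2 n/\delta$, so that $|A|/|B| = F^{s}$. The plan is to formalize Example~\ref{example:biregular}. In the sample $G'$ we find a set $A_{\mathrm{high}}\subseteq A$ of vertices, each with $\Omega(s)$ sampled $H$-edges, and with $|A_{\mathrm{high}}|\ge |B|$. Then $G'[A_{\mathrm{high}}\cup B]$ has at least $\Omega(s)|A_{\mathrm{high}}|$ edges on at most $2|A_{\mathrm{high}}|$ vertices. Hence $\maxdensity(G')=\Omega(s)$, and by \eqref{eq:opt} this gives $\opt(G)=\Omega(s)$ once it happens with constant probability.

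We may assume $s$ is larger than a big constant, since otherwise \eqref{lb:max-density} already gives $\opt\ge \Omega(1)$. We may also assume $\delta\le \avgdegree\log n$, because by \Cref{cl:smallish-left-degree} all left degrees are at most $\log n\cdot\avgdegree$. Thus $F\ge \log n$, and $s\le \log n/\log F\le \log n/\log\log n$.

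\textbf{Step 1 (trim $H$).} Each $u\in A$ keeps exactly $\delta$ of its $H$-edges, discarding the rest. This changes neither $|A|$, $|B|$ nor the skew. The edge count of a fixed $u$ in the sample is then $X_u\sim\mathrm{Bin}(n,\delta/|E|)$, with mean $\lambda=2\delta/\avgdegree = 2\log^2 n/F$.

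\textbf{Step 2 (single-vertex tail).} Set $k=\lceil c s\rceil$ for a small constant $c$. The point probability bound gives
\[
\Pr[X_u\ge k]\ \ge\ \binom{n}{k}p^k(1-p)^{n-k}\ \gtrsim\ (\lambda/k)^k e^{-\lambda}.
\]
If $\lambda\ge k$ the probability is constant, and in that case we instead use the density lower bound directly. Otherwise $(\lambda/k)^k \ge (2\log^2 n/(Fk))^k\ge F^{-k}$, using $k\le\log n\le \log^2 n$. This is exactly where the $\log^2 n$ in the denominator of the log-skewness is used. Hence $\E|A_{\mathrm{high}}|\ge |A|F^{-cs}e^{-\lambda} = |B|F^{(1-c)s}e^{-\lambda}$. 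For $\lambda$ small this is at least $|B|F^{s/2}\gg |B|$. (If $\lambda$ is large, $\delta\ge\avgdegree\cdot\Omega(\log n)$ and the trivial bound \eqref{lb:max-density} on the density of $H$ handles it.)

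\textbf{Step 3 (concentration).} We need $|A_{\mathrm{high}}|\ge|B|$ with constant probability, not only in expectation. The variables $X_u$ for distinct $u\in A$ count disjoint edge sets under multinomial sampling, so they are negatively associated. The indicators $\mathbf{1}[X_u\ge k]$ are monotone in these variables, so Chebyshev applies: $\mathrm{Var}\le \E$. Since $\E|A_{\mathrm{high}}|\ge F^{s/2}|B|\ge 4$, the count is at least half its mean with probability $\ge 1/2$. Alternatively, we can Poissonize the number of arrivals to make the $X_u$ independent, then de-Poissonize.

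\textbf{Step 4 (conclude).} With probability $\ge 1/2$, take $A_{\mathrm{high}}$ (truncated to size at least $|B|$) together with $B$. All sampled trimmed $H$-edges at these left vertices go into $B$, so the induced density is $\ge k|A_{\mathrm{high}}|/(2|A_{\mathrm{high}}|)=\Omega(s)$. Therefore $\opt(G)\ge \tfrac12\,\Omega(s)$.

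\textbf{Main obstacle.} The delicate step is Step 2: checking that the tail $(\lambda/k)^k$ beats $F^{-s}$ uniformly across all regimes of $\delta$. In particular, the boundary cases where $\lambda$ is near $k$, and where $s$ is comparable to $\log n/\log\log n$, need to be verified. These must be routed to \eqref{lb:max-density} or \eqref{lb:degree} as appropriate.
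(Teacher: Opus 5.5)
Your proposal is correct and follows the paper's argument. Fix the maximizing $H$, lower-bound the binomial tail of each sampled degree in $A$ so that in expectation well over $|B|$ vertices of $A$ reach sampled degree $\Omega(\sk(H))$, concentrate via negative association, and take these vertices together with $B$ as a dense subgraph of $G'$. The differences are cosmetic: you trim to exactly $d_A^{\min}$ edges instead of using stochastic domination, you send the regime where the mean sampled degree is at least $k$ to the density bound \eqref{lb:max-density} instead of capping the mean at a constant, and you use Chebyshev for constant probability instead of a high-probability bound. The only correction needed is the exponent, since $F^{-\lceil cs\rceil}\ge F^{-cs-1}$ rather than $F^{-cs}$, which is harmless because $F\ge\log n$.
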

\begin{proof}
Fix some subgraph $H=(A,B,E_H)$ for which $\skewness(G)=\sk(H)$ and in which the minimum degree of any left vertex in $H$  is $d^{\min}_A$. Let $H'$ denote $H$ restricted to the sample $G'$. 
We will show that  $\E[\maxdensity(H')]=\Omega(\sk(H))$. This will imply  the result as
$\opt(G) =  \E[\maxdensity(G')] \geq \E[\maxdensity(H')]$.

Let $\lambda= \min (1, d^{\min}_A/\avgdegree(G))$.

Since each edge of $G$ is sampled with probability $2/(n\avgdegree(G))$, the degree $\deg_{H'}(u)$ in $H'$ of any vertex $u\in A$ is distributed as $\text{Bin}(n,p)$ with $p = 2d_H(u)/(n\avgdegree(G))\geq 2\lambda/n$. 
For a random variable $X$ distributed as $\text{Bin}(n,2\lambda/n)$, consider the threshold $t$ such that $\Pr[X \geq t] \geq 2|B|/|A|$. 
Standard concentration for the binomial distribution (Fact \ref{fact:binom} with $\delta= 2|B|/|A|$) gives that 
\[t= 
 \Omega\left(\frac{\log (|A|/|B|)}{\log (1/\lambda) + \log\!\log (|A|/|B|)}\right) = \Omega(\sk(H))\] 
 by the definition of $\sk(H)$ and using  $\log\!\log (|A|/|B|) \leq \log\!\log n$.
 
For $u \in A$, let $Z_u$ be the indicator of the event 
$\deg_{H'}(u) \ge t$, and let $Z = \sum_{u \in A} Z_u$.  
By the choice of $t$, we have $\E[Z] \ge 2|B|$.  
As the $Z_u$ are negatively associated (due to the fixed number of samples $n$ in $G'$), see e.g.,~\cite{dubhashi1996balls}, standard
concentration bounds imply that $Z\geq |B|$ with high probability.
But then the subgraph of $H'$ formed by $B$ and the $|B|$ vertices $u$ in $A$ with the largest $\deg_{H'}(u)$ has density $\Omega(t)$. 
\end{proof}

\subsection{\SkewBir{} Graphs and Greedy}\label{sec:skewbir}

As noted in Theorem \ref{thm: greedy-lb}, another complication with irregular graphs is that Greedy can perform very poorly relative to $\opt$. For example, for the graph $G$ in Theorem \ref{thm: greedy-lb}, $\opt(G)=O(\log\!\log n)$ but Greedy has load $\Omega(\log n/\log\!\log n)$.

However, we identify a broad class of highly imbalanced bipartite graphs, which includes \Cref{example:biregular}, where
Greedy turns out to be $O(\log\!\log n)$-competitive. Roughly,
these are bipartite graphs where vertices on one side may have high-degree, but the vertices on the other have correspondingly low-degree. We formalize this below.

For a bipartite graph $H=(L,R,E_H)$, let $\Delta_L(H)$ and $\Delta_R(H)$
denote the maximum left and right degree respectively. 

\begin{definition}[\SkewBir{} Graphs]
Let $G = (L, R,E)$ be an $n$-vertex left-degree-bounded bipartite graph with
maximum density $\maxdensity$ and max-\logskew{} $\skewness(G)$. For a
parameter $f \geq 1$, a subgraph $H$ of $G$ is called \emph{$f$-\skewbir{}}
if,
\begin{align}\label{eq:skewbir-deg}
\Delta_L(H) = O\!\left(\maxdensity/f\right)
\qquad \text{and} \qquad
\Delta_R(H) = O\!\left(\maxdensity \cdot (f \log n)^{2\skewness(G)}\right).
\end{align}
\end{definition}
This is similar to the setting in Example \ref{example:biregular}, except for two minor differences. First, we only upper bound $\Delta_L(H),\Delta_R(H)$. 
Second, we use the maximum density $\maxdensity$ in \eqref{eq:skewbir-deg} instead of $\avgdegree$. This is just for technical convenience (note that $\maxdensity =\Omega(\avgdegree)$ trivially, and $O(\avgdegree \log n)$ by Claim \ref{clm: left-degree-bounded} 
 anyway).

The following lemma shows that  Greedy is $O(\log\!\log n)$-competitive on such graphs.

\begin{lemma}\label{lem:greedy-skewbir}
If \emph{Greedy} is run on the arrivals from any $f$-\skewbir{} subgraph $H$ of
a base graph $G$, then the load contributed by these arrivals to any vertex is $O( \log\!\log n \cdot \opt(G)
)$.
\end{lemma}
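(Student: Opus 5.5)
Write $\rho^*=\maxdensity(G)$, $s=\skewness(G)$ and $\lambda=\rho^*/\avgdegree$. Since $\opt(G)=\Omega(\lambda)$, $\Omega(s)$ and $\Omega(\log n/\log(\avgdegree\log n))$ by Claim~\ref{cl:simple-lb-3} and Lemma~\ref{lem:skew-lb}, it suffices to show that Greedy restricted to the arrivals from $H$ gives every vertex load at most
\[
C\bigl(\lambda+s+\log\!\log n\bigr)
\]
with high probability, for some absolute constant $C$. The tool is a witness tree argument in the style of \cite{azar1994balanced,kenthapadi2006balanced}, adapted to the two-sided degree structure of $H$.

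\textbf{Witness tree.} Suppose some vertex $v$ reaches load $\ell+k$, where $\ell=C'(\lambda+s)$ is a base level and $k=\Theta(\log\!\log n)$. The last edge assigned to $v$ arrived when both of its endpoints had load at least $\ell+k-1$. Recursing on both endpoints yields a witness structure. As usual, prune it to a tree of depth $k$ in which each internal node has $\Theta(1)$ children, and whose leaves have load at least $\ell$. Because $H$ is bipartite, the tree alternates between $L$ and $R$.

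\textbf{Counting embeddings.} Each edge of $H$ appears in the sample with multiplicity roughly $\Poi(1/(n p_e))$-scale, with expected count $2/\avgdegree$. Each tree edge is a sampled edge, so it contributes a factor $2/\avgdegree$, while the number of ways to extend the tree at a vertex is bounded by its degree in $H$. Going from a left vertex to a right vertex gives a factor at most $\Delta_L(H)\cdot 2/\avgdegree=O(\lambda/f)$. Going from a right vertex to a left vertex gives a factor at most $\Delta_R(H)\cdot 2/\avgdegree=O(\lambda (f\log n)^{2s})$.

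Over two consecutive levels the product is $O(\lambda^2 f^{2s-1}(\log n)^{2s})$, which need not be small. The compensating factor must come from the leaves. Each leaf has load at least $\ell$, which requires $\ell$ sampled edges at that vertex.

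\textbf{Leaf probabilities.} For a left leaf, the number of $H$-edges sampled at it is $\mathrm{Bin}$ with mean $O(\lambda/f)$. So the probability of reaching $\ell$ is at most
\[
(e\lambda/(f\ell))^{\ell}\le (1/(f\log n))^{\Omega(\ell)} \qquad\text{once } \ell\gg\lambda.
\]
Choosing $\ell\ge C'(s+\lambda)$ makes this at most $(f\log n)^{-4s}\cdot 2^{-\ell}$. This dominates the right-to-left branching factor $(f\log n)^{2s}$ that was paid to reach the leaf.

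For right leaves, their large degree makes a load of $\ell$ likely, so we do not use them directly. Instead, we prune the tree so that every leaf is a left vertex: a right vertex at load $\ge\ell$ in Greedy had, at the relevant time, its last $\ell/2$ assigned edges come from left neighbors that also had load $\ge\ell/2$. We extend one more level to those left vertices.

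\textbf{Combining.} With $2^k\approx \log n$ leaves, each contributing probability $n^{-\Omega(1)/\log n}$-ish, amortized per leaf as $2^{-\Omega(\ell)}$, the total expected number of witness trees is at most $n\cdot 2^{-\Omega(2^k)}\le n^{-1}$ for $k=\log\!\log n+O(1)$. A union bound over the root then finishes the argument. The edge-multiplicity bound absorbs the case where $\avgdegree$ is tiny.

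\textbf{Main obstacle.} I expect the hardest step to be handling dependencies: the repeated vertices and edges in the tree (cycles), and the fact that the Greedy history couples the events. I would follow \cite{kenthapadi2006balanced}. Either restrict to trees whose vertices are distinct, showing that the presence of many cycles within radius $k$ is unlikely given the degree bounds, or bound overlapping configurations by their own, smaller count. The other delicate point is balancing the $(f\log n)^{2s}$ right-degree blowup exactly against the left-leaf tail. This is where the exponent $2\skewness(G)$ in \eqref{eq:skewbir-deg}, together with the $\log^2 n$ inside Definition~\ref{def:skewness}, must be used precisely.
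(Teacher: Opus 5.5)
Your route is genuinely different from the paper's. You run an ABKU/KP-style witness tree of depth $\Theta(\log\log n)$ on Greedy's own heights and let leaf tails pay for the right-degree blow-up. The paper instead deletes a threshold number of sampled edges at every left vertex and shows that the rest of the sample splits into components of size $\poly(\log n)$; this is the subtree count that Lemma~\ref{lem: cc} carries out for Threshold-Greedy. It then applies the worst-case $O(\log N)$ bound of Fact~\ref{fact:greedy} inside a component.

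As written, two of your steps fail. First, the leaf tail is miscomputed. From $\ell\ge C'\lambda$ you only get $(e\lambda/(f\ell))^{\ell}\le (e/(C'f))^{\ell}$, whose base is a constant when $f=O(1)$, not $(f\log n)^{-\Omega(1)}$. With $\ell=C'(\lambda+s)$ a leaf therefore pays only $(C'f/e)^{-C'(\lambda+s)}$. That cancels the right-to-left factor $(f\log n)^{2s}$ only if $f\ge(\log n)^{\Omega(1)}$; for bounded $f$ the $(\log n)^{2s}$ part is never compensated. The base level has to be $\ell=\Theta\bigl((\lambda+s)(1+\log\log n/\log f)\bigr)$. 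This is exactly what the factor $(\log\log n/2^i+1)$ in the thresholds $\alpha_i$ encodes: it makes $f_i^{\alpha_i}=(f_i\log n)^{c(\lambda+s)}$. The corrected level still fits in $O(\log\log n\cdot\opt(G))$, but your additive bound $C(\lambda+s+\log\log n)$ is not supported.

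Second, and more seriously, the dependency issue you flag is the crux, and your first remedy is not available. Restricting to witness trees on distinct vertices requires excluding cycles near every root. But a fixed configuration on $m$ vertices carries, in expectation, up to $2m^2/\avgdegree$ extra sampled edges, so a union bound over $n$ roots is useless since $\avgdegree\le n$. Concretely, take disjoint copies of $K_{d,d}$ with $d=n^{\epsilon}$, which is an $O(1)$-\skewbir{} graph. Its sample contains many $4$-cycles through left vertices of sampled degree $\ge\ell$. A KP-style fix would have to tolerate $O(\log n/\log\avgdegree)$ excess edges per witness and charge the lost height to the edge-multiplicity bound~\eqref{lb:degree}. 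None of this is set up, and it is where the real work lies.

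The paper sidesteps it. A component of size $N$ contains a spanning tree on distinct vertices. So counting subtrees, with each left vertex also carrying its threshold of sampled edges as in the event $B_T$, bounds $N$ regardless of cycles. Any dense or cyclic structure inside a component is then absorbed by the max-density factor in Greedy's adversarial guarantee.

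What links this to plain Greedy is the following. An edge placed at height $h>\ell$ found its left endpoint already at load $\ge\ell$, so it is not among the first $\ell$ sampled edges there. Hence the layering sets $S_j$, $j\ge\ell$, of the standard $O(\log N)$ argument stay inside one $\poly(\log n)$-size component of the pruned sample. This gives load $\ell+O(\maxdensity(G')\log\log n)$, and that $\log N$ plays the role of your depth-$\log\log n$ witness tree.
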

We only sketch the argument  in the special case where $H$ is
biregular, with left degrees $\avgdegree/f$ and right degrees $\avgdegree \cdot f^s$, since the same ideas are developed more fully in the general analysis in \Cref{sec: algo}, specifically in the proof of \Cref{lem: cc}.

We will need the following classical fact about \emph{Greedy}.

\begin{fact}[\cite{ABK-FOCS92}]
\label{fact:greedy}
On any $N$-vertex graph, Greedy is $O(\log N)$-competitive for any arbitrary
(adversarial) sequence of edge arrivals.
\end{fact}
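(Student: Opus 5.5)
The plan is the classical layering (halving) argument of~\cite{ABK-FOCS92}, run directly on the arrival sequence. Fix an arbitrary sequence of edge arrivals forming a multigraph $H$ on the $N$ vertices. Let $M^* = M^*(H) = \lceil \maxdensity(H)\rceil$ be the offline optimum (Hakimi's characterization from \Cref{sec: understanding-opt}). Let $L(v)$ be the final Greedy load of $v$. For $j\ge 0$ let $W_j$ be the set of vertices with $L(v)\ge j$, so that $W_0 \supseteq W_1 \supseteq \cdots$. The goal is to show that $|W_j|$ halves every $O(M^*)$ levels. Since $|W_0|\le N$, this makes $W_j$ empty for $j = O(M^*\log N)$, which gives the fact.

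\textbf{Step 1 (a local charging property of Greedy).} Say an edge oriented into $v$ has \emph{level} $i$ if it raised $L(v)$ from $i-1$ to $i$. When such an edge arrived, Greedy chose the endpoint of lesser load. So the other endpoint $u$ had load at least $i-1$ at that time, and since loads never decrease, its final load also satisfies $L(u)\ge i-1$. Hence every edge of level $i\ge j$ has both endpoints in $W_{j-1}$.

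\textbf{Step 2 (counting against density).} Each $v\in W_j$ received exactly $L(v)-j+1$ edges of level at least $j$. Summing over $v$,
\[
\#\{\text{edges of level}\ \ge j\} \;=\; \sum_{v\in W_j}\bigl(L(v)-j+1\bigr) \;=\; \sum_{i\ge j}|W_i| .
\]
By Step 1 all these edges lie inside $H[W_{j-1}]$. By the definition of max-density, $H[W_{j-1}]$ has at most $\maxdensity(H)\,|W_{j-1}| \le M^*|W_{j-1}|$ edges. Therefore
\[
\sum_{i\ge j}|W_i| \;\le\; M^*\,|W_{j-1}| .
\]

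\textbf{Step 3 (halving).} Keep only the $2M^*$ terms $i=j,\dots,j+2M^*-1$ of the sum. Each of these terms is at least $|W_{j+2M^*-1}|$ because the sets are nested, so $2M^*|W_{j+2M^*-1}| \le M^*|W_{j-1}|$. This gives $|W_{j-1+2M^*}| \le |W_{j-1}|/2$. Iterating from $j-1=0$ shows $|W_{2kM^*}|\le N/2^k$, so $W_{2M^*(\lceil\log N\rceil+1)}=\emptyset$. Hence the maximum Greedy load is at most $2M^*(\log N+2) = O(\log N)\cdot M^*(H)$.

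The only step requiring care is Step 1. The subtle point there is that the comparison is with the other endpoint's load at the time of arrival, and monotonicity of loads transfers this to the final sets $W_{j-1}$. Once that is in place, the rest is pure counting. Since the bound holds for every sequence, it holds pointwise and hence also in expectation.
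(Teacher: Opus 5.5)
Your proof is correct. The paper gives no proof of this fact and simply cites \cite{ABK-FOCS92}, and your argument is the classical layering/halving argument from that reference (level sets $W_j$, the observation that every level-$\ge j$ edge lies inside $W_{j-1}$, and halving every $2M^*$ levels), phrased via the easy inequality $|E(H[S])|\le \maxdensity(H)|S|\le M^*|S|$ instead of via an optimal assignment. Your bound is also pointwise in terms of the number $N$ of vertices Greedy actually runs on, which is exactly what the paper needs when it applies this fact to the $O(\log^2 n)$-size components of $\ggreedy$.
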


\begin{proof}[Proof sketch for \Cref{lem:greedy-skewbir}]
The key observation is that if we delete the first $O(s)$ edges incident to each left vertex, the sampled
subgraph formed by arrivals from $H$ shatters into connected components of size
$N=O(\poly(\log n))$ with high probability. 
The reason is that each right vertex has about $f^s$ sampled neighbors in
expectation, while each left vertex has sampled degree distributed as
$\Poi(1/f)$ and thus it survives the above pruning with probability at most
$f^{-\Omega(s)}$. So the expected number of surviving neighbors of a right
vertex is at most $O(f^s)\cdot f^{-\Omega(s)} \ll 1$. A standard
branching-process argument now shows the claimed bound on the component sizes.

By \Cref{fact:greedy}, the remaining
edges therefore contribute only $O(\log\!\log n \cdot \opt(G))$ load, and the
deleted edges contribute only $O(s)$ additional load, which by the \logskew{} lower bound of 
\Cref{lem:skew-lb} is $O(\log \!\log n \cdot \opt(G))$.
\end{proof}

\subsection{Decomposition into \SkewBir{} Graphs via \LogSkew{}} 
\label{subsec:graphDecomp}
We now show our key decomposition lemma that any left-degree-bounded bipartite graph~$G$ can be decomposed into only $O(\log\!\log n)$ \skewbir{} subgraphs. 
 The resulting decomposition will be the key structural input to our algorithm in \Cref{sec: algo}.

\begin{lemma}[Decomposition into \SkewBir{} Graphs]
\label{lem: graph-decomp}
Let $G$ be an $n$-vertex left-degree-bounded bipartite graph. Then the edges of
$G$ can be partitioned in almost-linear time into $\numclasses = O(\log\!\log n)$ subgraphs
$G_1,\ldots,G_\numclasses$, where each $G_i$ is $2^{2^i}$-\skewbir{}.
\end{lemma}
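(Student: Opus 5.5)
We will build the decomposition by iterative peeling over doubly-exponential scales $f_i = 2^{2^i}$, $i=1,\dots,\numclasses$, with $\numclasses = O(\log\!\log n)$ chosen so that $f_\numclasses \geq \maxdensity \geq \Delta_L(G)/4$. Write $\maxdensity=\maxdensity(G)$ and $s=\skewness(G)$. We first fix the left-degree requirement. For each scale $i$, let $L_i$ be the set of left vertices whose current degree lies in $(\maxdensity/f_{i+1}, \maxdensity/f_i]$; for $i=1$ this means degree in $(\maxdensity/f_2, 4\maxdensity]$. Every edge will be assigned to a piece $G_j$ with $j \leq i$, where $i$ is the class of its left endpoint. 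Since left degrees only decrease when edges are removed, assigning an edge to a piece with index at most its class keeps $\Delta_L(G_j) = O(\maxdensity/f_j)$. We also need the last scale to absorb everything: once $f_\numclasses \geq 4\maxdensity$, the class-$\numclasses$ left vertices have degree $O(1)$. Then $\Delta_L \leq \maxdensity/f_\numclasses$ holds up to the $O(\cdot)$, since $\maxdensity\geq 1/2$.

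The real constraint is the right degree, $\Delta_R(G_i)=O(\maxdensity (f_i\log n)^{2s})$. We run the peeling from $i=1$ upward. At stage $i$, consider the remaining edges whose left endpoint has degree at least $\maxdensity/f_{i+1}$, i.e.\ left classes $\leq i$. At each right vertex $v$, keep up to $D_i := C\maxdensity(f_i\log n)^{2s}$ of these edges and put them into $G_i$. Edges beyond this cap are deferred to a later stage $j>i$, where the allowed right degree is larger. This is consistent with the left constraint only because $f_j > f_i$ means $\maxdensity/f_j$ is smaller. So a deferred edge whose left endpoint still has degree $>\maxdensity/f_j$ cannot simply go to $G_j$.

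The key claim that makes the greedy cap valid is that the cap is essentially never exceeded. Suppose some set $B$ of right vertices each has more than $D_i$ incident edges from left vertices of degree at least $d:=\maxdensity/f_{i+1}$. Let $A$ be the union of their left neighbours, restricted to the subgraph $H$ formed by these edges, so $d_A^{\min}\geq d$ after a standard pruning. Counting edges gives $|B| D_i \leq |E_H| \leq |A|\cdot 4\maxdensity$, since $G$ is left-degree-bounded. Hence
\[
|A|/|B| \geq D_i/(4\maxdensity) = (C/4)(f_i\log n)^{2s}.
\]
Using $\avgdegree \leq 2\maxdensity$ and $f_{i+1}=f_i^2$, the denominator of $\sk(H)$ is at most $\log(2 f_i^2\log^2 n)\le 2\log(f_i\log n)+1$. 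The ratio $\sk(H)$ would therefore exceed $s$, contradicting the maximality of $\skewness(G)$. The factor $2$ in the exponent $2s$ is exactly what absorbs the squaring $f_{i+1}=f_i^2$. So after sorting edges by left class, all edges at right vertex $v$ with left class $\le i$ fit under the cap, and each such edge is placed into $G_{c(e)}$, where $c(e)$ is its left class. The right degree of $G_i$ is then at most the number of such edges, which is bounded as above.

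The main obstacle is making this rigorous. The minimum-degree condition in $\sk(H)$ and the requirement $|A|\ge|B|$ both need pruning, which repeatedly removes left vertices of degree below $d/2$ inside $H$. The subtlety is that $H$ uses only the edges going to $B$, so a left vertex's degree inside $H$ may be small even though its degree in $G$ is large. To handle this, I would define the bad set via $H$-degrees and argue by an averaging or peeling step that a subgraph with $|A|/|B|$ still about $D_i/\maxdensity$ survives, at the cost of constants and the $\log n$ slack in the exponent. Almost-linear running time then follows because classifying vertices, sorting edges, and capping are all linear, and the pruning is only used in the analysis.
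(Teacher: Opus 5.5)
Your proposal has a genuine gap. The key claim is false: it says every right vertex has at most $D_i$ incident edges from left vertices of degree at least $\maxdensity/f_{i+1}$. The scheme of placing each edge into $G_{c(e)}$ fails with it.

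Here is a counterexample. Let $L$ have $a=\Theta(n)$ vertices, let $R=\{v\}\cup R'$ with $|R'|=a$, join every $u\in L$ to $v$, and let $G[L,R']$ be $(d-1)$-regular with $d=n^{0.1}$. Then $\avgdegree\approx d$, $\maxdensity\approx d/2$, and $G$ is left-degree-bounded. Also $\skewness(G)=O(1)$:
\begin{itemize}
\item any subgraph whose left minimum degree is $\delta\ge 2$ has $|A|/|B|=O(d/\delta)$, by regularity;
\item any subgraph with $\delta=1$ has $\sk\le \log a/\log(\avgdegree\log^2 n)\approx 10$.
\end{itemize}
Every left vertex has degree $d\approx 2\maxdensity$, so all of them lie in your class $1$. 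Hence $v$ receives $\Theta(n)$ class-$1$ edges, whereas $D_1=C\maxdensity(4\log n)^{2s}=n^{0.1}\poly(\log n)$.

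So the obstacle you flag is not a pruning technicality. In the subgraph $H$ of edges into the overloaded set $B$, every left vertex may have $H$-degree $1$, and pruning deletes everything. With $d_A^{\min}=1$, the bound $\sk(H)\le\skewness(G)$ allows $|A|/|B|$ up to $(\avgdegree\log^2 n)^{\skewness(G)}$. That is far above $(f_i\log n)^{2\skewness(G)}$ whenever $f_i^2\ll\avgdegree$.

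The missing idea is that edges must be assigned to pieces by choice, not dictated by the left endpoint's class. Log-skewness should certify expansion (Hall's condition), not a per-vertex cap. The paper does this as follows.
\begin{itemize}
\item In round $i$ it repeatedly deletes a $b_i$-matching, with $b_i=(f_i\log n)^{2\skewness(G)}$. This matching gives one edge to every left vertex whose residual degree exceeds $16\maxdensity/f_i^2$, and at most $b_i$ edges to each right vertex.
\item Hall's condition is checked on $F=(U',N(U'))$ with \emph{all} residual edges of $U'$. The min-degree requirement in $\sk(F)$ then holds automatically, and $\sk(F)\le\skewness(G)$ rearranges to $|N(U')|\ge|U'|/b_i$.
\item At most $16\maxdensity/f_i$ matchings are peeled per round, so $\Delta_R(G_i)\le 16\maxdensity\, b_i/f_i$.
\item The residual left degree drops to $16\maxdensity/f_i^2=16\maxdensity/f_{i+1}$, which feeds the next round. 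One max-flow per round gives the running time.
\end{itemize}
In the example above, each early-round matching may use at most $b_i$ edges at $v$ and routes the rest through $R'$. The edges at $v$ are left for late rounds, where $b_i$ is large. Your deferral instinct was right, but it only works when each left vertex is free to shed edges toward unsaturated right vertices, which is exactly what the $b$-matching provides.
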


We first sketch the idea before giving the formal proof. A key observation is that bounded \logskew{} implies a vertex expansion property: any set of sufficiently high-degree left vertices must have a correspondingly large neighborhood. We exploit this expansion to iteratively extract edge-disjoint \skewbir{} subgraphs at various scales of the parameter $f$. More concretely, for each scale $f$ we consider left vertices whose degree is still too large to belong to an $f$-\skewbir{} piece, and peel off incident edges while ensuring that no right vertex receives too many of them. Finally, it suffices to consider only doubly exponential scales of $f$, leading to $O(\log\!\log n)$ pieces.

\begin{proof}
For each $i \in [\numclasses]$, let $f_i := 2^{2^i}$.
We will decompose $G$ by extracting, in round $i$, a subgraph $G_i$ that will
be shown to be $f_i$-\skewbir{}.

Let $H_0 := G$, and for each $i \ge 1$ let $H_{i} = H_{i-1}\setminus G_i$ denote the residual graph after round $i$.
In each round $i$, we will ensure that the subgraph $G_i$ extracted from $H_{i-1}$ satisfies:
\begin{enumerate}[label=(\roman*)]
    \item $\Delta_L(G_i) \leq 16\maxdensity/f_{i}$ and  $\Delta_R(G_i) \leq 16 \maxdensity(f_{i}\log n)^{2\skewness(G)}$.
    \item The residual graph $H_{i} := H_{i-1} \setminus G_i$ satisfies $\Delta_L(H_{i}) \leq   16\maxdensity/f_{i}^2 = 16\maxdensity/f_{i+1}$. 
\end{enumerate}
Property  (i) simply ensures that
each $G_i$ satisfies the degree 
bounds in \eqref{eq:skewbir-deg} and is $f_i$-\skewbir{}. Property (ii) will be useful in the proof below.
Also, as $f_i=2^{2^i}$, and the left degree of $H_{i}$ is at most $16\maxdensity/f_{i+1} \leq 16n/f_{i+1}$, there are at most $\numclasses = O(\log\!\log n)$ rounds.

We now show how to extract $G_i$  in round $i$, while satisfying (i) and (ii). 
Assume inductively that $\Delta_L(H_{i-1}) \leq 16\maxdensity/f_{i}$ at the beginning of round $i$, and note that the base  
case holds for $i=1$ as  $\Delta_L(H_0) = \Delta_L(G) \leq 16 \maxdensity/f_1 = 4\maxdensity$.

Let $b_i := (f_i\log n)^{2\skewness(G)}$. To extract $G_i$, repeat the following:
\begin{enumerate}
    \item Let $U$ be the set of left vertices with degree more than $16\maxdensity/f_i^2$. If $U = \emptyset$ terminate.
\item
     Else, find a $b_i$-matching between $U$ and the right vertices (i.e., every vertex in $U$ has degree $1$ and all right vertices have degree at most $b_i$) and delete it.
\end{enumerate}
Let $G_i$ be the union of all deleted $b_i$-matchings. 

Assuming that Step 2 above always finds a $b_i$-matching whenever $U\neq\emptyset$,
we claim that conditions (i) and (ii) also hold after round $i$.
Indeed, (ii) holds trivially as each left-degree is at most $16\maxdensity/f_i^2$ when $U=\emptyset$ and thus $H_{i} = H_{i-1}\setminus G_i$ satisfies $\Delta_L(H_{i}) \leq 16\maxdensity/f_i^2 = 16\maxdensity/f_{i+1}$. 
To see (i), observe that by the induction hypothesis $\Delta_L(H_{i-1}) \leq 16\maxdensity/f_i$, and the maximum left-degree decreases by $1$ each time a $b_i$-matching is deleted. Thus at most 
$16\maxdensity/f_i$ matchings can be deleted, and we have 
\[\Delta_L(G_i) \leq 16\maxdensity/f_i \text{ and } \Delta_R(G_i) \leq (16\maxdensity/f_i) \cdot b_i \leq 16\maxdensity \cdot (f_i \log n)^{2 \skewness(G)}.\]

It remains to show that there always exists a $b_i$-matching incident to $U$ whenever $U\neq\emptyset$. To this end, we will use 
the \logskew{} property of $G$ to show
that Hall's condition holds --- that for every subset $U' \subseteq U$, its neighborhood $N(U')$ in $H_{i-1}$ has cardinality at least $|U'|/b_i$. Indeed, consider the subgraph $F$ induced by $(U',N(U'))$. By definition, as every vertex in $U'$ has degree at least $16\maxdensity/ f_i^2 \geq \avgdegree/f_i^2$ (as $\maxdensity \geq \avgdegree/2$). As $\skewness(G)$ is at least the \logskew{} $\sk(F)$ of $F$, plugging in the definition of $\sk(F)$ in \eqref{eq:skew-h} gives
\[ \skewness(G) \geq \frac{\log(|U'|/|N(U')|)}{\log(f_i^2 \log^2 n)}, \]
which upon rearranging gives that  
$
|N(U')| 
      \ge |U'|/(f_i\log n)^{2\skewness(G)} = |U'|/b_i$ as desired.

Finally, note that we can accomplish this decomposition in almost-linear time. First, observe that we can assume $\skewness(G)$ is given, as we can guess it up to a small constant factor. Second, we can compute each $G_i$  using a single max $s-t$ flow computation (rather than iteratively deleting $b_i$-matchings): connect every right vertex of  $H_{i-1}$   to $t$ with edge capacity $16 \maxdensity \cdot (f_i \log n)^{2 \skewness(G)}$, every left vertex $u$ in  $H_{i-1}$  to $s$ with edge capacity equal to $\max\{\deg_{H_{i-1}}(u) - 16\maxdensity/f_i^2, 0\}$, and all edges of $H_{i-1}$ have unit capacities.
\end{proof}

\section{Threshold Greedy Algorithm and Analysis}
\label{sec: algo}
The results of the previous section already yield an
$O((\log\!\log n)^2)$-competitive algorithm: one can run
$O(\log\!\log n)$ independent copies of \emph{Greedy}, one on each
\skewbir{} piece (\Cref{lem:greedy-skewbir,lem: graph-decomp}). In \Cref{subsec: t = n}, we describe an algorithm that improves upon this and
obtains the optimal $O(\log\!\log n)$ competitive ratio for $T=n$ arrivals. In Section \ref{sec: reductions}, we extend this algorithm and the analysis to general $T$, using relatively straightforward ideas.

Throughout this section we use the following notation.
We are given a base graph $G$ on $n$ vertices with average degree $\avgdegree$, maximum density $\maxdensity$, and max-\logskew{} $\skewness(G)$. 
By \Cref{clm: left-degree-bounded}, we may assume that $G = (L, R, E)$ is a  bipartite graph with maximum left degree $\Delta_L\leq  4\maxdensity$. We refer to the vertices in $L$ as left and in $R$ as right vertices. 

\subsection{Algorithm for \texorpdfstring{$T=n$}{T=n}}\label{subsec: t = n}

We begin by applying \Cref{lem: graph-decomp} to decompose the base graph $G$
into $\numclasses = O(\log\!\log n)$ edge-disjoint \skewbir{} graphs
$G_1,\ldots,G_\numclasses$. For each $i\in[\numclasses]$, the graph $G_i$ is a
$2^{2^i}$-\skewbir{} graph. We call an edge $e$ in $G$ a \emph{class-$i$} edge
if $e\in G_i$.

Threshold-Greedy is formally defined in \Cref{alg:threshGreedy}. For each left
vertex $u$, it assigns to $u$ the first $\alpha_i$ class-$i$ edges incident to
it; these are the \emph{threshold edges}. All remaining edges are then assigned
using a greedy rule based only on the load induced by non-threshold edges.

The key idea behind this algorithm is the following. For each class $i$, the \skewbir{} structure of $G_i$ implies that, after deleting the threshold number of
class-$i$ edges incident to each left vertex, the corresponding witness tree
formed by the remaining class-$i$ edges dies out quickly. Crucially, this same
witness-tree decay continues to hold even when the remaining arrivals from all
classes are \emph{combined}, which in turn allows us to run a \emph{single}
greedy algorithm on all the remaining arrivals.

\begin{algorithm}
\caption{Threshold Greedy Algorithm} \label{alg:threshGreedy}
\smallskip
    For each class $i \in [\numclasses]$, define a threshold $\alpha_i$\footnotemark\begin{align}\label{eq: threshold}
        \alpha_i := \Theta\!\left( 
        \left(\frac{\maxdensity}{\avgdegree} + \skewness(G)\right)
        \left( \frac{\log\!\log n}{2^i}+1\right)
        \right).
    \end{align}

    At each time step,
    upon arrival of an edge $e = (u,v)$ with $u\in L$ and $v\in R$:
    
    \medskip
    
    \begin{enumerate}
        \item \emph{Threshold rule.} For $i \in [\numclasses]$, let $\ell_i(u)$ denote the load on $u$ due to class-$i$ edges.  If $e$ is a class-$i$ edge and $\ell_{i}(u) < \alpha_i$, assign $e$ to $u$. 
     Such an edge is a \emph{threshold edge}.
    
     \medskip
        
        \item \emph{Greedy rule.} 
 Otherwise, for a vertex $w \in L \cup R$ let $\ell_g(w)$ denote the total load on $w$ due to non-threshold edges of all classes.      
        If $\ell_{g}(u) < \ell_{g}(v)$, then assign $e$ to $u$, else assign $e$ to $v$. Such an edge is called a \emph{greedy} edge. 
   \smallskip
    \end{enumerate}
\end{algorithm} 
\footnotetext{Strictly speaking, the term $(\log\!\log n )/ 2^i$ is meaningful only for indices $i \le \log\!\log\!\log n$, but we use this expression for all $i$ for notational convenience. For larger $i$, the additive constant in \eqref{eq: threshold} dominates.}

\subsection{Analysis}\label{subsec: analysis}

We now prove the following result.
\begin{theorem}
\label{thm: competitive-ratio}
Threshold Greedy is $O(\log\!\log n)$-competitive 
    for $T=n$ arrivals.
\end{theorem}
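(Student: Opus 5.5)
We split the load on each vertex into threshold load and greedy load and bound them separately against $\opt(G)$.

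\textbf{Threshold load.} Only left vertices receive threshold edges. For a left vertex $u$ this load is at most $\sum_{i\le h}\alpha_i$. Since $\sum_i (\log\!\log n)/2^i = O(\log\!\log n)$ and $h=O(\log\!\log n)$, \eqref{eq: threshold} gives
\[\sum_{i\le h}\alpha_i = O\bigl((\maxdensity/\avgdegree+\skewness(G))\log\!\log n\bigr).\]
By \Cref{cl:simple-lb-3} and \Cref{lem:skew-lb}, both $\maxdensity/\avgdegree$ and $\skewness(G)$ are $O(\opt(G))$, so the threshold load is $O(\log\!\log n\cdot\opt(G))$.

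\textbf{Greedy load.} Let $G'_g$ be the sampled graph of greedy edges, i.e.\ arrivals surviving the per-class thresholds. Greedy on $G'_g$ sees a fixed arrival sequence and runs exactly the classical Greedy. So by \Cref{fact:greedy} applied per connected component, if every component of $G'_g$ has $N=\poly\log n$ vertices with high probability, the greedy load is $O(\log\!\log n\cdot \maxdensity(G'_g))$. This is $O(\log\!\log n\cdot \opt(G))$ in expectation, since $\maxdensity(G'_g)\le\maxdensity(G')$. On the low-probability failure event we bound the load crudely: all loads are at most $O(\log n)$ times the optimum by \Cref{fact:greedy}, plus the thresholds, and with failure probability $1/\poly(n)$ this contributes negligibly to the expectation.

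\textbf{Component sizes (the main obstacle).} We would show the shattering by a witness-tree / branching-process argument over all classes simultaneously. Explore a component from a vertex, alternating between right and left vertices. Fix a right vertex $v$ and a class $i$. By \eqref{eq:skewbir-deg}, $v$ has at most $O(\maxdensity (f_i\log n)^{2\skewness(G)})$ class-$i$ neighbors, each joined by an edge sampled about $n\cdot 1/|E| = 2/\avgdegree$ times in expectation. A class-$i$ edge $(u,v)$ survives as a greedy edge only if $u$ already has at least $\alpha_i$ other class-$i$ sampled edges. Since $\deg_{G_i}(u)=O(\maxdensity/f_i)$, the number of such edges is roughly $\Poi(O(\maxdensity/(\avgdegree f_i)))$, and the probability that it is at least $\alpha_i$ is at most
\[\bigl(O(\maxdensity/(\avgdegree f_i))/\alpha_i\bigr)^{\alpha_i}.\]
Choosing the constant in $\alpha_i$ large, and using $\alpha_i\gtrsim \maxdensity/\avgdegree$ to absorb the density ratio, this is at most $(f_i\log n)^{-c\,\skewness(G)}\cdot f_i^{-\Omega(\alpha_i)}$. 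The extra $(\log\!\log n)/2^i$ term in $\alpha_i$ is needed for small $i$: with $\log f_i = 2^i$, it makes $f_i^{-\alpha_i}\le (\log n)^{-\Omega(1)}$, which kills the $\log n$ factors.

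Multiplying by the number of neighbors, the expected number of class-$i$ surviving children of $v$ is at most $(\log n)^{-\Omega(1)}\cdot 2^{-2^i}$. Summing over $i$ still gives a constant well below $1$. A left vertex $u$ has $O(\maxdensity/\avgdegree)$ expected sampled right-neighbors; conditioned on being reached, this is multiplied by $\alpha_i$, which the decay above still absorbs.

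Hence the combined alternating process is subcritical with expected offspring at most $1/2$. A standard witness-tree union bound over trees of size $k=\Theta(\log n)$ then shows that no component exceeds $\poly\log n$ vertices with high probability. The delicate points are the dependencies: survival requires extra edges at $u$, which are shared across the exploration, and the draws have a fixed total count. We would handle these as in the witness-tree arguments of \cite{kenthapadi2006balanced}, by Poissonizing the sample and charging each left vertex's required $\alpha_i$ extra edges to the tree, so that the probability of a given witness tree factorizes.
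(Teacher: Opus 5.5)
\textbf{Gap: the union bound over witness trees does not control right vertices.}

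Your outline follows the paper. You split the load into threshold and greedy parts, bound the first by $\sum_i \alpha_i = O(\log\!\log n \cdot \opt)$, and bound the second via small components of $\ggreedy$ plus \Cref{fact:greedy}. For the components you use the paper's charging: a left vertex entered by a class-$i$ edge must carry $\alpha_i$ further class-$i$ sampled edges. You also use its class-wise count through $\Delta_R(i)$, so that $f_i^{-\alpha_i}$ cancels the $(f_i\log n)^{2\skewness(G)}$ factor.

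The gap is the last step. There you turn the subcritical mean computation into a high-probability bound by a union bound over all witness trees of size $k=\Theta(\log n)$. In your accounting only left vertices carry decay. Every right vertex hanging below a left vertex costs a factor of about $8\maxdensity/\avgdegree \ge 4$, up to $\Theta(\log n)$, and nothing pays for it.

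For example, take a right root with one class-$i$ left child that has $k-2$ right children. Its count-times-probability term is $n\cdot\bigl(\Delta_R(i)\cdot\tfrac{2}{\avgdegree}\cdot\Pr[\text{threshold}]\bigr)\cdot(8\maxdensity/\avgdegree)^{k-2}$, which is not small. Your claim that the alternating process has expected offspring at most $1/2$ controls only the product of the two mean offspring numbers. It says nothing about trees dominated by right vertices. Separately, the $\alpha_i$ threshold edges at a reached left vertex are not edges of $\ggreedy$. They should enter only as the charged event, never as offspring, so "multiplied by $\alpha_i$" conflates the two.

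\textbf{How the paper closes it.} The paper adds a reduction, \Cref{clm: left-heavy}. By \Cref{cl:smallish-left-degree} and Chernoff, whp every left vertex has sampled degree $O(\log n)$. So deleting the right leaves of any tree $T$ in $\ggreedy$ leaves a left-heavy tree, with at least half its vertices on the left, of size $\Omega(|T|/\log n)$.

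Rooting at a right vertex gives every left vertex a charged parent edge. One then union-bounds only over left-heavy trees, where $\sum_i x_i \ge m/2$. Each left vertex's threshold factor $(\maxdensity\log n/\avgdegree)^{-2}$ then pays for two edges of cost $8\maxdensity/\avgdegree$. This gives $n\,(8/\log n)^{m-1}$ per pattern, hence left-heavy trees of size $O(\log n)$ and components of size $O(\log^2 n)$. Your argument needs this reduction, or some equivalent control over right vertices attached below left vertices, to go through.
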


To prove 
Theorem \ref{thm: competitive-ratio} we need to show that the load on any vertex is $O( \log\!\log n \cdot \opt )$.
We will bound the load due to threshold edges and greedy edges separately.

The total threshold load at any vertex is bounded trivially.
By design, at any vertex, the load due to threshold edges of class-$i$ is at most $\alpha_i$. 
By the definition of $\alpha_i$ in \eqref{eq: threshold}, and the lower bounds $\opt = \Omega( \maxdensity/\avgdegree)$ and $\opt = \Omega(\skewness(G))$ in \Cref{cl:simple-lb-3} and \Cref{lem:skew-lb} respectively,
we have that $\alpha_i = O(\opt \cdot ((\log\!\log n)/2^i + 1))$. Summing over all classes $i\in [\numclasses]$, the total load due to threshold edges at any vertex is at most 
\[\sum_{i = 1}^\numclasses{\alpha_i}= 
O(\log\!\log n \cdot \opt )\] 
So, our main goal will be to bound  the load due to greedy edges.  
To this end, we will consider the subgraph $\ggreedy$, formed by the greedy edges, and show the following key property.
\begin{lemma}\label{lem: cc} 
  With high probability, every connected component in $\ggreedy$ has size $O(\log^2 n)$. 
\end{lemma}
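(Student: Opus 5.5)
We will bound the probability that $\ggreedy$ contains a tree on $k = C\log^2 n$ vertices (for a large constant $C$). A first-moment count over such trees suffices. Any connected component of size at least $k$ contains a subtree with exactly $k$ vertices. So it suffices to show that the expected number of $k$-vertex trees in $G$ whose edges all appear as greedy edges in the sample is $n^{-\Omega(1)}$.

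Our approach is a witness-tree exploration rooted at a left vertex. Fix a tree $\tau$ in $G$. For $\tau$ to survive, each of its edges must be sampled. Moreover, for each left vertex $u\in\tau$ and each class $i$ of a $\tau$-edge at $u$, $u$ must have received at least $\alpha_i$ class-$i$ arrivals before that edge. Otherwise the edge would have been a threshold edge. The key point is that this forces $u$ to have class-$i$ sampled degree at least $\alpha_i+1$.

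Since $\Delta_L(G_i)=O(\maxdensity/f_i)$ and each edge is sampled in expectation $2/\avgdegree$ times, the class-$i$ sampled degree of $u$ is dominated by $\mathrm{Bin}(n,O(\maxdensity/(f_i n\avgdegree)))$. Hence the probability of exceeding $\alpha_i$ is at most $(e\,\maxdensity/(\avgdegree f_i\alpha_i))^{\alpha_i}$. Because $\alpha_i\gtrsim \maxdensity/\avgdegree$ with a large constant, this is at most $f_i^{-c\alpha_i}$ for a large $c$. And because $\alpha_i \gtrsim \skewness(G)\,(1+\log\log n/2^i)$, we get $f_i^{-c\alpha_i}\le (f_i\log n)^{-c'\skewness(G)}\cdot(\log n)^{-\Omega(1)}$. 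Here the $\log\log n/2^i$ term is exactly what converts $\log f_i=2^i$ into a factor comparable to $\log\log n$, so the excess sample probability absorbs the $(\log n)^{2\skewness(G)}$ factor in $\Delta_R(G_i)$.

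We then run the branching count. Expanding from a right vertex $v$, the expected number of class-$i$ neighbours $u$ that are both sampled and "over threshold" is at most
\[\Delta_R(G_i)\cdot \frac{2}{\avgdegree}\cdot f_i^{-c\alpha_i}\lesssim \frac{\maxdensity}{\avgdegree}(f_i\log n)^{2\skewness(G)}(f_i\log n)^{-c'\skewness(G)}(\log n)^{-\Omega(1)}.\]
This is at most $1/(\log n)^{2}$ per class, and still small after summing over the $O(\log\log n)$ classes. Here we use $\maxdensity/\avgdegree\le \log n$ from Claim~\ref{cl:smallish-left-degree}, absorbed by taking constants large.

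Expanding from a left vertex $u$, the expected number of sampled incident edges is $O(\Delta_L/\avgdegree)=O(\log n)$. We can either charge this through the Poisson tail already paid at $u$, or simply bound it by $O(\log n)$. The product of the two steps then gives an expected offspring per two levels of at most $1/\log n\ll1$. The number of tree shapes is at most $4^k$, so the expected number of surviving $k$-vertex trees is at most $n\cdot 4^k\cdot(O(1/\log n))^{k/2}=n^{-\omega(1)}$ for $k=C\log^2 n$. Even a crude per-level bound would give components of size $O(\log n)$; the $\log^2 n$ leaves slack.

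The main obstacle is dependence. Edge multiplicities are drawn with a fixed total $n$, and the threshold events at different left vertices are correlated. The "over threshold" event also depends on arrival order relative to the tree edges. Our first attempt will be to Poissonize, replacing $n$ arrivals by $\mathrm{Poi}(n)$ at constant cost, so that all edge multiplicities become independent. The events at distinct left vertices then involve disjoint edge sets and are independent.

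The threshold event must also be made disjoint from the tree edges themselves. To do this, we require $\alpha_i$ further class-$i$ arrivals at $u$ besides the $\tau$-edges, and pay a $\binom{\deg}{\cdot}$ factor for choosing which edges witness the threshold. That factor is absorbed into the $(e\cdot\mathrm{mean}/\alpha_i)^{\alpha_i}$ estimate.

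The remaining care is multigraph repetitions and a left vertex carrying $\tau$-edges of several classes. In the latter case the per-class probabilities multiply by independence, which only helps. Finally, we will fold this into a single union bound over labeled trees, choosing the constants in \eqref{eq: threshold} large enough.
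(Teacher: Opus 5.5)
There is a genuine gap, and it sits in the final union bound. Your local estimates are fine and essentially match the paper's:
\begin{itemize}
\item the order-free necessary event (tree edges sampled, plus $\alpha_i$ extra class-$i$ arrivals at each left vertex);
\item the tail bound, which gives $f_i^{-\alpha_i}$ rather than $f_i^{-c\alpha_i}$ (harmless, since the constant already sits inside $\alpha_i$);
\item the right-to-left factor $\Delta_R(i)\cdot(2/\avgdegree)\cdot f_i^{-\alpha_i}$;
\item Poissonization with independence across left vertices, which is legitimate because the necessary event is monotone in the sample.
\end{itemize}

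The problem is how you combine these. You count $4^k$ shapes and allow $\Delta_L\le 4\maxdensity$ choices for each child of a left vertex. A $k$-vertex tree rooted at a left vertex, with $a$ left and $b$ right vertices, then receives the bound
$n\cdot 4^k\cdot (8\maxdensity/\avgdegree)^{b}\cdot (O(\log\!\log n/\log^2 n))^{a-1}$.
Every left vertex pays one threshold factor. But every right child of a left vertex costs a factor $8\maxdensity/\avgdegree\ge 4$ in your bound, and nothing pays for the extra right children. Your ``expected offspring per two levels'' is a branching-process heuristic. Turning it into $(O(1/\log n))^{k/2}$ for every shape requires $a\gtrsim b$, which you never establish.

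A single shape already breaks it: a star with a left center and $k-1$ right leaves gets a bound of order $n\,(8\maxdensity/\avgdegree)^{k-1}\ge n\,4^{k-1}$. The one threshold factor at the center cannot offset this for $k=C\log^2 n$. More generally, any shape in which left vertices carry many right leaves makes the sum blow up.

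The missing ingredient is the paper's reduction to left-heavy trees:
\begin{enumerate}
\item By \Cref{cl:smallish-left-degree} and a Chernoff bound, with high probability every left vertex has sampled degree $O(\log n)$.
\item Deleting all right leaves of a tree $T$ in $\ggreedy$ then leaves a tree with no right leaves. Such a tree has at least half its vertices on the left, and it has size $\Omega(|T|/\log n)$.
\item Restrict your union bound to these left-heavy trees. There $a\ge m/2$, so each right vertex's $O(\maxdensity/\avgdegree)$ factor can be paired with a left vertex's threshold factor. This shows left-heavy trees in $\ggreedy$ have size $O(\log n)$, and hence components have size $O(\log^2 n)$.
\end{enumerate}

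So the second $\log n$ in the statement is exactly the cost of this pruning, not slack. Your remark that a crude per-level bound already gives $O(\log n)$ components is not supported by your argument. Without the pruning step you would need a genuinely finer count than per-edge factors over $4^k$ shapes.
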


Together with Fact \ref{fact:greedy}, Lemma \ref{lem: cc} directly implies that the expected maximum load due to the greedy edges is  $O(\log\!\log n \cdot \opt)$, completing the proof of \Cref{thm: competitive-ratio}.

Proving Lemma \ref{lem: cc} will be the crux of the analysis, and will be done in Section \ref{subsec: component}. Note that for general base graphs $G$, the sampled graph $G'$ can have arbitrarily large components. 
So this is where we will crucially exploit the properties of the decomposition and the algorithm.

\subsubsection{Bounding Components of \texorpdfstring{$\ggreedy$}{G-greedy}}\label{subsec: component}
We now focus on proving Lemma \ref{lem: cc}.

As any connected component of size $m$ contains a spanning tree on $m$
vertices, it suffices to show that every subtree of $\ggreedy$ has size
$O(\log^2 n)$. We do so via a witness-tree argument, similar in spirit to
those used in analyses of various power-of-two-choices variants
(cf.~\cite{cole1998balls,vocking2003asymmetry,kenthapadi2006balanced}).
Our setting, however, is considerably more delicate. In prior work on
regular graphs, the sampled graph $G'$ itself has no large components and
is therefore much easier to analyze. Here, by contrast, bounding the
components of $\ggreedy$ requires a careful use of both the structural
properties of the decomposition of the base graph and the definition of
the greedy edges.

The proof proceeds as follows.
We first define a special type of tree called
a ``left-heavy'' subtree, and show that it suffices to show that such subtrees have size $O(\log n)$ in $\ggreedy$. Second, we show that with high probability, any left-heavy tree that appears in $\ggreedy$ is of size $O(\log n)$.

\textbf{Reduction to Left-Heavy Trees.}  A subtree $T$ of $G$ is said to be \emph{left-heavy} if at least half its vertices are left vertices in $G$.
We use $|T|$ to denote its size, i.e., the number of vertices in $T$.
We claim that any large tree $T$ in $\ggreedy$ contains a sufficiently large left-heavy tree $T'$.

\begin{claim}\label{clm: left-heavy}
    With high probability, the following property holds:
    Every subtree $T$ in $\ggreedy$, with size $|T|>1$, contains a left-heavy tree $T'$ of size $|T'|= \Omega(|T|/\log n)$.
\end{claim}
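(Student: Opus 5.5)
The plan has two steps: use the sampled degree of left vertices to show that $T$ contains many left vertices, then prune $T$ down to a left-heavy subtree that keeps all of them. I work on the good event that every left vertex has sampled degree $O(\log n)$ in $G'$. Since $\ggreedy$ is a subgraph of $G'$, this bounds the degree of every left vertex in $\ggreedy$.

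\textbf{Step 1: the high-probability event.} For $u\in L$, its sampled degree is distributed as $\text{Bin}(n,p)$ with $p = 2d_G(u)/(n\avgdegree)$. By \Cref{cl:smallish-left-degree} we have $d_G(u)\le \Delta_L(G)\le \log n\cdot\avgdegree$, so the mean is at most $2\log n$. A Chernoff bound, followed by a union bound over the at most $n$ left vertices, shows that with high probability every left vertex has sampled degree at most $C\log n$ for a suitable constant $C$. This is the only probabilistic ingredient; everything below is deterministic on this event.

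\textbf{Step 2: counting left vertices of $T$.} Fix a subtree $T$ of $\ggreedy$ with $|T|>1$, and let $L_T$ and $R_T$ be its left and right vertex sets. Since $G$ is bipartite, every edge of $T$ has exactly one endpoint in $L_T$. Hence
\[ |T|-1 = |E(T)| = \sum_{u\in L_T}\deg_T(u) \le C\log n\cdot |L_T|, \]
which gives $|L_T| = \Omega(|T|/\log n)$. In particular $L_T\neq\emptyset$ whenever $|T|>1$.

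\textbf{Step 3: pruning to a left-heavy tree.} Build $T'$ from $T$ by repeatedly deleting right vertices that are leaves. Deleting a leaf keeps the graph a tree, and no left vertex is ever removed, so $T'$ is a subtree containing all of $L_T$. Hence $|T'|\ge |L_T| = \Omega(|T|/\log n)$.

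It remains to check that $T'$ is left-heavy. If $|L_T|=1$, every right vertex of $T$ is adjacent only to that one left vertex, so all are eventually pruned. Then $T'$ is a single left vertex, which is trivially left-heavy. Otherwise $T'$ has at least two vertices, so every vertex has degree at least $1$. Every remaining right vertex has degree at least $2$ in $T'$, since otherwise it would have been pruned. All its neighbours are left vertices, so with $R'$ the right vertices of $T'$,
\[ |L_T|+|R'|-1 = |E(T')| = \sum_{v\in R'}\deg_{T'}(v) \ge 2|R'|. \]
This gives $|R'|\le |L_T|-1$, so at least half the vertices of $T'$ are left vertices, i.e.\ $T'$ is left-heavy.

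I expect no real obstacle. The only place needing care is Step 1, which must invoke the preprocessing bound of \Cref{cl:smallish-left-degree} so that the expected sampled left degree is $O(\log n)$ rather than something larger.
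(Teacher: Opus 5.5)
Your proof is correct and follows the paper's argument. You condition on the high-probability event (via \Cref{cl:smallish-left-degree} and a Chernoff/union bound) that every left vertex has sampled degree $O(\log n)$, then delete the right leaves of $T$ to get a left-heavy subtree that keeps all $\Omega(|T|/\log n)$ left vertices; your edge-counting check of left-heaviness and your handling of the case $|L_T|=1$ spell out what the paper asserts in a single line.
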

\begin{proof}
By \Cref{cl:smallish-left-degree}, and the fact that every edge in $G$ is sampled $2/\avgdegree$ times in expectation, for all left vertices $u$, $\E[\deg_{G'}(u)] \leq 2 \log n$. Since the degrees have a binomial distribution, it  follows by  Chernoff bounds and a union bound that  all left vertices have degree $O(\log n)$ whp.

We condition on this event.
For any subtree $T$ in $\ggreedy$, consider the subtree $T'$ obtained by deleting all the right leaves from $T$.
For each left vertex in $T$, as at most $O(\log n)$ of its right neighbors are deleted, $T'$ must contain at least $\Omega(|T|/\log n)$ vertices.
Finally, as $T'$ has no right leaves,  $T'$ has at least $|T'|/2$ left vertices and is thus left-heavy.
\qedhere
\end{proof}

\textbf{Bounding the Size of Left-Heavy Trees.}
 By Claim \ref{clm: left-heavy}, to prove Lemma \ref{lem: cc}, it suffices to show that every left-heavy tree in $\ggreedy$ has size $O(\log n)$.  

We will do this by upper bounding the number of left-heavy trees in $G$,
and the probability of their appearance in $\ggreedy$. 
To this end, we will need to count the number of left-heavy trees in a more fine-grained way, that uses the information about the decomposition. 
To capture this, we need the following crucial notion of a \emph{pattern}.

\medskip
{\bf Pattern.} A pattern of size $m$ is a rooted tree with $m$ vertices, where  each vertex is unlabeled, and each of the $m-1$ edges is labeled by a number in $[\numclasses]$.

Consider the decomposition of $G = G_1 \sqcup \ldots \sqcup G_\numclasses$ into edge-disjoint subgraphs, given by \Cref{lem: graph-decomp}. We label each edge $e$ of $G$ by $i\in [\numclasses]$ if $e \in G_i$. So we view $G$ as a labeled graph where each vertex has a unique vertex-label in $[n]$, and each edge has some (non-unique) label in $[\numclasses]$.

We say a rooted subtree $T$ of $G$  has pattern $P$ (see \Cref{fig:pattern-sampled-graph}) if there is a bijective mapping $\sigma: V(P) \rightarrow V(T)$ from vertices of $P$ to those of $T$, such that 
\begin{enumerate}
\item  The root of $P$ is mapped to the root of $T$.  \item  For every edge $(a,b)$ in $P$, $(\sigma(a),\sigma(b))$ is an edge in $T$, and 
the edge-label of $(a,b)$ in $P$ is same as the edge-label of $(\sigma(a),\sigma(b))$ in $T$.
\end{enumerate}
 We will only consider subtrees $T$ in $G$ that are rooted at a right vertex.

Our key lemma is the following.
\begin{lemma}\label{clm: pattern}
For any fixed pattern $P$ with $m$ vertices, the probability that a left-heavy tree  with pattern $P$ appears in $\ggreedy$ is at most $n \cdot (8/ \log n)^{m-1}$. 
 \end{lemma}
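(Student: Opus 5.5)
We would prove this by a first-moment bound: we sum, over all rooted subtrees $T$ of $G$ with pattern $P$ that are left-heavy, the probability that $T$ appears in $\ggreedy$. The factor $n$ accounts for the choice of the root, a right vertex. We then embed $P$ top-down in a BFS order, so that the number of choices for each child is controlled by the degree of its parent in the relevant class. Each edge $(a,b)$ of $P$ with label $i$ is mapped to a class-$i$ edge of $G$. If the parent is a left vertex, there are at most $\Delta_L(G_i)=O(\maxdensity/f_i)$ choices. If the parent is a right vertex, there are at most $\Delta_R(G_i)=O(\maxdensity (f_i\log n)^{2\skewness(G)})$ choices. Note that the pattern does not specify sides, but sides alternate and the root is on the right, so sides are determined.

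Next we bound the probability that a fixed tree $T$ appears in $\ggreedy$. Each edge of $T$ must be sampled at least once, which costs roughly $2/(n\avgdegree)\cdot n = 2/\avgdegree$ per edge. We will treat the counts via Poissonization or negative association, as in the proof of Lemma~\ref{lem:skew-lb}. In addition, each edge must be a greedy edge. For a class-$i$ edge at left endpoint $u$, this requires that $u$ already received $\alpha_i$ class-$i$ threshold arrivals, that is, at least $\alpha_i$ other class-$i$ edges at $u$ are sampled. That event has probability at most $\Pr[\mathrm{Bin}(n,2\Delta_L(G_i)/(n\avgdegree))\ge \alpha_i]\le (O(\maxdensity/(f_i\avgdegree\alpha_i)))^{\alpha_i}$. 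We charge this event once per left vertex $u$ of $T$ and per class $i$ appearing at $u$. The events at distinct left vertices involve disjoint edge sets, so they are (negatively) correlated favorably. The tree edges themselves are excluded from these counts, so they are also independent of the tree-edge sampling.

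Multiplying, each right-to-left step with label $i$ contributes the factor $O(\maxdensity (f_i\log n)^{2\skewness(G)}/\avgdegree)$ times the threshold penalty at the left child. Each left-to-right step contributes $O(\maxdensity/(f_i\avgdegree))$. Since $T$ is left-heavy, we pair each right vertex other than the root with a distinct left vertex, and charge one threshold factor per left vertex. It remains to check that the choice of $\alpha_i$ in \eqref{eq: threshold} makes each combined per-edge factor at most $8/\log n$. The term $\skewness(G)$ in $\alpha_i$, with $\alpha_i\gtrsim\skewness(G)$, kills the $(f_i\log n)^{2\skewness(G)}$ blowup because $(1/f_i)^{\alpha_i}\ll (f_i\log n)^{-2\skewness(G)}$ once $f_i\ge \log n$. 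The term $\maxdensity/\avgdegree$ absorbs the $\maxdensity/\avgdegree$ factors. The $(\log\log n)/2^i$ term handles small $i$, where $f_i=2^{2^i}<\log n$: there $\log f_i=2^i$, so $\alpha_i\log f_i\gtrsim \log\log n\cdot(\text{stuff})$, which beats $\log n$ factors.

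The main obstacle is that last calculation together with the mismatch between left and right vertices. A left vertex may have children of several classes, and its threshold penalty must cover the branching of its right parent in that parent's class. If the parent edge has class $i$ while the left vertex's penalty is for class $i$, this works. The delicate step is making sure every right-to-left edge is paired with a penalty in the same class at its child, which holds because the edge itself is a class-$i$ greedy edge at that child. We also need the left-heavy condition to guarantee enough left-to-right factors $\maxdensity/(f_i\avgdegree)$ to offset the remaining right-side blowup. I would first verify the inequality $\Delta_R(G_i)\cdot\Pr[\mathrm{Bin}\ge\alpha_i]/\avgdegree\le 8/\log n$ separately for the regimes $f_i\le\log n$ and $f_i>\log n$.
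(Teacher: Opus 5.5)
Your route is the paper's: a first-moment bound over embeddings of $P$ (root in $n$ ways, at most $\Delta_R(i)$ choices below a right vertex), times the probability that all tree edges are sampled and every left vertex of $T$ has already received $\alpha_i$ class-$i$ threshold arrivals, with left-heaviness supplying enough penalties. Your use of $\Delta_L(G_i)$ for left parents, instead of the paper's $4\maxdensity$, is fine. However, two steps fail as written.

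\textbf{The threshold event.} You exclude the tree edges from the threshold counts in order to get independence. The resulting event is not implied by $T\subseteq\ggreedy$, because the $\alpha_i$ threshold arrivals at $u$ may be earlier copies of a tree edge. For example, suppose $(u,w)$ arrives $\alpha_i+1$ times and no other class-$i$ edge at $u$ arrives. Then the first $\alpha_i$ copies are threshold edges and the last copy is greedy, yet your event fails. So your product does not upper bound $\Pr[T\subseteq\ggreedy]$.

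The fix is to count arrivals rather than edges, as in the paper's event $B_T$. Fix distinct time steps for the $m-1$ tree arrivals ($n^{m-1}$ ways) and for the $\alpha_i$ threshold arrivals at each class-$i$ left vertex ($\binom{n}{\alpha_i}$ ways). Let each threshold arrival be any class-$i$ edge at $u$, which has probability at most $2\Delta_L(i)/(n\avgdegree)$ per step, and use independence across time steps. This gives $\Pr[B_T]\le(\avgdegree/2)^{-(m-1)}\prod_i\bigl(2e\Delta_L(i)/(\alpha_i\avgdegree)\bigr)^{\alpha_i x_i}$, with no appeal to Poissonization or negative association.

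\textbf{The final inequality.} The inequality you plan to verify, $\Delta_R(G_i)\Pr[\mathrm{Bin}\ge\alpha_i]/\avgdegree\le 8/\log n$, does not close the argument. Your bound for a left-to-right step of class $i'$ is of order $\maxdensity/(f_{i'}\avgdegree)$. For $i'=1$ this is of order $\maxdensity/\avgdegree$, which is at least $1/2$ and can be as large as $\log n$.

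Such a step carries no penalty of its own when $i'$ equals the parent class of $u$, or when several children of $u$ share a class. These factors are therefore blowups that must be paid for. They do not offset the right-side blowup, so your last paragraph has the direction reversed.

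Your pairing sentence is the right fix, but it means each left-vertex penalty must pay for two edges. You need $f_i^{\alpha_i}\gtrsim(\maxdensity\log n/\avgdegree)^2\cdot\Delta_R(i)/\maxdensity$, not a single factor of $\log n$. This is exactly the inequality the paper proves, using three facts: $\alpha_i\ge c\,(\maxdensity/\avgdegree+\skewness(G))\bigl((\log\!\log n)/2^i+1\bigr)$ with $c$ large; $f_i^{(\log\!\log n)/2^i+1}=f_i\log n$; and $\Delta_R(i)=O(\maxdensity(f_i\log n)^{2\skewness(G)})$.

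Left-heaviness, i.e.\ $\sum_i x_i\ge m/2$, then gives $(\maxdensity\log n/\avgdegree)^{-2\sum_i x_i}\le(\maxdensity\log n/\avgdegree)^{-m}$. Multiplying by the count $n(4\maxdensity)^{m-1}\prod_i(\Delta_R(i)/\maxdensity)^{x_i}$ yields $n(8/\log n)^{m-1}$.
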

We prove Lemma \ref{clm: pattern} in Section \ref{subsec: pattern} using a careful counting argument. But let us first see why this immediately implies Lemma \ref{lem: cc}, and hence Theorem \ref{thm: competitive-ratio}.

\begin{proof}[Proof of \Cref{lem: cc}]
It is well-known that there are at most $4^m$ non-isomorphic, unlabeled, rooted trees on $m$ vertices \cite{knuth1997art}.
As there are at most $\numclasses^{m-1}$ choices for the edge-labels for any such tree, there are at most $4^m \numclasses^{m-1}$ distinct patterns on $m$ vertices.
Thus by Lemma \ref{clm: pattern} and a 
union bound over the possible patterns $P$, the probability that any left-heavy tree of size $m$ appears in $\ggreedy$ is at most $4n \cdot (32\numclasses/\log n)^{m-1} = 4n \cdot  2^{-\Omega(m)}$, as $\numclasses = O(\log\!\log n)$. This implies that, with high probability, any left-heavy tree in $\ggreedy$ is of size $O(\log n)$. 
\end{proof}

\subsubsection{Bounding the Probability of a Pattern in \texorpdfstring{$\ggreedy$}{G-greedy}}\label{subsec: pattern}

Fix a pattern $P$. To prove Lemma \ref{clm: pattern}, we  first bound the number of subtrees of $G$ with this pattern, and then bound the probability of appearance of any such subtree in $\ggreedy$.
 
As we only consider subtrees $T$ in $G$ that are rooted at right vertices,  the root of $P$ always maps to a right vertex in $G$. Hence, assuming the root is at depth $0$,  vertices at even depth (resp. odd depth) in $P$ map to right (resp. left) vertices; we refer to them as the right and left vertices of $P$.

A non-root vertex in $P$ whose edge to its parent has class $i$ is called a \emph{class-$i$ vertex}. Our bounds will depend on the number of class-$i$ left vertices in $P$.

\textbf{The number of subtrees of $G$ with pattern $P$.}
To count these subtrees, we first introduce some notation. For each class
$i \in [\numclasses]$, let $f_i := 2^{2^i}$, and let $\Delta_L(i)$ and
$\Delta_R(i)$ denote the maximum left and right degrees, respectively, of
$G_i$. Then $G_i$ is an $f_i$-\skewbir{} graph and satisfies the degree bounds
in \Cref{eq:skewbir-deg}. Let $x_i$ denote the number of class-$i$ left
vertices in $P$.

We now bound the number of ways in which the vertices of $P$ can be mapped to the vertices in $G$.

Clearly, the root of $P$ can be mapped to a right vertex in $G$ in at most $n$ ways. We now map the other vertices of $P$ 
inductively in a breadth-first order. 
For each edge $(p,v)$ in the pattern (viewed as rooted tree), with parent $p$ and child $v$, we bound the number of choices for mapping $v$ in $G$, given the mapping of $p$.

\,\,\,\,\, (i) If $p$ is mapped to a right vertex, and the edge $(p,v)$ has label $i$ in the pattern $P$, then $v$ must be mapped to some (left) neighbor of $p$ in $G_i$. As $G_i$ has maximum right degree $\Delta_R(i)$, there can be at most $\Delta_R(i)$ choices for mapping $v$.

\,\,\,\,\, (ii) If $p$ is mapped to a left vertex, as  $G$ has maximum left degree at most $4 \maxdensity$,  there can be at most $4 \maxdensity$ choices for mapping $v$.

For each class $i \in [\numclasses]$, case (i) arises exactly $x_i$ times, by definition of $x_i$, and case (ii) arises for the remaining 
$m-1-\sum_{i = 1}^\numclasses x_i$ edges.
Thus the number of subtrees in $G$ with pattern $P$ is at most 
\begin{align}\label{eq:tree-count}
n \cdot (4\maxdensity)^{m - 1-\sum_{i} x_i} \; \cdot \; \prod_{i = 1}^\numclasses (\Delta_R(i))^{x_i} = n \cdot (4\maxdensity)^{m-1} \cdot \prod_{i = 1}^\numclasses (\Delta_R(i)/\maxdensity)^{x_i}.
\end{align}

\textbf{Probability of a pattern $P$ in $\ggreedy$. }
Fix a left-heavy tree $T$ with pattern $P$ in $G$.  We show that the probability of appearance of $T$ in $\ggreedy$ is at most
\begin{align} \label{eq:prob-left-heavy} 
(\maxdensity \cdot \log n/2)^{-({m-1})} \cdot  \prod_{i = 1}^\numclasses   (\Delta_R(i)/\maxdensity)^{-x_i}.
\end{align}
The bounds \eqref{eq:tree-count} and \eqref{eq:prob-left-heavy} immediately imply \Cref{clm: pattern}.  We now focus on proving \eqref{eq:prob-left-heavy}.

To bound this probability, a key observation is the following:
Suppose that $T$ appears in $\ggreedy$. Now consider a class-$i$ left vertex $u$ in $T$. Since $u$ has an edge in $\ggreedy$ of class-$i$, it must be that at least $\alpha_i$ threshold edges of class-$i$ incident to $u$ appeared in the sample $G'$ (see \Cref{fig:pattern-sampled-graph}). 
Our parameters $\alpha_i$ in \eqref{eq: threshold}, are chosen precisely to make the probability sufficiently low. 

\begin{figure}[t]
\centering
\begin{tikzpicture}[
    lvtx/.style={circle, draw=black, fill=black, inner sep=0pt, minimum size=3.0pt, line width=0.6pt},
    rvtx/.style={circle, draw=black, fill=gray!30, inner sep=0pt, minimum size=3.0pt, line width=0.6pt},
    lkey/.style={circle, draw=black, fill=black, inner sep=0pt, minimum size=3.6pt, line width=0.7pt, double, double distance=0.5pt},
    rkey/.style={circle, draw=black, fill=gray!30, inner sep=0pt, minimum size=3.6pt, line width=0.7pt, double, double distance=0.5pt},
    root/.style={circle, draw=black, fill=yellow!35, inner sep=0pt, minimum size=4.4pt, line width=1.0pt},
    redg/.style={draw=red!75!black, line width=2.8pt},
    blueg/.style={draw=blue!75!black, line width=2.8pt},
    greeng/.style={draw=green!55!black, line width=2.8pt},
    redt/.style={draw=red!75!black, line width=1.0pt, dotted},
    bluet/.style={draw=blue!75!black, line width=1.0pt, dotted},
    greent/.style={draw=green!55!black, line width=1.0pt, dotted},
    lab/.style={align=center, font=\small}
]

\begin{scope}[xshift=0cm]
    \node[lab] at (0,-1.7) {\textbf{Pattern}};

    \node[root] (pr) at (0,2.2) {};

    \node[lkey] (pu) at (-1.2,1.2) {};
    \node[lkey] (pw) at ( 1.2,1.2) {};

    \node[rvtx] (pa) at (-1.8,0.2) {};
    \node[rvtx] (pb) at (-0.6,0.2) {};
    \node[rvtx] (pc) at ( 1.8,0.2) {};

    \draw[redg]   (pr) -- (pu);
    \draw[blueg]  (pr) -- (pw);
    \draw[blueg]  (pu) -- (pa);
    \draw[greeng] (pu) -- (pb);
    \draw[redg]   (pw) -- (pc);
\end{scope}

\begin{scope}[xshift=8.5cm]
    \node[lab] at (0,-1.7) {\textbf{Sampled graph}};

    \node[root] (r) at (0,2.2) {};

    \node[lkey] (u) at (-1.2,1.2) {};
    \node[lkey] (w) at ( 1.2,1.2) {};

    \node[rvtx] (x1) at (-1.8,0.2) {};
    \node[rvtx] (x2) at (-0.6,0.2) {};
    \node[rvtx] (x3) at ( 1.8,0.2) {};

    \draw[redg]   (r) -- (u);
    \draw[blueg]  (r) -- (w);
    \draw[blueg]  (u) -- (x1);
    \draw[greeng] (u) -- (x2);
    \draw[redg]   (w) -- (x3);

    \node[lvtx] (a1) at (-2.8,2.2) {};
    \node[rvtx] (a2) at (-3.8,1.4) {};
    \draw[redg]   (a1) -- (a2);

    \node[lvtx] (b1) at (2.8,2.1) {};
    \node[rvtx] (b2) at (3.8,1.3) {};
    \node[lvtx] (b3) at (2.9,0.4) {};
    \draw[blueg] (b1) -- (b2);
    \draw[redg]  (b2) -- (b3);

    \node[rvtx] (c1) at (-1.2,-0.5) {};
    \node[rvtx] (c2) at ( 1.4,-0.9) {};
    \node[rvtx] (c3) at (-2.4,0.6) {};
    \node[rvtx] (c4) at ( 2.6,-0.5) {};

    \draw[redt]  (u) -- (a1);
    \draw[redt]  (u) -- (c1);
    \draw[redt]  (u) -- (c3);

    \draw[bluet] (w) -- (b1);
    \draw[bluet] (w) -- (c2);
    \draw[bluet] (w) -- (c4);
\end{scope}

\end{tikzpicture}
\captionsetup{justification=raggedright,singlelinecheck=false}
\caption{Left: a rooted pattern, with edge colors indicating decomposition
classes. Right: a sampled graph containing a tree of this pattern. Thick
colored edges are greedy, and thin dotted edges are threshold. Black circles
are left vertices and gray circles are right vertices. Each left vertex
incident to a class-$i$ greedy edge has at least $\alpha_i$ class-$i$
threshold edges.}
\label{fig:pattern-sampled-graph}
\end{figure}
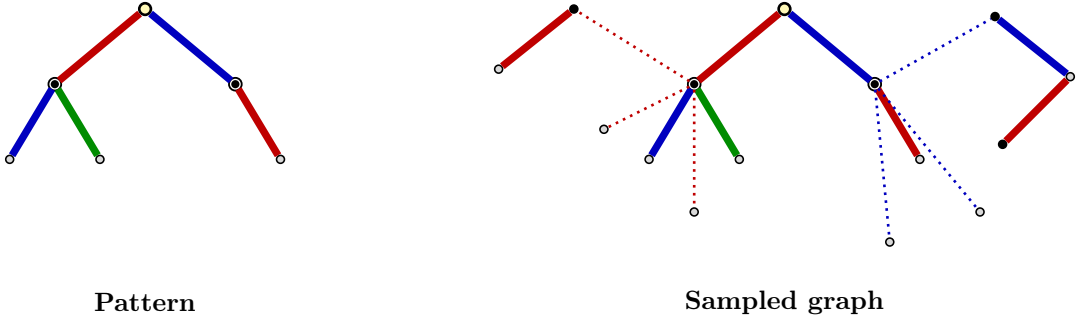

We now give the details.

\textbf{Event $B_T$.} Consider the following event $B_T$, which is necessary for the tree $T$ to appear in $\ggreedy$:

\,\,\,\,\, (i) All the $m-1$ edges of $T$ must be sampled in $G'$, and,

\,\,\,\,\, (ii) For each $i \in [\numclasses]$ and every class-$i$ vertex $u$ of $T$, at least $\alpha_i$  class-$i$ edges incident to $u$ must appear in $G'$, beyond the class-$i$ edges of $T$ that are incident to $u$.  

It suffices to show that $\Pr[B_T]$ is bounded by \eqref{eq:prob-left-heavy}.

Consider time steps $t=1,\ldots,n$ at which the edges of $G'$ arrive.
For $B_T$ to occur, both the tree edges and the threshold edges must appear during these steps. For each edge in $T$ there are $n$ choices for the time step at which it appears. Next, for each class $i \in [\numclasses]$ and each class-$i$ vertex $u \in T$ , there are $\binom{n}{\alpha_i}$ choices for the $\alpha_i$ time steps when class-$i$ threshold edges incident to $u$ appear. As there are $x_i$ class-$i$ left vertices, the total number of choices  is
\begin{align}\label{eq:sampling-time}
n^{m-1} \; \prod_{i = 1}^\numclasses \binom{n}{\alpha_i}^{x_i} \leq n^{m-1} \; \prod_{i = 1}^\numclasses (en/\alpha_i)^{\alpha_i x_i}. 
\end{align}
Fix such an assignment of edges to time steps. We now bound the probability this is realized in $G'$. The probability that a tree edge is sampled at its assigned time step is $2/(n\avgdegree)$. As a class-$i$ vertex $u$ has degree  at most $\Delta_L(i)$ in $G_i$, the probability that a class-$i$ edge incident to $u$ is sampled at the assigned time step is at most  $2\Delta_L(i)/(n \avgdegree)$.  Putting it all together, this probability is at most  
\begin{align}\label{eq: prob-mapping}
       (2/n\avgdegree)^{m-1} \cdot \prod_{i = 1}^\numclasses \left(2 \Delta_L(i)/(n\avgdegree) \right)^{\alpha_i x_i}.
        \end{align}
Together with \eqref{eq:sampling-time} this gives,
\begin{align}\label{eq: prob-calc}
\Pr[B_T]~\leq~ (\avgdegree/2)^{-(m-1)} \cdot \prod_{i = 1}^\numclasses (2e \Delta_L(i)/(\alpha_i \avgdegree))^{\alpha_i x_i }.
\end{align}
We now plug in the value of $\alpha_i$ and further simplify the expression.  Recall that,
\begin{align}\label{eq: threshold-2}
    \alpha_i = c \cdot  \left(\frac{\maxdensity}{\avgdegree} + \skewness(G)\right) \cdot \left(\frac{\log\!\log n}{2^i}+1\right),
\end{align}
where $c>8$ is a sufficiently large constant.
By \eqref{eq:skewbir-deg} we have $\Delta_L(i)  \leq 16 \maxdensity/f_i $. Also  $\alpha_i  \geq c \maxdensity/\avgdegree$. Thus, $\,\,2e\,\Delta_L(i)/(\alpha_i \avgdegree)\;\le\;1/f_i.$ Plugging this in \eqref{eq: prob-calc} gives,  
\begin{align}\label{eq: prob-cal-2}
\Pr[B_T] \;\le\; (\avgdegree/2) ^{-(m-1)} 
\prod_{i=1}^\numclasses f_i^{-\alpha_i x_i}.
\end{align}
Next we simplify $f_i^{-\alpha_i x_i}$ above. Since $f_i = 2^{2^{i}}$ we have 
$f_i^{(\log\!\log n)/ 2^i} = 2^{\log\!\log n} = \log n$, and thus
$f_i^{((\log\!\log n)/ 2^i + 1)} = f_i \log n$. Therefore, by \eqref{eq: threshold-2},  
\begin{align*}
f_i^{\alpha_i} 
\geq  (f_i \log n) ^ {c \cdot \maxdensity/\avgdegree}  \cdot  (f_i\log n)^{c \cdot \skewness(G)} 
&\geq \left( \maxdensity \log n/ \avgdegree \right)^2   \cdot  (\Delta_R(i)/\maxdensity) .
\end{align*}
Here,  we used the key property of our decomposition that  ${\Delta_R(i) = O(\maxdensity \cdot (f_i \log n)^{2 \skewness(G)})}$ for all $i \in [\numclasses]$.
Moreover, as $f_i\geq 1$ and $\maxdensity/\avgdegree \geq 1/2$, we have 
that $(f_i \log n) ^ {c \maxdensity/\avgdegree}  \geq \left(  \maxdensity \log n/\avgdegree\right)^{2} $ whenever $c > 8$.  Thus, 
 \begin{align}\label{eq: interm}
\prod_{i=1}^\numclasses f_i^{-\alpha_i x_i}
       \leq  (\maxdensity \log n/\avgdegree)^{-2\sum_{i = 1}^\numclasses x_i} \,  \prod_{i = 1}^\numclasses   (\Delta_R(i)/\maxdensity)^{-x_i} 
       \leq (\maxdensity \log n/\avgdegree)^{-m} \,   \prod_{i = 1}^\numclasses   (\Delta_R(i)/\maxdensity)^{-x_i}. 
   \end{align}
Here, the second inequality uses that $T$ is left-heavy and hence ${\sum_{i = 1}^\numclasses x_i \ = \left(\# \text{ Left nodes in $T$} \right) \geq m/2}$. Plugging the bound of \eqref{eq: interm} in \eqref{eq: prob-cal-2}, 
 proves  \eqref{eq:prob-left-heavy}. This completes the proof of  \Cref{clm: pattern} and therefore of \Cref{thm: competitive-ratio}.

\subsection{Reductions for General Number of Arrivals}
\label{sec: reductions}
We now extend \Cref{thm: competitive-ratio} to a general number $T$ of arrivals.

Let $G'_T$ denote the  sampled (multi)-graph after $T$ random edge arrivals, and $\opt(G, T)$ denote the expected optimum load. We note a simple generalization of \Cref{cl:simple-lb-3} for arbitrary $T$, the proof of which follows exactly as in Claim \ref{cl:simple-lb-3}.

\begin{claim}\label{clm: simple-lb}
    Let $G$ be an $n$-vertex base graph with average degree $\avgdegree(G) \geq 1$. Then,
    \begin{align}
    \label{lb:max-density-2}
       \text{For any $T$, }\,\,\,\,\,\,\,\,\,\,\,\,\, \quad \opt(G, T) & \geq 2T\maxdensity(G)/(n\avgdegree(G)). \qquad \qquad \\
    \label{lb:degree-2}
       \text{For any } T \leq n, \,\,\, \quad  \opt(G, T) & = \Omega\left( \log n/\log ((n\avgdegree(G)/T) \cdot \log n) \right) .
    \end{align}
\end{claim}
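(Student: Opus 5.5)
Both bounds follow the proof of \Cref{cl:simple-lb-3}: we apply linearity of expectation and binomial concentration to the sampled multigraph $G'_T$. The only change is the per-edge sampling rate. With $T$ draws, each edge $e\in E$ appears $\text{Bin}(T,p)$ times, where $p=1/|E|=2/(n\avgdegree(G))$. Its mean is therefore $2T/(n\avgdegree(G))$ instead of $2/\avgdegree(G)$.

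\textbf{Density bound \eqref{lb:max-density-2}.} Fix $S\subseteq V$ attaining $\maxdensity(G)=\rho(G[S])$. By linearity of expectation, the expected number of sampled edges (with multiplicity) inside $G[S]$ is $|E(G[S])|\cdot 2T/(n\avgdegree(G))$. Hence
\[
\E[\rho(G'_T[S])]=\frac{2T\,\maxdensity(G)}{n\avgdegree(G)}.
\]
Since $M^*(G'_T)\ge \maxdensity(G'_T)\ge \rho(G'_T[S])$ pointwise, taking expectations gives the first inequality. This step holds for every $T$ and needs nothing further.

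\textbf{Multiplicity bound \eqref{lb:degree-2}.} If a single edge $e=(u,v)$ is sampled $k$ times, then $\rho(G'_T[\{u,v\}])=k/2$, so $M^*\ge k/2$. It therefore suffices to show that, with constant probability, some edge has multiplicity $k=\Omega(\log n/\log(\log n/q))$, where $q:=Tp=2T/(n\avgdegree(G))$ is the expected multiplicity of an edge. Note that $n\avgdegree(G)/T = 2/q$, so this matches the claimed bound up to constants.

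To find such an edge, I would use that for $T\le n$ we have $q\le 2/\avgdegree\le 2$. I would pick $\Theta(n)$ edges forming, say, a matching or a star-free family of size $\Omega(n)$, or simply use all $|E|\ge n/2$ edges. For $k$ in the stated range,
\[
\Pr[\text{Bin}(T,p)\ge k]\ \ge\ \binom{T}{k}p^k(1-p)^{T}\ \gtrsim\ (q/k)^k\, e^{-O(q)} \ \ge\ n^{-1/2}.
\]
Indeed, $(q/k)^k\ge n^{-1/2}$ rearranges to $k\log(k/q)\le \tfrac12\log n$, which holds for $k=c\log n/\log(\log n/q)$ with small $c$; here one uses $k\le \log n$. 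The multiplicities of distinct edges are multinomial, hence negatively associated (as in the proof of \Cref{lem:skew-lb}). So the expected number of edges with multiplicity at least $k$ is at least $|E|n^{-1/2}\ge \sqrt n/2$, and a second-moment or negative-association concentration bound shows that at least one such edge exists with probability $1-o(1)$. Thus $\opt(G,T)\ge (1-o(1))k/2$.

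\textbf{Main obstacle.} The one step needing care is the binomial lower tail estimate when $q$ is very small, since $T$ may be much less than $n$. In that regime $\log(\log n/q)$ is large and $k$ could drop below $1$. If the stated expression is $O(1)$, the claim follows from the trivial bound $\opt(G,T)\ge 1/2$ whenever $T\ge 1$. Otherwise $k\ge 1$, and the estimate above goes through because $\binom{T}{k}\ge (T/k)^k$ and $(1-p)^T\ge e^{-2q}\ge e^{-4}$. This is exactly the Fact~\ref{fact:binom} computation used for \Cref{cl:simple-lb-3}, with $n$ replaced by $T$ in the number of trials.
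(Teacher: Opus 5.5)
Your proposal is correct and follows the paper's route: the paper proves this claim exactly as \Cref{cl:simple-lb-3}, with $n$ replaced by $T$ trials. The density bound comes from linearity of expectation, and the multiplicity bound from a binomial lower-tail estimate (Fact~\ref{fact:binom}) showing that some edge is sampled $k$ times. The only cosmetic difference is that you target per-edge probability $n^{-1/2}$ and use negative association to get probability $1-o(1)$, whereas the paper takes $\delta = 10/|E|$ and a second-moment argument to get probability $\Omega(1)$; both give the same $k$ up to constants.
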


By a standard doubling trick, we assume that the online algorithm knows $T$. We now consider different regimes of $T$, and show how  $O(\log\!\log n)$ competitiveness follows.

\medskip
\textbf{Case 1 ($T > n\log n$)}: As $\maxdensity(G)/\avgdegree(G) \ge 1/2$ trivially, \eqref{lb:max-density-2} gives that 
$\opt(G,T) = \Omega(\log n)$. 
When  $\opt(G,T)$ is this large, we have the following simple bound, similar to \Cref{cl:smallish-left-degree}. 
\begin{claim}\label{clm:left-degree-bound}
Let $G$ be a left-degree-bounded bipartite graph on $n$ vertices. 
For any number of arrivals $T$, with high probability,  all left vertices have degree 
$O(\opt(G,T) + \log n)$ in the sampled graph $G'_T$.
\end{claim}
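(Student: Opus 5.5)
Fix a left vertex $u$ of $G$. Its sampled degree in $G'_T$ is distributed as $\mathrm{Bin}(T, p_u)$ with $p_u = d_G(u)/|E| = 2d_G(u)/(n\avgdegree(G))$. The plan is to bound the mean of this binomial by $O(\opt(G,T))$, then apply a Chernoff bound together with a union bound over the at most $n$ left vertices.

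For the mean, since $G$ is left-degree-bounded we have $d_G(u) \le \Delta_L(G) \le 4\maxdensity(G)$. Hence
\[
\E[\deg_{G'_T}(u)] = \frac{2T d_G(u)}{n\avgdegree(G)} \le \frac{8T\maxdensity(G)}{n\avgdegree(G)} \le 4\,\opt(G,T),
\]
where the last step is the density lower bound \eqref{lb:max-density-2} of Claim~\ref{clm: simple-lb}. This holds for every $T$, with no restriction to $T\le n$.

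For concentration, use the standard Chernoff form for a binomial $X$ with mean at most $\mu$: for any $k \ge 2e\mu$, $\Pr[X \ge k] \le 2^{-k}$. Take $\mu = 4\opt(G,T)$ and $k = \max(2e\mu, 3\log n)$, i.e.\ $k = O(\opt(G,T) + \log n)$. Then each left vertex exceeds degree $k$ with probability at most $n^{-3}$. A union bound over the $|L| \le n$ left vertices gives failure probability at most $n^{-2}$.

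The argument is routine, with no real obstacle. The one point to check is that the mean bound uses only \eqref{lb:max-density-2}, which is valid for all $T$. The additive $\log n$ term is exactly what lets the Chernoff tail be small enough for the union bound even when $\opt(G,T)$ is small.
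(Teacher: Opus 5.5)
Your proof is correct and matches the argument the paper intends; the paper gives no separate proof and defers to the proof of Claim~\ref{cl:smallish-left-degree}. That argument bounds the expected sampled left-degree by $8T\maxdensity(G)/(n\avgdegree(G)) \le 4\,\opt(G,T)$ using left-degree-boundedness and \eqref{lb:max-density-2}, then applies a Chernoff bound with additive $O(\log n)$ slack and a union bound over left vertices, which your explicit choice $k=\max(2e\mu,3\log n)$ simply makes precise.
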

So assigning each edge of $G'_T$ to its left endpoint yields a $O(1)$-competitive assignment.

\medskip

\textbf{Case 2 ($n \le T \le n\log n$)}: 
We reduce this setting to the case where the number of arrivals equals the number of vertices in the base graph. 
First, we may assume without loss of generality that $\avgdegree(G) > \log n$. 
Indeed, if $\avgdegree(G) \le \log n$, then by \Cref{cl:simple-lb-3}, even for $n$ arrivals we have $\opt(G,n) = \Omega(\log n / \log\!\log n)$; hence $\opt(G,T) \ge \opt(G,n) = \Omega(\log n / \log\!\log n)$. 
By \Cref{clm:left-degree-bound}, assigning each sampled edge in $G'_T$ to its left endpoint is already $O(\log\!\log n)$-competitive. 

Now suppose $\avgdegree(G) \ge \log n$. 
We construct an augmented graph $G^{\text{aug}}$ by adding $T-n$ isolated vertices to $G$, so that $G^{\text{aug}}$ has $T$ vertices. 
Since the isolated vertices contribute no edges, sampling $T$ random edges from $G^{\text{aug}}$ is equivalent to sampling $T$ random edges from $G$; consequently, $\opt(G^{\text{aug}}, T) = \opt(G, T)$. 
Moreover, $\avgdegree(G^{\text{aug}}) \ge \avgdegree(G)/\log n \ge 1$. 

Applying \Cref{thm: competitive-ratio} to $G^{\text{aug}}$ for $T$ i.i.d.\ arrivals gives expected maximum load 
$O(\log\!\log T \cdot \opt(G^{\text{aug}}, T)) = O(\log\!\log n \cdot \opt(G, T))$, 
thus giving an $O(\log\!\log n)$-competitive algorithm.

\medskip
\textbf{Case 3 ($\log n < T < n$)}:  Let $\gamma := T/n < 1$, so that the number of arrivals is $T = \gamma n$.  

If $\avgdegree(G)/\gamma = O(\log n)$, then \Cref{lb:degree-2} implies
$\opt(G,T)=\Omega(\log n/\log\!\log n)$, and so by \Cref{clm:left-degree-bound},
assigning each sampled edge to its left endpoint is already
$O(\log\!\log n)$-competitive. Hence, we may assume
$\avgdegree(G)/\gamma = \Omega(\log n)$. 

We now reduce this setting to the case where the number of arrivals equals the number of vertices.

Construct an augmented graph $G^{\text{aug}} = G \cup H$, where $H$ is a union of $\lceil n / \avgdegree(G) \rceil$ disjoint cliques, each of size $k = \lceil \avgdegree(G) / \gamma \rceil$, with vertex sets disjoint from $G$.  
The augmented graph has $N = \Theta(n / \gamma)$ vertices in total, and only a $\Theta(\gamma^2)$ fraction of its edges belong to $G$.

If $N$ edges are sampled uniformly at random from $G^{\text{aug}}$, whp, the number of edges sampled from  $G$ is $\Theta(\gamma^2 N) = \Theta(\gamma n) = \Theta(T)$. Further, the number of edges sampled from each clique, is $\Theta(k)$ whp (as $k = \lceil{\avgdegree(G)/\gamma\rceil} =\Omega(\log n))$.  So the edges within each clique can be oriented so that the load is~$O(1)$, giving that $\opt(G^{\text{aug}}, N) = O(\opt(G, T))$.

By \Cref{thm: competitive-ratio}, there is an algorithm for $N$ arrivals in  $G^{\text{aug}}$, which ensures that the expected maximum load on any vertex in $G$ is $O(\log\!\log N \cdot \opt(G^{\text{aug}}, N)) = O(\log\!\log n \cdot \opt(G, T))$.   Using the same preprocessing, we can run this algorithm directly on $G$ for the original $T$ random edge arrivals; since the auxiliary cliques in $H$ are disjoint from $G$, this preserves the same load guarantees. Hence we obtain an $O(\log\!\log n)$-competitive algorithm for $T$ arrivals in $G$, completing the reduction.

\medskip
\textbf{Case 4 ($T \le \log n$)}: 
Here, the Greedy algorithm is $O(\log\!\log n)$-competitive by \Cref{fact:greedy}.

\section{Lower Bound for Greedy}\label{sec: greedy-lb}

We now prove that the Greedy algorithm has an $\Omega(\log n/ (\log\!\log n)^2)$-competitive ratio, even for some mildly irregular base graphs.  

\textbf{The Graph.}
The graph $G$ is layered, with vertices partitioned as $V_1, \ldots, V_b$ (see \Cref{fig:greedy-lower-bound}).
Let $t = (\log n)^3$ and set $|V_i| = t^i \sqrt{n}$ for each $i \in [b]$.
We choose $b = O(\log n / \log\!\log n)$ so that the total number of vertices is $\Theta(n)$.
For each $i \in [b-1]$, we construct the edges between $V_i$ and $V_{i+1}$ as follows:
partition $V_i$ into $t^i$ groups of size $\sqrt{n}$, and $V_{i+1}$ into $t^i$ groups of size $\sqrt{n}t$.
Between each corresponding pair of groups, we place a complete bipartite graph $K_{\sqrt{n},\,\sqrt{n}t}$.
Let $G_i$ denote the bipartite subgraph of $G$ induced by $V_i \cup V_{i+1}$;  
it is biregular, with every vertex in $V_i$ having degree $\sqrt{n}t$ and every vertex in $V_{i+1}$ having degree $\sqrt{n}$.  
The total number of edges is $\sum_{i=1}^{b-1} t^i \cdot \sqrt{n}t \cdot \sqrt{n} = \Theta(n t^b) = \Theta(n^{1.5})$, so the average degree is $\Theta(\sqrt{n})$.

\begin{figure*}[t]
\centering
\begin{tikzpicture}[
    x=1cm,y=1cm,
    >=Latex,
    box/.style={rounded corners=2pt, draw=black, thick, fill=gray!10},
    mini/.style={rounded corners=2pt, draw=black, thick},
    grp/.style={rounded corners=1pt, draw=black!70, fill=gray!15},
    note/.style={font=\footnotesize, align=center},
    lab/.style={font=\small, align=center},
    badge/.style={circle, draw=black, fill=white, inner sep=1.4pt, font=\scriptsize},
    bundle/.style={gray!70, line width=0.35pt},
    flow/.style={-Latex, thick}
]


\draw[box] (-0.38,-0.65) rectangle (0.38,0.65);
\node at (0,0) {$V_1$};
\node[note] at (0,-1.0) {$t\sqrt n$};

\draw[box] (1.52,-0.85) rectangle (2.28,0.85);
\node at (1.9,0) {$V_2$};
\node[note] at (1.9,-1.2) {$t^2\sqrt n$};

\draw[box] (3.42,-1.05) rectangle (4.18,1.05);
\node at (3.8,0) {$V_3$};
\node[note] at (3.8,-1.4) {$t^3\sqrt n$};

\node[lab] at (5.1,0) {$\cdots$};

\draw[box] (6.62,-1.25) rectangle (7.38,1.25);
\node at (7.0,0) {$V_{b-2}$};
\node[note] at (7.0,-1.6) {$t^{b-2}\sqrt n$};

\draw[box] (8.52,-1.45) rectangle (9.28,1.45);
\node at (8.9,0) {$V_{b-1}$};
\node[note] at (8.9,-1.8) {$t^{b-1}\sqrt n$};

\draw[box] (10.42,-1.65) rectangle (11.18,1.65);
\node at (10.8,0) {$V_b$};
\node[note] at (10.8,-2.0) {$t^{b}\sqrt n$};

\foreach \y in {-0.45,-0.15,0.15,0.45}{
  \draw[bundle] (0.38,\y) -- (1.52,{0.85/0.65*\y});
  \draw[bundle] (2.28,{0.85/0.65*\y}) -- (3.42,{1.05/0.85*\y});
  \draw[bundle] (7.38,{1.25/1.65*\y}) -- (8.52,{1.45/1.65*\y});
  \draw[bundle] (9.28,{1.45/1.65*\y}) -- (10.42,\y);
}
\draw[bundle,densely dashed] (4.25,0.45) -- (6.55,0.45);
\draw[bundle,densely dashed] (4.25,0.00) -- (6.55,0.00);
\draw[bundle,densely dashed] (4.25,-0.45) -- (6.55,-0.45);

\node[badge] at (10.8,2.1) {$0$};
\draw[flow] (10.45,2.1) -- (9.25,2.1);
\node[note] at (9.85,2.42) {$B_1$};

\node[badge] at (8.9,2.1) {$1$};
\draw[flow] (8.55,2.1) -- (7.35,2.1);
\node[note] at (7.95,2.42) {$B_2$};

\node[badge] at (7.0,2.1) {$2$};
\draw[flow,densely dashed] (6.65,2.1) -- (5.25,2.1);

\node[note] at (5.0,-2.5) {Each vertex gets $\Omega(t/\log n)=\Omega(\log^2 n)$ relevant arrivals per batch.};

\end{tikzpicture}
\captionsetup{justification=raggedright,singlelinecheck=false}
\caption{Lower-bound construction for Greedy. The graph has layers $V_1,\ldots,V_b$ with
$|V_i|=t^i\sqrt n$, where $t=(\log n)^3$, and between consecutive layers the edges form
$t^i$ disjoint copies of $K_{\sqrt n,\,\sqrt n t}$. The figure shows the batch-wise
propagation of load by Greedy: batch $B_1$ pushes all of $V_{b-1}$ to load at least $1$, batch $B_2$ pushes all of
$V_{b-2}$ to load at least $2$, and so on.}
\label{fig:greedy-lower-bound}
\end{figure*}

We now consider $n$ i.i.d uniformly random arrivals from $G$. We first show a lower bound on the load when the  Greedy algorithm is used to orient these edges. We then  show that $\opt(G)$  is $O(\log\!\log n)$.

\begin{lemma}\label{lem: greedy-high-abs-load}
    The Greedy Algorithm, with random tiebreaking, has max-load $\Omega(\log n/\log\!\log n)$ with high probability.
\end{lemma}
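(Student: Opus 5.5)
The plan is to show by induction over time batches that Greedy pushes load up layer by layer: $V_b$ starts at load $0$, then all of $V_{b-1}$ reaches load $\ge 1$, all of $V_{b-2}$ reaches load $\ge 2$, and so on. This is the propagation depicted in \Cref{fig:greedy-lower-bound}. Throughout, view $G_i$ as bipartite with "left" side $V_i$ (degree $\sqrt n t$) and "right" side $V_{i+1}$ (degree $\sqrt n$). Choose $b$ so that $t^b\sqrt n=\Theta(n)$.

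\textbf{Arrival rates.} Layer $G_i$ has $nt^{i+1}$ edges, so it carries a $\Theta(t^{i+1-b})$ fraction of $E$. Fix $j\ge 1$, a vertex $u\in V_{b-j}$, and a block of $n/k$ consecutive arrivals, where $k:=\lfloor (b-1)/2\rfloor=\Theta(\log n/\log\!\log n)$. Then the number of arrivals in the block that are edges of $G_{b-j}$ incident to $u$ is $\text{Bin}(n/k,\,p)$. Here $p=\sqrt n t/|E|=\Theta(t/n)$, since $|E|=\Theta(n t^b)$ and $t^b=\Theta(\sqrt n)$. The mean is therefore $\Theta(t/k)=\Omega(\log^2 n)$, because $t=\log^3 n$. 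By Chernoff bounds and a union bound over the $\le n$ vertices and $k$ blocks, whp every $u\in V_{b-j}$ receives at least $c\log^2 n$ such arrivals in every block $B_1,\dots,B_k$. I will condition on this event, which depends only on which edges arrive, not on Greedy's coin flips.

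\textbf{Induction.} Claim: whp, for every $j\le k$, at the end of block $B_j$ every vertex of $V_{b-j}$ has load at least $j$. The case $j=0$ is trivial. Assume the claim for $j-1$. Since Greedy loads never decrease, throughout block $B_j$ every $v\in V_{b-j+1}$ has load $\ge j-1$.

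Fix $u\in V_{b-j}$ and consider the $\ge c\log^2 n$ arrivals in $B_j$ of edges $(u,v)$ with $v\in V_{b-j+1}$. At each such arrival, if $u$'s current load is $L<j$, there are two cases:
\begin{itemize}
\item If $L<j-1\le \text{load}(v)$, Greedy assigns the edge to $u$ deterministically.
\item If $L=j-1$, then either $v$ has larger load and the edge goes to $u$, or there is a tie and it goes to $u$ with probability $1/2$, using fresh randomness.
\end{itemize}
So, conditioned on the history, each such arrival increments $u$'s load with probability $\ge 1/2$ as long as the load is below $j$. Arrivals from other layers, such as $G_{b-j-1}$ where $u$ is a right vertex, can only increase $u$'s load further. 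Hence $u$'s load is stochastically dominated from below by $\min(j,\text{Bin}(c\log^2 n,1/2))$. Since $j\le k\ll \log^2 n$, a Chernoff bound shows $u$ reaches load $j$ within $B_j$ with probability $1-n^{-\omega(1)}$. A union bound over all $u$ and $j$ completes the induction, giving max load $\ge k=\Omega(\log n/\log\!\log n)$ whp.

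\textbf{Main obstacle.} The delicate step is handling the adaptivity of Greedy's decisions together with random tie-breaking. I plan to resolve it by exposing the edge sequence first and treating the tie coins as a martingale, so that the per-arrival success probability is $\ge 1/2$ conditionally on the past. This gives the binomial domination used above.

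A secondary check is that blocks do not interfere. The induction only uses monotonicity of loads, so arrivals from other layers or earlier blocks are harmless. One must also confirm that $t/k$ indeed exceeds $\log^2 n$ with the chosen constants, which holds since $t/k=\Theta(\log^2 n\cdot\log\!\log n)$.
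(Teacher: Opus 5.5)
Your proposal is correct and follows essentially the same argument as the paper: a batch-by-batch induction showing that all of $V_{b-j}$ has load at least $j$ after batch $j$. It combines Chernoff bounds giving $\Omega(\log^2 n)$ relevant arrivals per vertex per batch with the observation that each such arrival goes to $u$ with probability at least $1/2$ once $u$'s load is $j-1$, followed by a union bound. The differences are cosmetic: you use $\Theta(b)$ blocks of size $n/k$ rather than $\log n$ batches of size $n/\log n$, and you make explicit the separation between the edge-arrival randomness and the tie-breaking coins, which the paper leaves implicit.
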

\begin{proof}
Let $e_1, \ldots, e_n$ be the edges in the sampled graph $G'$ in the order they arrive. Group these edges into batches $B_1,\ldots,B_{\log n}$ of  $k=n/\log n$ edges, where $B_i = \{e_j \, : (i-1) k < j \leq ik \} $.
We will show at the end of batch $i$, whp, all the vertices in $V_{b - i}$ (layer $b-i$ of $G$) have load at least $i$. This implies the lemma as eventually vertices at level $1$ will have load $\Omega(\log n/ \log\!\log n)$.

For $i = 0, \ldots, b-1$, let $E_i$ denote the event that all vertices in $V_{b-i}$ have load at least $i$ after Greedy processes the first $i$ batches. 
We will show that $\Pr[E_i] \geq (1 - i/n^2)$, by induction.

First we make an observation. Fix some vertex $u$ in $V_i$ for $i \in [b - 1]$. As  $\avgdegree(G)=\Theta(\sqrt{n})$, and $u$ has $t \sqrt{n}$ edges to $V_{i+1}$, during each batch, in expectation, $\Omega(t/\log n ) = \Omega(\log^2 n)$ edges will arrive, and this occurs with probability at least $1 - 1/\poly(n)$ by Chernoff bounds.

The base case is trivial as all vertices in $V_b$ have load $0$ before the edges arrive; thus, $\Pr[E_0] = 1$. 

Suppose, inductively, that $\Pr[E_{i-1}] \geq (1 - (i-1)/n^2)$. To show that $\Pr[E_i] \geq (1 - i/n^2)$  it suffices to show that $\Pr[\overline{E}_i|E_{i-1}] \leq 1/n^2$, as $\Pr[\overline{E}_i] \leq \Pr[\overline{E}_i| E_{i-1}] + \Pr[\overline{E}_{i-1}]$.

Consider a vertex $u$ at level $V_{b - i}$. Arguing as before, whp, batch $B_i$ contains $\Omega(\log ^2 n)$ edges from $u$ to $V_{b-i+1}$. Call this set of edges $S$. Condition on the event $E_{i-1}$.  After the first $i-1$ edges of $S$ arrive,  Greedy ensures that the load at $u$ is at least $i-1$.
Once the load of $u$ reaches $(i-1)$, each subsequent edge of $S$ has probability at least $1/2$ of being assigned to $u$. Thus, by the end of this batch,  the load at $u$ is at least $i$ with probability at least  $1 - 2^{-\Omega(\log^{2} n)}$.  A union bound over the vertices in $V_{b - i}$ gives that $\Pr[\overline{E}_i| E_{i-1}] \leq 1/n^2$. 
\end{proof}

A simple argument shows that the optimum expected load for $G$ is $O(\log\!\log n)$.
\begin{lemma}\label{lem: gs-low-density}
$\opt(G) = O(\log\!\log n)$.
\end{lemma}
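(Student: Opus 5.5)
We will exhibit an explicit offline orientation of the sample $G'$ with maximum load $O(\log\!\log n)$ with high probability. Since the max load is always at most $n$, the failure event of probability $1/\poly(n)$ then adds only $o(1)$ to the expectation. The natural orientation is layer-by-layer: orient every sampled edge of $G_i$ (between $V_i$ and $V_{i+1}$) toward its endpoint in the larger layer $V_{i+1}$, except for a few ``overloaded'' vertices handled separately.

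\textbf{Step 1: load of this orientation.} A vertex $v\in V_{i+1}$ receives edges only from $G_i$, where it has degree $\sqrt n$. The total number of edges is $\Theta(n^{1.5})$, so each edge is sampled with probability $\Theta(n^{-1.5})$ per step. Hence $\deg_{G_i'}(v)\sim\mathrm{Bin}(n,\Theta(1/n))$, which is approximately $\Poi(\Theta(1))$. Thus the typical load is $O(1)$. However, the maximum over the $\Theta(n)$ such vertices is $\Theta(\log n/\log\!\log n)$, so we cannot simply take this orientation for every vertex.

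\textbf{Step 2: local structure of the sample.} Each vertex $u\in V_i$ has degree $t\sqrt n$ toward $V_{i+1}$, plus $\sqrt n/t$ toward $V_{i-1}$ (since a group of $V_i$ of size $\sqrt n t$ sits opposite a group of $V_{i-1}$ of size $\sqrt n$, and each such vertex has degree $\sqrt n$ there). Its sampled degree is therefore $\Poi(O(t))=\Poi(O(\log^3 n))$. For the density bound, a cleaner route is to bound $\maxdensity(G')$ directly via the criterion $M^*=\lceil\maxdensity\rceil$. I would show that whp every vertex subset $S$ of $G'$ satisfies $|E_{G'}(S)|\le C\log\!\log n\,|S|$.

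The plan for this is a first-moment count over connected subgraphs. A fixed set $S$ with $s$ vertices and $k$ sampled edges inside it requires $k$ specific edges to be sampled, giving probability $\le (e n s^2 p_{\max}/k)^k$. Here $p_{\max}$ is determined by the degrees between the layers present in $S$. Because the graph is a disjoint union of complete bipartite blocks $K_{\sqrt n,\sqrt n t}$ glued along layers, a dense $S$ must place $a$ vertices on the small side and $c$ on the large side of blocks. The expected number of edges is then $ac\cdot\Theta(n^{-1/2})$, while the number of choices is about $\binom{\sqrt n}{a}\binom{t\sqrt n}{c}$ per block.

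\textbf{Step 3 (the main obstacle).} We need to check that the log-skewness of every subgraph is $O(\log\!\log n)$, mirroring Lemma \ref{lem:skew-lb} in reverse. In any subgraph $H=(A,B)$ with $|A|\ge|B|$, the vertices of the larger side within a block have $d_A^{\min}\le\sqrt n\approx\avgdegree$. So the denominator in \eqref{eq:skew-h} is at least $\log\log^2 n$. Meanwhile the imbalance $|A|/|B|$ within one block is at most $t=\log^3 n$, and chaining across layers does not compound, since each layer's small side is the next block's large side.

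I would then carry out the Step 2 union bound separately for each block and combine blocks. Each vertex lies in at most two blocks, so the max density of $G'$ is at most twice the max over single blocks, and hence over single samples of $K_{\sqrt n,\sqrt n t}$ with edge probability $\Theta(n^{-1/2})$ per pair. For one block, the choice count is roughly $(e\sqrt n/a)^a(et\sqrt n/c)^c$ against probability roughly $(e\,ac\,n^{-1/2}/k)^k$. Setting $k=D(a+c)$ with $D=C\log\!\log n$, I expect the terms to balance with $D=O(\log t)=O(\log\!\log n)$. Verifying this calculation, especially the regime $c\gg a$ where the $t$-factor enters, is the delicate part. Once it is done, $\maxdensity(G')=O(\log\!\log n)$ whp, which gives the lemma.
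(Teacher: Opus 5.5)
Your final argument matches the paper's proof, and the calculation you defer does close. The argument is a first-moment union bound inside a single block $K_{\sqrt n,\sqrt n t}$, where each pair has expected multiplicity $\Theta(n^{-1/2})$, over sets with $a$ small-side and $c$ large-side vertices; the blocks are then combined using that every vertex lies in at most two of them. For the calculation, take $k=D(a+c)$ and use $ac/(a+c)\le\min(a,c)$; each term is at most $[e(C/D)^D]^a\,[et(C/D)^D]^c\,x^{(D-1)(a+c)}$ with $x=\min(a,c)/\sqrt n\le 1$, so $D=\Theta(\log t)=O(\log\!\log n)$ suffices.

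You can drop Steps~1 and~3: the union bound never uses them, and nothing in the paper turns an upper bound on \logskew{} into an upper bound on $\opt$ (Lemma~\ref{lem:skew-lb} only gives the lower bound). Also, a vertex of $V_i$ has degree $\sqrt n$, not $\sqrt n/t$, toward $V_{i-1}$.
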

\begin{proof}
Recall that $G$ consists of $O(\sqrt{n})$ disjoint copies of $K_{\sqrt{n},\,\sqrt{n}t}$.
We will show that the max-density of each such copy in the sampled graph $G'$ is  w.h.p.\ $O(\log\!\log n)$. As 
each vertex of $G$ lies in at most two copies of $K_{\sqrt{n},\,\sqrt{n}t}$, this would give that the total load at any vertex is $O(\log\!\log n)$.

Using that each edge of $G'$ is sampled with probability $\Theta(1/\sqrt{n})$, the claimed $O(\log\!\log n)$ bound on the max-density of  $K_{\sqrt{n},\,\sqrt{n}t}$ in $G'$ follows  by considering all its subgraphs with $a \leq  \sqrt{n}$ left vertices and $b \leq \sqrt{n} t $ right vertices, using standard probability tail bounds on the number of edges, and applying a union bound.
 \qedhere  
 
\end{proof}

\section{Conclusion and Future Directions}\label{sec:conclusion}

We gave an $O(\log\!\log n)$-competitive algorithm for online graph balancing under i.i.d.\ arrivals from an arbitrary base graph known in advance, matching the $\Omega(\log\!\log n)$ lower bound for complete graphs up to constant factors. Thus, the i.i.d.\ setting admits a qualitatively stronger guarantee than the $\Theta(\log n)$ bound for adversarial arrivals. Conceptually, our work identifies \emph{\logskew{}} as the key obstruction on irregular graphs and shows how bounded \logskew{} yields a decomposition into only $O(\log\!\log n)$ \skewbir{} graphs on which a greedy-style algorithm succeeds. We conclude with two concrete directions for future work.

\begin{itemize}
    \item \textbf{Direction 1: From graph balancing to general load balancing.}
    Our work focuses on graph balancing, where each job corresponds to an edge that can be assigned to one of its two incident machines/vertices. A major generalization is the classical \emph{unrelated-machines} setting with $n$ machines, where each job $j$ may be assigned to any machine $i$ and has an arbitrary processing time $p_{ji} \ge 0$. In the adversarial setting, this problem admits a $\Theta(\log n)$-competitive algorithm~\cite{aspnes1997line}. A natural question is whether stochastic arrivals from a known distribution over processing-time vectors allow an $O(\poly(\log\!\log n))$-competitive ratio. In particular, this would require extending our techniques from graphs to hypergraphs.

    \item \textbf{Direction 2: Beyond known base graphs.}
    Our $O(\log\!\log n)$ result relies crucially on knowing the base graph in advance. It would be interesting to understand how far these guarantees extend when this structure is only partially known. This is especially intriguing in light of the recent lower bound of~\cite{im2024online} for the related \emph{random-order model}, where the multiset of edges may be adversarially chosen and only the arrival order is random: every algorithm in that model has competitive ratio $\Omega(\sqrt{\log n})$. This contrast suggests that the full power of the i.i.d.\ model may depend on access to some structural information about the base graph. Can one still obtain $O(\log\!\log n)$ guarantees in the i.i.d.\ model when the base graph is {unknown}, or in sample-based models such as AOS where the algorithm receives only a random partial sample of an adversarially generated sequence~\cite{kaplan2022online,AFGS-22}? Understanding what structural information is really necessary to recover the benefits of i.i.d.\ arrivals remains an appealing direction for future work.
\end{itemize}

\clearpage

\begin{small}

\begin{thebibliography}{CMadHS95}

\bibitem[AAF{\etalchar{+}}97]{aspnes1997line}
James Aspnes, Yossi Azar, Amos Fiat, Serge Plotkin, and Orli Waarts.
\newblock On-line routing of virtual circuits with applications to load balancing and machine scheduling.
\newblock {\em Journal of the ACM, JACM}, 44(3):486--504, 1997.

\bibitem[ABKU94]{azar1994balanced}
Yossi Azar, Andrei Broder, Anna Karlin, and Eli Upfal.
\newblock Balanced allocations.
\newblock In {\em Symposium on Theory of Computing, {STOC}}, pages 593--602, 1994.

\bibitem[AFGS22]{AFGS-22}
C.~J. Argue, Alan~M. Frieze, Anupam Gupta, and Christopher Seiler.
\newblock Learning from a sample in online algorithms.
\newblock In {\em Annual Conference on Neural Information Processing Systems, NeurIPS}, 2022.

\bibitem[ANR92]{ABK-FOCS92}
Yossi Azar, Joseph Naor, and Raphael Rom.
\newblock The competitiveness of on-line assignments.
\newblock In {\em Symposium on Discrete Algorithms, {SODA}}, pages 203--210, 1992.

\bibitem[Aza05]{azar2005line}
Yossi Azar.
\newblock On-line load balancing.
\newblock {\em Online algorithms: the state of the art}, pages 178--195, 2005.

\bibitem[BCSV00]{berenbrink2000balanced}
Petra Berenbrink, Artur Czumaj, Angelika Steger, and Berthold V{\"o}cking.
\newblock Balanced allocations: The heavily loaded case.
\newblock In {\em Symposium on Theory of Computing, {STOC}}, pages 745--754, 2000.

\bibitem[BF22]{bansal2022power}
Nikhil Bansal and Ohad Feldheim.
\newblock The power of two choices in graphical allocation.
\newblock In {\em Symposium on Theory of Computing, {STOC}}, pages 52--63, 2022.

\bibitem[BFL{\etalchar{+}}95]{BroderFLPR-IPL95}
Andrei Broder, Alan Frieze, Carsten Lund, Steven Phillips, and Nick Reingold.
\newblock Balanced allocations for tree-like inputs.
\newblock {\em Inf. Process. Lett.}, 55(6):329--332, 1995.

\bibitem[BK22]{bansal2022balanced}
Nikhil Bansal and William Kuszmaul.
\newblock Balanced allocations: The heavily loaded case with deletions.
\newblock In {\em Foundations of Computer Science, {FOCS}}, pages 801--812, 2022.

\bibitem[Car08]{Caragiannis-SODA08}
Ioannis Caragiannis.
\newblock Better bounds for online load balancing on unrelated machines.
\newblock In {\em Symposium on Discrete Algorithms, SODA}, pages 972--981, 2008.

\bibitem[CFM{\etalchar{+}}98]{cole1998balls}
Richard Cole, Alan Frieze, Bruce Maggs, Michael Mitzenmacher, Andr{\'e}a Richa, Ramesh Sitaraman, and Eli Upfal.
\newblock On balls and bins with deletions.
\newblock In {\em Randomization and Approximation Techniques in Computer Science, {RANDOM}}, pages 145--158, 1998.

\bibitem[CMadHS95]{czumaj1995shared}
Artur Czumaj, Friedhelm Meyer auf~der Heide, and Volker Stemann.
\newblock Shared memory simulations with triple-logarithmic delay.
\newblock In {\em European Symposium on Algorithms}, pages 46--59. Springer, 1995.

\bibitem[DR96]{dubhashi1996balls}
Devdatt~P Dubhashi and Desh Ranjan.
\newblock Balls and bins: A study in negative dependence.
\newblock {\em BRICS Report Series}, 3(25), 1996.

\bibitem[GM16]{GuptaM-MOR16}
Anupam Gupta and Marco Molinaro.
\newblock How the experts algorithm can help solve lps online.
\newblock {\em Math. Oper. Res.}, 41(4):1404--1431, 2016.

\bibitem[GMP20]{greenhill2020balanced}
Catherine Greenhill, Bernard Mans, and Ali Pourmiri.
\newblock Balanced allocation on hypergraphs.
\newblock {\em arXiv preprint arXiv:2006.07588}, 2020.

\bibitem[God08]{godfrey2008balls}
Brighten Godfrey.
\newblock Balls and bins with structure: balanced allocations on hypergraphs.
\newblock In {\em Symposium on Discrete Algorithms, {SODA}}, pages 511--517, 2008.

\bibitem[Hak65]{hakimi1965degrees}
S~Louis Hakimi.
\newblock On the degrees of the vertices of a directed graph.
\newblock {\em Journal of the Franklin Institute}, 279(4):290--308, 1965.

\bibitem[IKL{\etalchar{+}}24]{im2024online}
Sungjin Im, Ravi Kumar, Shi Li, Aditya Petety, and Manish Purohit.
\newblock Online load and graph balancing for random order inputs.
\newblock In {\em Symposium on Parallelism in Algorithms and Architectures, {SPAA}}, pages 491--497, 2024.

\bibitem[KMPS25]{KMPS-FOCS25}
Thomas Kesselheim, Marco Molinaro, Kalen Patton, and Sahil Singla.
\newblock Integral online algorithms for set cover and load balancing with convex objectives.
\newblock In {\em Foundations of Computer Science, {FOCS}}, 2025.

\bibitem[KMS23]{KMS-SODA23}
Thomas Kesselheim, Marco Molinaro, and Sahil Singla.
\newblock Online and bandit algorithms beyond $\ell_p$ norms.
\newblock In {\em Symposium on Discrete Algorithms {SODA}}, pages 1566--1593. {SIAM}, 2023.

\bibitem[KNR22]{kaplan2022online}
Haim Kaplan, David Naori, and Danny Raz.
\newblock Online weighted matching with a sample.
\newblock In {\em Symposium on Discrete Algorithms, SODA}, pages 1247--1272, 2022.

\bibitem[Knu97]{knuth1997art}
Donald Knuth.
\newblock {\em The art of computer programming}, volume~3.
\newblock Pearson, 1997.

\bibitem[KP06]{kenthapadi2006balanced}
Krishnaram Kenthapadi and Rina Panigrahy.
\newblock Balanced allocation on graphs.
\newblock In {\em Symposium on Discrete Algorithms, {SODA}}, pages 434--443, 2006.

\bibitem[LPY19]{lenzen2019parallel}
Christoph Lenzen, Merav Parter, and Eylon Yogev.
\newblock Parallel balanced allocations: The heavily loaded case.
\newblock In {\em Symposium on Parallelism in Algorithms and Architectures, {SPAA}}, pages 313--322, 2019.

\bibitem[LS21]{los2021balanced}
Dimitrios Los and Thomas Sauerwald.
\newblock Balanced allocations with incomplete information: The power of two queries.
\newblock {\em arXiv preprint arXiv:2107.03916}, 2021.

\bibitem[LS23]{los2023balanced}
Dimitrios Los and Thomas Sauerwald.
\newblock Balanced allocations with the choice of noise.
\newblock {\em Journal of the ACM}, 70(6):1--84, 2023.

\bibitem[LX21]{LiXian-ICML21}
Shi Li and Jiayi Xian.
\newblock Online unrelated machine load balancing with predictions revisited.
\newblock In {\em International Conference on Machine Learning, {ICML}}, pages 6523--6532, 2021.

\bibitem[Mit96]{mitzenmacher1996power}
Michael~David Mitzenmacher.
\newblock {\em The Power of Two Choices in Randomized Load Balancing}.
\newblock PhD thesis, University of California at Berkeley, 1996.

\bibitem[Mol17]{molinaro2017online}
Marco Molinaro.
\newblock Online and random-order load balancing simultaneously.
\newblock In {\em Symposium on Discrete Algorithms, {SODA}}, pages 1638--1650. SIAM, 2017.

\bibitem[MRS01]{sitaraman2001power}
M.~Mitzenmacher, A.~Richa, and R.~Sitaraman.
\newblock The power of two random choices: A survey of techniques and results.
\newblock {\em Handbook of Randomized Computing: volume 1, edited by P. Pardalos, S. Rajasekaran, and J. Rolim}, 2001.

\bibitem[MVLL20]{lattanzi2020learning}
Benjamin Moseley, Sergei Vassilvitskii, Silvio Lattanzi, and Thomas Lavastida.
\newblock Online scheduling via learned weights.
\newblock In {\em Symposium on Discrete Algorithms, {SODA}}, 2020.

\bibitem[PTW15]{peres2015graphical}
Yuval Peres, Kunal Talwar, and Udi Wieder.
\newblock Graphical balanced allocations and the (1+ $\beta$)-choice process.
\newblock {\em Random Structures \& Algorithms}, 47(4):760--775, 2015.

\bibitem[Rou21]{Roughgarden-BeyondBook}
Tim Roughgarden.
\newblock {\em Beyond the worst-case analysis of algorithms}.
\newblock Cambridge University Press, 2021.

\bibitem[Ste96]{stemann1996parallel}
Volker Stemann.
\newblock Parallel balanced allocations.
\newblock In {\em Symposium on Parallel algorithms and architectures, {SPAA}}, pages 261--269, 1996.

\bibitem[TW14]{talwar2014balanced}
Kunal Talwar and Udi Wieder.
\newblock Balanced allocations: A simple proof for the heavily loaded case.
\newblock In {\em International Colloquium on Automata, Languages, and Programming, ICALP}, pages 979--990, 2014.

\bibitem[V{\"o}c03]{vocking2003asymmetry}
Berthold V{\"o}cking.
\newblock How asymmetry helps load balancing.
\newblock {\em Journal of the ACM, JACM}, 50(4):568--589, 2003.

\bibitem[Wie17]{wieder2017hashing}
Udi Wieder.
\newblock Hashing, load balancing and multiple choice.
\newblock {\em Foundations and Trends{\textregistered} in Theoretical Computer Science}, 12(3--4):275--379, 2017.

\end{thebibliography}
\newcommand{\etalchar}[1]{$^{#1}$}

\end{small}

\appendix

\section{Proofs of Simple Lower Bounds on \texorpdfstring{$\opt(G)$}{OPT(G)}}
In this section, we will prove   \Cref{cl:simple-lb-3} where we show simple lower bounds on $\opt(G)$ due to dense components and due to low average degree. Towards this end we recall some standard bounds on the tails of the binomial distribution.

\label{sec:bin-conc}

\begin{fact}
 \label{fact:binom}
     Let $X \sim \text{Bin}(n,p)$ with mean $\mu=np$. Then,
      \begin{enumerate}
          \item 
     $\Pr[X \geq k\mu] \leq (e/k)^{k\mu}$ for any $k \geq 3$.   
    \item If $p=O(1/n)$,
for any $\delta = \exp(-o(n^{1/2}))$, we have 
$X = \Omega(\log(1/\delta)/(\log(1/\mu) + \log  \log(1/\delta)))$ with probability at least $\delta$.\footnote{This follows as
$\Pr[X \geq k] = \Theta (\mu^k/k!)$ for $k = o(\sqrt{n})$.
}
     \end{enumerate}
 \end{fact}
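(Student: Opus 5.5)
Both parts follow from direct estimates on the binomial distribution: part 1 is a Chernoff bound, and part 2 comes from lower-bounding a single point mass $\Pr[X=k]$.

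\textbf{Part 1.} I would use the multiplicative Chernoff bound $\Pr[X \ge (1+\eta)\mu] \le \bigl(e^{\eta}/(1+\eta)^{1+\eta}\bigr)^{\mu}$ with $1+\eta = k$. This gives
\[\Pr[X\ge k\mu] \le \frac{e^{(k-1)\mu}}{k^{k\mu}} \le \left(\frac{e}{k}\right)^{k\mu}.\]
In fact this holds for every $k>1$, so in particular for $k\ge 3$.

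\textbf{Part 2, setup.} By monotonicity (coupling $\text{Bin}(n,p)$ to dominate $\text{Bin}(n,p')$ for $p'\le p$), I may lower $p$ so that $\mu \le 1/e$. The conclusion only weakens when $\mu$ decreases, up to constants in $\log(1/\mu)$. So I assume $\log(1/\mu)\ge 1$.

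\textbf{Part 2, point-mass estimate.} For an integer $k=o(\sqrt n)$,
\[\Pr[X\ge k] \ge \Pr[X=k] = \binom nk p^k(1-p)^{n-k} \ge \frac{(n-k)^k p^k}{k!}(1-p)^n.\]
Two facts control the right-hand side:
\begin{itemize}
\item $(n-k)^k = n^k(1-k/n)^k \ge n^k e^{-O(k^2/n)} = \Omega(n^k)$, because $k=o(\sqrt n)$.
\item $(1-p)^n \ge e^{-O(np)} = \Omega(1)$, because $p=O(1/n)$.
\end{itemize}
Together with $k!\le k^k$ this yields $\Pr[X\ge k]\ge c_0(\mu/k)^k$ for an absolute constant $c_0>0$. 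This is the footnoted estimate $\Theta(\mu^k/k!)$.

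\textbf{Part 2, choice of $k$.} I would set
\[k=\left\lfloor c'\,\frac{\log(1/\delta)}{\log(1/\mu)+\log\log(1/\delta)}\right\rfloor\]
for a small constant $c'$. If $k=0$ the claim is trivial, since $X\ge 0$ always holds. Otherwise:
\begin{itemize}
\item $\log k \le \log\log(1/\delta)$, so $k\log(k/\mu)\le k\bigl(\log(1/\mu)+\log\log(1/\delta)\bigr)\le c'\log(1/\delta)$. Hence $\Pr[X\ge k]\ge c_0\,\delta^{c'}$.
\item Since the denominator is at least $1$, $k\le c'\log(1/\delta)=o(\sqrt n)$ by the hypothesis $\delta=\exp(-o(n^{1/2}))$. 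This justifies the point-mass estimate above.
\end{itemize}

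\textbf{Main obstacle.} The delicate step is absorbing the constant $c_0$, i.e.\ ensuring $c_0\delta^{c'}\ge\delta$. This holds once $\delta\le c_0^{1/(1-c')}$. For $\delta$ larger than this constant, $\log(1/\delta)=O(1)$, so the required bound on $X$ is $O(1)$. In that regime I would shrink the hidden constant in the $\Omega(\cdot)$ so that the threshold becomes $0$, or use $\Pr[X\ge 1]=\Omega(\mu)$ when that is needed. I expect this bookkeeping of edge regimes, namely $\mu$ near a constant and $\delta$ near a constant, to be the only real nuisance.
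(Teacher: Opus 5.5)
Your proposal is correct and follows the same route as the paper: Part 1 is the standard multiplicative Chernoff bound, and Part 2 is exactly the footnoted estimate $\Pr[X\ge k]=\Omega(\mu^k/k!)$ for $k=o(\sqrt n)$, which you derive from the point mass $\Pr[X=k]$ and then invert by choosing $k$. Your extra bookkeeping (reducing to $\mu\le 1/e$ and flooring $k$) makes explicit the edge regimes that the paper leaves implicit, namely those where the denominator $\log(1/\mu)+\log\log(1/\delta)$ is tiny or nonpositive.
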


\begin{proof}[Proof of \Cref{cl:simple-lb-3}]
As each edge of $G$ appears $2/\avgdegree(G)$ times  in $G'$ on average, for any subset $S\subset V$, the expected density
$\E[ \rho(G'[S])]$ in $G'$ is simply $(2/\avgdegree(G)) \, \rho(G[S])$. Thus by \eqref{eq:opt}, 
\[  \opt(G)  = \E_{G'} \Big[ \max_{S \subseteq V} \rho(G'[S])\Big] \geq \max_{S \subseteq V} \E_{G'} [\rho(G'[S])] =   (2/\avgdegree) \, \rho(G[S]).\]
The second  bound follows by noting that some edge is likely to appear 
$k  = \Omega(\log n/ (\log(\avgdegree \cdot \log n)))$ times in $G'$, and hence one of its endpoints must have load $\geq k/2$. 
Indeed, fix some edge $e$, and let $X_e$ denote the number of its occurences in $G'$. As $X_e\sim \text{Bin}(n,1/|E|)$, Fact \ref{fact:binom} with $\delta=10/|E|$, gives that $\Pr[X_e\geq k] \geq \delta$.
So, in expectation, at least $10$ edges of $G$ appear $\geq k$ times in $G'$, and by a standard second moment argument this holds for some edge with  probability $\Omega(1)$.
\end{proof}

\end{document}